\newtheorem{cor}{Corollary}
\DeclareMathOperator*{\argmin}{arg\,min}
\newtheorem{prop}{Proposition}
\newtheorem{lem}{Lemma}
\begin{document}
\title{Two-Stage Trigonometric Regression for Modeling Circadian Rhythms}
\author[1]{Michael T. Gorczyca\footnote{Corresponding author: mtg62@cornell.edu}}
\author[2]{Jenna D. Li}
\author[3]{Charissa M. Newkirk}
\author[4]{Arjun S. Srivatsa}
\author[5]{Hugo F. M. Milan}

\affil[1]{MTG Research Consulting}
\affil[2]{Allegheny-Singer Research Institute}
\affil[3]{The Ohio State University}
\affil[4]{Stanford University}
\affil[5]{Federal University of Rondonia}

\date{\today}
\maketitle

\begin{abstract}
    \noindent Gene expression levels, hormone secretion, and internal body temperature each oscillate over an approximately 24-hour cycle, or display circadian rhythms. Many circadian biology studies have investigated how these rhythms vary across cohorts, uncovering associations between atypical rhythms and diseases such as cancer, metabolic syndrome, and sleep disorders. A challenge in analyzing circadian biology data is that the oscillation peak and trough times for a measured phenomenon differ across individuals. If these individual-level differences are not accounted for in trigonometric regression, which is prevalent in circadian biology studies, then estimates of the population-level amplitude parameters can suffer from attenuation bias, or a decrease in magnitude towards zero. This attenuation bias could lead to inaccurate study conclusions. To address attenuation bias, we propose a refined two-stage (RTS) method for trigonometric regression given longitudinal data obtained from each individual participating in a study. In the first stage, the parameters of individual-level models are estimated. In the second stage, transformations of these individual-level parameter estimates are aggregated to produce population-level parameter estimates for inference. Simulation studies show that our RTS method mitigates bias in parameter estimation, obtains greater statistical power, and maintains appropriate type I error control when compared to the standard two-stage (STS) method, which ignores individual-level differences in peak and trough times. The only exception for parameter estimation and statistical power occurs when the oscillation amplitudes are weak relative to random variability in the data and the sample size is small. Illustrations with cortisol level data and heart rate data show that our RTS method obtains larger population-level amplitude parameter estimates and smaller $p$-values for multiple hypothesis tests when compared to the STS method.
\end{abstract}
{\bf Keywords: } circular data; cosinor model; longitudinal data; measurement error; mixed-effects models

\section{Introduction}\label{sec1}

A recurring theme in studying biological phenomena such as gene expression levels, hormone secretion, and internal body temperature is that they display non-random oscillations over an approximately 24-hour cycle \citep{Andreani2015, Beersma2007, Marcheva2013}. These oscillations, known as ``circadian rhythms,'' are influenced in part by daily changes in environmental exposures, such as variations in light exposure over the 24-hour day-night cycle \citep{Mistlberger2007}. A notable feature of these rhythms is their association with an individual's health. For example, some studies have found that atypical rhythms are associated with diseases such as cancer \citep{Altman2016, Truong2016}, metabolic syndrome \citep{Morris2016}, and sleep disorders \citep{Walker2020}. In addition, the efficacy of treatments such as chemotherapy \citep{Dallmann2014, Haus2009}, heart surgery \citep{Montaigne2018, Young2023}, and vaccines \citep{Long2016} appear to vary based on the time of day that treatment is administered. These findings have contributed to an increase in circadian biology research over the past two decades \citep{Hughes2017, Zong2023}, with many studies conducted to inform the development of improved treatment strategies for diseases \citep{Chan2017, Chauhan2017, Halberg2013, Haus2009, Li2013}.

When conducting a circadian biology study, an investigator will often use trigonometric regression to model how a biological phenomenon oscillates over time and make inferences about the modeled oscillations \citep{Bingham1982, Cornelissen2014, Tong1976}. However, a challenge with performing trigonometric regression on circadian biology study data is that the times at which a biological phenomenon peaks and troughs can be distinct across individuals. Specifically, the peak and trough times could be distinct across individuals due to differences in genetic makeup \citep{Hsu2015}, differences in age \citep{Kennaway2023}, and differences in environmental exposures \citep{Khodasevich2021, Phillips2019}. If these individual-level differences in oscillations are ignored during model fitting, then the resulting parameter estimates would be attenuated, or biased toward zero, which could lead to incorrect study conclusions \citep{Gorczycaa2024, Gorczycab2024, Sollberger1962, Weaver1995}.

This article is motivated by the attenuation bias caused by individual-level differences in how a biological phenomenon oscillates, and considers a scenario where data are collected longitudinally from each individual participating in a study. For this scenario, we initially assess the suitability of the standard two-stage (STS) method for trigonometric regression, which involves first estimating individual-level parameters and then averaging them to obtain population-level estimates, for addressing individual-level differences in oscillations (\citealp[Chapter 5]{Davidian1995}; \citealp{Sheiner1980}). This assessment quantifies how population-level parameter estimates produced by the STS method for trigonometric regression suffer from attenuation bias when the peak and trough times are different for each individual. To address this attenuation bias, we propose a refined two-stage (RTS) method for trigonometric regression. This RTS method instead averages transformations of these individual-level parameter estimates.

The remainder of this article is organized as follows. In Section \ref{sec:2}, an overview of the STS method, the limitations of the STS method for trigonometric regression, and our RTS method for trigonometric regression are presented. In Section \ref{sec:3}, Monte Carlo simulation studies are performed to assess the utility of our RTS method relative to the STS method. In Section \ref{sec:4}, our RTS method is applied on cortisol level data and heart rate data. Finally, in Section \ref{sec:5}, our RTS method and directions for future work are discussed.

\section{Methodology} \label{sec:2}

\subsection{Background and Notation for the Standard Two-Stage Method} \label{sec:2.1}

\subsubsection{Parameter Estimation} \label{sec:2.1.1}

Suppose a longitudinal circadian biology experiment is conducted on two cohorts: a case cohort consisting of $M^{(1)}$ individuals (cohort $c = 1$) and a control cohort consisting of $M^{(0)}$ individuals (cohort $c = 0$). For each $i$-th individual from the $c$-th cohort, the investigator obtains $n^{(c)}_i$ measurements of a biological phenomenon. We denote the time of the $j$-th measurement for the $i$-th individual from the $c$-th cohort as $X^{(c)}_{i,j}$. The corresponding measurement recorded at $X^{(c)}_{i,j}$ is denoted as $Y^{(c)}_{i,j}$. 

A common assumption in circadian biology studies is that each measurement $Y_{i,j}^{(c)}$ is correctly represented by an amplitude-phase trigonometric regression model of order $K$ \citep{Bingham1982, Cornelissen2014, Tong1976}. In this article, we adopt this assumption while allowing the oscillations to vary across individuals. Specifically, we assume that $Y_{i,j}^{(c)}$ is correctly represented by an individual-level amplitude-phase trigonometric regression model of order $K$, or
\begin{align}
    Y^{(c)}_{i,j} &= f(X^{(c)}_{i,j}, \theta^{(c)}_i) + \epsilon^{(c)}_{i,j} \nonumber \\
    &= \left\{\theta^{(c)}_{i,0} + \sum_{k=1}^K\theta^{(c)}_{i,2k-1}\cos\left(\frac{k \pi X^{(c)}_{i,j}}{12}+\theta^{(c)}_{i,2k}\right)\right\} + \epsilon^{(c)}_{i,j}. \label{eq:trig_mod}
\end{align}
In this model, the parameter vector $\theta_i^{(c)}$ characterizes how a biological phenomenon oscillates for the $i$-th individual in the $c$-th cohort. To be precise, the parameter $\theta^{(c)}_{i,0}$ in (\ref{eq:trig_mod}) represents the individual-level midline of the modeled oscillation, and each term $\theta^{(c)}_{i,2k-1} \cos\{(k\pi X^{(c)}_{i,j}/12) + \theta^{(c)}_{i,2k} \}$ for $k \in \{1,\ldots,K\}$ corresponds to an individual-level oscillation harmonic. For the $k$-th harmonic, $\theta^{(c)}_{i,2k-1}$ denotes the individual-level amplitude parameter, or the deviation from the midline to the $k$-th harmonic's peak; and $\theta^{(c)}_{i,2k}$ denotes the individual-level phase-shift parameter, which determines the times at which the $k$-th harmonic peaks in a 24-hour oscillation cycle \citep{Cornelissen2014, Gorczyca2025}. The term $\epsilon^{(c)}_{i,j}$ in (\ref{eq:trig_mod}) represents independent random noise associated with the $j$-th measurement for the $i$-th individual from the $c$-th cohort. The expectation of this random noise $\mathbb{E}(\epsilon^{(c)}_{i,j}) = 0$ and the variance $\mathrm{Var}(\epsilon^{(c)}_{i,j}) = (\sigma_i^{(c)})^2$. 

While the amplitude-phase model in (\ref{eq:trig_mod}) is biologically interpretable, the model is nonlinear with respect to its parameters, which would complicate parameter estimation without prior knowledge about the true parameter estimands \citep[Theorem 6.7]{Boos2013}. To simplify parameter estimation, many investigators instead estimate the parameters of a linear trigonometric regression model \citep{Archer2014, delolmo2022, Fontana2012, Hou2021, MllerLevet2013}. The individual-level linear model corresponding to (\ref{eq:trig_mod}) is defined as
\begin{align}
    Y_{i,j}^{(c)} &= h(X^{(c)}_{i,j}, \gamma^{(c)}_i) \nonumber \\
    &= \gamma^{(c)}_{i,0} + \sum_{k=1}^K\left\{\gamma^{(c)}_{i,2k-1}\sin\left(\frac{k \pi X^{(c)}_{i,j}}{12}\right) + \gamma^{(c)}_{i,2k}\cos\left(\frac{k \pi X^{(c)}_{i,j}}{12}\right)\right\}, \label{eq:lin_mod}
\end{align}
where $\gamma^{(c)}_{i}$ is an alternative parameter vector for the $i$-th individual in the $c$-th cohort. For the two individual-level models presented in (\ref{eq:trig_mod}) and (\ref{eq:lin_mod}), the intercept terms $\theta^{(c)}_{i,0}$ and $\gamma^{(c)}_{i,0}$ are equal to each other, and the following identities can be used to convert the remaining parameters of one model to the parameters of the other model:
\begin{equation}
\begin{aligned}
   \gamma^{(c)}_{i, 2k-1} &= -\theta^{(c)}_{i, 2k-1}\sin(\theta^{(c)}_{i, 2k}), \quad \quad \quad \ \ \gamma^{(c)}_{i, 2k} = \theta^{(c)}_{i, 2k-1}\cos(\theta^{(c)}_{i, 2k}), \\
   \theta^{(c)}_{i, 2k-1} &= \sqrt{(\gamma^{(c)}_{i, 2k-1})^2+(\gamma^{(c)}_{i, 2k})^2}, \quad \ \ \ \theta^{(c)}_{i, 2k} = \mathrm{atan2}(-\gamma^{(c)}_{i, 2k-1}, \gamma^{(c)}_{i, 2k}). \label{eq:alt_to_orig}
\end{aligned}
\end{equation}
It is noted that $\mathrm{atan2}(\kappa, \xi)$ in (\ref{eq:alt_to_orig}) denotes the two-argument arctangent function with arguments $\kappa$ and $\xi$ \citep{Bingham1982, Cornelissen2014, Tong1976}.

This article adopts the standard two-stage (STS) method to obtain population-level parameter estimates from these individual-level parameters and make population-level inferences (\citealp[Chapter 5]{Davidian1995}; \citealp{Sheiner1980}; \citealp{Steimer1984}). To obtain population-level parameter estimates, the STS method requires the assumption that the individual-level parameter vector $\gamma^{(c)}_i$ can be modeled as
\begin{align}
\gamma^{(c)}_i = \alpha^{(c)} + a^{(c)}_i. \label{eq:mixed_lin}
\end{align}
Here, $\alpha^{(c)}$ is interpreted as a non-random population-level parameter vector, and $a^{(c)}_i$ is interpreted as an individual-level random vector with $\mathbb{E}(a^{(c)}_i) = 0$ and a constant covariance matrix $\mathrm{Var}(a^{(c)}_i)=D^{(c)}$. It is noted that these assumptions about $\alpha^{(c)}$ and $a^{(c)}_i$ are common in mixed-effects modeling, where $\alpha^{(c)}$ is referred to as the fixed effects and $a_i^{(c)}$ is referred to as the random effects \citep{Davidian1995, Hedeker2006, McCulloch2000}. The STS method involves first estimating individual-level parameters by minimizing squared loss, or computing
\begin{align*}
\hat{\gamma}_i^{(c)} = \argmin_{\delta \in \mathbb{R}^{(2K+1)\times 1}} \ \ \sum_{j=1}^{n_i^{(c)}} \left\{ Y_{i,j}^{(c)} - h(X^{(c)}_{i,j}, \delta) \right\}^2,
\end{align*}
with $\hat{\gamma}_i^{(c)}$ denoting the estimate of $\gamma_i^{(c)}$. The population-level parameter vector $\alpha^{(c)}$ is then estimated by averaging over all individual-level estimates, or computing
\begin{align}
\hat{\alpha}^{(c)} = \frac{1}{M^{(c)}}\sum_{i=1}^{M^{(c)}} \hat{\gamma}_i^{(c)}. \label{eq:sts_pop}
\end{align}
The empirical covariance matrix for $\hat{\alpha}^{(c)}$ is defined as
\begin{align*}
\mathrm{Var}(\hat{\alpha}^{(c)}) = \frac{1}{M^{(c)}}\left(\hat{D}^{(c)} + \frac{1}{M^{(c)}}\sum_{i=1}^{M^{(c)}}\hat{\Sigma}^{(c)}_i\right), 
\end{align*}
which consists of two components. The first component is the matrix
\begin{align*}
\hat{D}^{(c)} &= \frac{1}{M^{(c)}-1}\sum_{i=1}^{M^{(c)}}(\hat{\gamma}_i^{(c)}-\hat{\alpha}^{(c)})(\hat{\gamma}_i^{(c)}-\hat{\alpha}^{(c)})^T,
\end{align*} 
which is a between-individual covariance estimate, or an estimate of $\mathrm{Var}(a_i^{(c)})=D^{(c)}$. The second component is the average of within-individual covariance estimates $\hat{\Sigma}^{(c)}_i$ \citep[Chapter 5]{Davidian1995}. Specifically, $\hat{\Sigma}^{(c)}_i$ is defined as
\begin{align*}
\hat{\Sigma}^{(c)}_i = \frac{(\hat{\sigma}^{(c)}_i)^2}{n^{(c)}_i}\left\{(W^{(c)}_i)^T (W^{(c)}_i)\right\}^{-1},
\end{align*}
where
\begin{align*}
W^{(c)}_i = \begin{bmatrix}
1 & \sin\left(\frac{\pi X^{(c)}_{i,1}}{12}\right) & \cos\left(\frac{\pi X^{(c)}_{i,1}}{12}\right) & \cdots & \sin\left(\frac{K\pi X^{(c)}_{i,1}}{12}\right) & \cos\left(\frac{K\pi X^{(c)}_{i,1}}{12}\right) \\
\vdots & \vdots & \vdots & \ddots & \vdots & \vdots \\
1 & \sin\left(\frac{\pi X^{(c)}_{i,n^{(c)}_i}}{12}\right) & \cos\left(\frac{\pi X^{(c)}_{i,n^{(c)}_i}}{12}\right) & \cdots & \sin\left(\frac{K\pi X^{(c)}_{i,n^{(c)}_i}}{12}\right) & \cos\left(\frac{K\pi X^{(c)}_{i,n^{(c)}_i}}{12}\right)
\end{bmatrix}
\end{align*}
represents an individual-level design matrix for regression, and
\begin{align*}
(\hat{\sigma}^{(c)}_i)^2 = \frac{1}{n_i^{(c)} - 2K - 1} \sum_{j=1}^{n_i^{(c)}} \left\{ Y_{i,j}^{(c)} - h(X^{(c)}_{i,j}, \hat{\gamma}_i^{(c)}) \right\}^2
\end{align*}
represents an individual-level estimate of the variance for the random noise \citep[Section 7.5.1]{Boos2013}.

\subsubsection{Inference for Population-Level Parameter Estimates} \label{sec:2.1.2}

Once $\hat{\alpha}^{(c)}$ and $\mathrm{Var}(\hat{\alpha}^{(c)})$ are obtained, an investigator can perform hypothesis tests to assess how a biological phenomenon oscillates for a cohort. Specifically, an investigator would assess null hypotheses of the form $H_0: g(\alpha^{(c)}) = 0$, where $g(\alpha^{(c)})$ is a function that maps the $(2K+1) \times 1$ vector $\alpha^{(c)}$ to a $q \times 1$ vector. A Wald-type test statistic for assessing this null hypothesis is defined as
\begin{align}
\tau = g(\hat{\alpha}^{(c)})^T\mathrm{Var}\left\{g(\hat{\alpha}^{(c)})\right\}^{-1}g(\hat{\alpha}^{(c)}),   \label{eq:wald_form}
\end{align}
where 
\begin{align}
\mathrm{Var}\{g(\hat{\alpha}^{(c)})\} &= \frac{1}{M^{(c)}}\left\{G(\hat{\alpha}^{(c)})^T\hat{D}^{(c)}G(\hat{\alpha}^{(c)}) + \frac{1}{M^{(c)}}\sum_{i=1}^{M^{(c)}}G(\hat{\gamma}^{(c)}_i)^T\hat{\Sigma}^{(c)}_iG(\hat{\gamma}^{(c)}_i)\right\}. \label{eq:delt_var}
\end{align}
To clarify, (\ref{eq:delt_var}) is an estimate of the variance for $g(\hat{\alpha}^{(c)})$ that is obtained with the Delta method, where $G(\alpha)$ is a $q \times (2K+1)$ matrix that represents the derivative of $g(\alpha)$ with respect to $\alpha$ \citep[Theorem 5.19]{Boos2013}. If the rank of $G(\alpha)$ is equal to $q$, then $\tau$ follows a central chi-squared distribution with $q$ degrees of freedom if the null hypothesis $g(\alpha^{(c)}) = 0$ is true. The $p$-value for this test is defined as $1-F_q(\tau)$, where $F_q(Z)$ is the cumulative distribution function of the central chi-squared distribution with $q$ degrees of freedom and argument $Z$. The null hypothesis is rejected if this $p$-value is less than the pre-determined significance level $\rho$ (\citealp[Section 3.2]{Boos2013}; \citealp[Section 6.2.2]{Davidian1995}). An investigator could also assess null hypotheses of the form $H_0: g(\alpha^{(1)}) - g(\alpha^{(0)}) = 0$ to test for differences in oscillations across cohorts. When the cohorts are independent, the corresponding Wald-type test statistic is instead defined as
\begin{align}
\tau &= \left\{g(\hat{\alpha}^{(1)}) - g(\hat{\alpha}^{(0)})\right\}^T \nonumber \\
& \quad \times \left[\mathrm{Var}\left\{g(\hat{\alpha}^{(1)})\right\} + \mathrm{Var}\left\{g(\hat{\alpha}^{(0)})\right\}\right]^{-1} \label{eq:test2} \\
& \quad \times \left\{g(\hat{\alpha}^{(1)}) - g(\hat{\alpha}^{(0)})\right\}. \nonumber
\end{align}
This test statistic would also follow a central chi-squared distribution with $q$ degrees of freedom when the null hypothesis is true.

The central chi-squared distributions used to compute $p$-values from the test statistics in (\ref{eq:wald_form}) and (\ref{eq:test2}) are the asymptotic null distributions that arise from large-sample approximations. In practice, however, the number of individuals in the $c$-th cohort ($M^{(c)}$) as well as the number of samples obtained from each $i$-th individual in the $c$-th cohort ($n^{(c)}_i$) can be small, which could make this large-sample approximation imprecise. To account for this potential lack of precision, we will consider the ``nonparametric random effects and individual residual bootstrap'' procedure for hypothesis testing, which instead estimates the distribution to compare a test statistic against and has empirically performed well for linear and nonlinear mixed-effects models \citep{Thaia2013, Thaib2013}. 

To provide an example of this bootstrap procedure when assessing the null hypothesis $H_0: g(\alpha^{(c)})=0$, we first note that $\theta_i^{(c)}$ from (\ref{eq:trig_mod}) can be modeled as
\begin{align}
\theta^{(c)}_{i} = \beta^{(c)} + b_i^{(c)}. \label{eq:mixed_ap}
\end{align}
Here, $\beta^{(c)}$ represents a non-random population-level vector and $b_i^{(c)}$ is a random individual-level vector with $\mathbb{E}(b_i^{(c)})=0$ and a constant covariance matrix $\mathrm{Var}(b_i^{(c)})$. In subsequent analyses, we will assess the null hypothesis $H_0: \beta_{2k-1}^{(c)}=0$ for all $k \in \{1,\ldots,K\}$, which is referred to as the ``zero amplitudes test'' and assesses whether a biological phenomenon oscillates \citep{Bingham1982}. The bootstrap procedure for the zero amplitudes test would first produce $R$ different bootstrap replicate test statistics, where the $r$-th bootstrap replicate test statistic $\tau^{(r)}$ would be obtained as follows:
\begin{enumerate}
 \item Draw a sample of vectors $\{\hat{\gamma}^{(c,r)}_{1},\ldots, \hat{\gamma}^{(c,r)}_{M^{(c)}}\}$ from the set $\{\hat{\gamma}^{(c)}_1,\ldots, \hat{\gamma}^{(c)}_{M^{(c)}}\}$ with replacement $M^{(c)}$ times. Here, the superscript $(c,r)$ denotes quantities obtained from the $c$-th cohort for the $r$-th bootstrap replicate.
 \item For each $i$-th individual, draw a sample of residual estimates $\{\hat{\epsilon}^{(c,r)}_{i,1},\ldots, \hat{\epsilon}^{(c,r)}_{i,n_i^{(c)}}\}$ from the set $\{\hat{\epsilon}^{(c)}_{i,1},\ldots, \hat{\epsilon}^{(c)}_{i,n_i^{(c)}}\}$ with replacement $n_i^{(c)}$ times. The residual estimate $\hat{\epsilon}^{(c)}_{i,j}$ is defined as $$\hat{\epsilon}^{(c)}_{i,j} = Y^{(c)}_{i,j} - h(X_{i,j}^{(c)}, \hat{\gamma}_i^{(c)}).$$
 \item Generate bootstrap replicate measurements $Y_{i,j}^{(c,r)} = h(X_{i,j}^{(c)}, \tilde{\gamma}_{i}^{(c,r)})+\hat{\epsilon}^{(c,r)}_{i,j}$. Here, each $\tilde{\gamma}_{i,2j-1}^{(c,r)}$ and $\tilde{\gamma}_{i,2j}^{(c,r)}$ for $j \in \{1,\ldots,K\}$ would be defined as 
 \begin{align*}
 \tilde{\gamma}_{i,2j-1}^{(c,r)} &= -\left\{\sqrt{(\hat{\gamma}_{i,2j-1}^{(c,r)})^2+(\hat{\gamma}_{i,2j}^{(c,r)})^2}-\sqrt{(\hat{\alpha}_{i,2j-1}^{(c)})^2+(\hat{\alpha}_{i,2j}^{(c)})^2}\right\}\sin\left\{\mathrm{atan2}(-\hat{\alpha}_{i,2j-1}^{(c)}, \hat{\alpha}_{i,2j}^{(c)})\right\}, \\
  \tilde{\gamma}_{i,2j}^{(c,r)} &= \left\{\sqrt{(\hat{\gamma}_{i,2j-1}^{(c,r)})^2+(\hat{\gamma}_{i,2j}^{(c,r)})^2}-\sqrt{(\hat{\alpha}_{i,2j-1}^{(c)})^2+(\hat{\alpha}_{i,2j}^{(c)})^2}\right\}\cos\left\{\mathrm{atan2}(-\hat{\alpha}_{i,2j-1}^{(c)}, \hat{\alpha}_{i,2j}^{(c)})\right\}.
 \end{align*}
 It is noted that the elements $\tilde{\gamma}_{i,2j-1}^{(c,r)}$ and $\tilde{\gamma}_{i,2j}^{(c,r)}$ are adjusted using the identities in (\ref{eq:alt_to_orig}) to generate data where the null hypothesis is true.
 \item Apply the STS method to fit a trigonometric regression model to the time points ($X_{i,j}^{(c)}$) and the generated bootstrap replicate measurements ($Y_{i,j}^{(c,r)}$).
 \item Given the parameter estimates obtained in Step 4, compute the corresponding Wald-type test statistic $\tau^{(r)}$ using (\ref{eq:wald_form}).
\end{enumerate}
After generating $\{\tau^{(1)}, \ldots, \tau^{(R)}\}$, a bootstrapped $p$-value would be computed as
\begin{align}
p = \frac{1}{R} \sum_{r=1}^{R} \mathbf{1}(\tau \leq \tau^{(r)}), \label{eq:emp_pval}
\end{align} 
where $\tau$ is the test statistic computed with the original dataset in (\ref{eq:wald_form}), and $\mathbf{1}(\tau \leq \tau^{(r)})$ denotes an indicator function that equals one when $\tau \leq \tau^{(r)}$ and zero otherwise. 

To compare oscillations between cohorts, or assess the null hypothesis $H_0: g(\alpha^{(1)})-g(\alpha^{(0)})=0$, the $r$-th bootstrap replicate test statistic $\tau^{(r)}$ would instead be obtained as follows:
\begin{enumerate}
 \item For each $c$-th cohort, draw a sample of vectors $\{\hat{\gamma}_{1}^{(c,r)},\ldots,\hat{\gamma}_{M^{(c)}}^{(c,r)}\}$ from the set \newline $\{\hat{\gamma}_{1}^{(0)},\ldots,\hat{\gamma}_{M^{(0)}}^{(0)}, \hat{\gamma}_{1}^{(1)},\ldots,\hat{\gamma}_{M^{(1)}}^{(1)}\}$ with replacement.
 \item For each $i$-th individual from the $c$-th cohort, draw a sample of residual estimates $\{\hat{\epsilon}^{(c,r)}_{i,1},\ldots, \hat{\epsilon}^{(c,r)}_{i,n_i^{(c)}}\}$ from the set $\{\hat{\epsilon}^{(c)}_{i,1},\ldots, \hat{\epsilon}^{(c)}_{i,n_i^{(c)}}\}$ with replacement $n_i^{(c)}$ times.
 \item Generate bootstrap replicate measurements $Y_{i,j}^{(c,r)} = h(X_{i,j}^{(c)}, \hat{\gamma}_{i}^{(c,r)})+\hat{\epsilon}^{(c,r)}_{i,j}$. 
\item Apply the STS method to fit a trigonometric regression model to the bootstrap sample for each cohort.
 \item Given the parameter estimates obtained in Step 4, compute the corresponding Wald-type test statistic $\tau^{(r)}$ using (\ref{eq:test2}).
\end{enumerate}
Calculation of a bootstrapped $p$-value for this hypothesis test is the same as (\ref{eq:emp_pval}). It is noted that the motivation for sampling individuals from both cohorts in Step 1 is to represent a scenario where the null hypothesis is true \citep[Section 11.6]{Boos2013}. 

\subsection{Limitations of the STS Method for Trigonometric Regression} \label{sec:2.2}

\subsubsection{Motivating Results for Parameter Estimation} \label{sec:2.2.1}

A challenge with modeling biological phenomena over time is that individual-level differences in oscillations can bias statistical analyses and study conclusions when these differences are not taken into account (\citealt{Crainiceanu2010}; \citealt{Marron2015}; \citealt[Chapters 4 and 8]{Srivastava2016}). In the context of trigonometric regression, prior work has also qualitatively \citep{Sollberger1962, Weaver1995} and quantitatively \citep{Gorczycaa2024, Gorczycab2024} described how mis-measurement of $X_{i,j}^{(c)}$ can attenuate the parameter estimates for the linear model from (\ref{eq:lin_mod}). This article is motivated by these results, and first assesses whether or not population-level parameter estimates produced by the STS method of Section \ref{sec:2.1} are biased when the individual-level phase-shift parameters are distinct across individuals. Specifically, the following proposition leverages the expression in (\ref{eq:mixed_ap}) to derive the expected population-level parameter estimates when each individual has distinct phase-shift parameters.
\begin{prop} \label{prop:1}
Suppose the following assumptions are valid:
\begin{enumerate}
    \item $n_i^{(c)}=n$ for all $i$, with $n > 2K$, and $X_{i,j}^{(c)} = 24(j-1)/n$ for all $i$ and $j$.
    \item Each $b^{(c)}_{i,j}$ is independent of $b^{(c)}_{i,k}$ for all $j\neq k$.
\end{enumerate}
If the STS method presented in Section \ref{sec:2.1} is used to obtain population-level parameter estimates of a $K$-th order model in (\ref{eq:lin_mod}), then the expectation of these parameter estimates can be expressed as
\begin{align*}
\mathbb{E}(\hat{\alpha}^{(c)}_{0}) &= \alpha^{(c)}_{0}, \\
\mathbb{E}(\hat{\alpha}^{(c)}_{2k-1}) &= \alpha^{(c)}_{2k-1}\mathbb{E}\left\{\cos(b^{(c)}_{i,2k})\right\} -\alpha^{(c)}_{2k}\mathbb{E}\left\{\sin(b^{(c)}_{i,2k})\right\}, \\
\mathbb{E}(\hat{\alpha}^{(c)}_{2k}) &= \alpha^{(c)}_{2k}\mathbb{E}\left\{\cos(b^{(c)}_{i,2k})\right\} +\alpha^{(c)}_{2k-1}\mathbb{E}\left\{\sin(b^{(c)}_{i,2k})\right\}.
\end{align*}
Here, $\alpha_{2k-1}^{(c)} = -\beta^{(c)}_{2k-1}\sin(\beta^{(c)}_{2k})$ and $\alpha_{2k}^{(c)} = \beta^{(c)}_{2k-1}\cos(\beta^{(c)}_{2k})$, where $\beta^{(c)}_{2k-1}$ denotes the true population-level amplitude parameter for the $k$-th harmonic of the amplitude-phase model in (\ref{eq:mixed_ap}), and $\beta^{(c)}_{2k}$ the corresponding true phase-shift parameter.
\end{prop}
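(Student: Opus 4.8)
The plan is to reduce the population-level expectation to a single-individual calculation and then to separate the contribution of ordinary least squares (OLS) from the contribution of the random effects. Since the STS estimator is the sample average $\hat{\alpha}^{(c)} = (M^{(c)})^{-1}\sum_i \hat{\gamma}_i^{(c)}$ from (\ref{eq:sts_pop}), and the individuals are identically distributed through the representation $\theta_i^{(c)} = \beta^{(c)} + b_i^{(c)}$ in (\ref{eq:mixed_ap}) together with the common design of Assumption 1, linearity of expectation gives $\mathbb{E}(\hat{\alpha}^{(c)}) = \mathbb{E}(\hat{\gamma}_i^{(c)})$ for a representative individual. I would then evaluate $\mathbb{E}(\hat{\gamma}_i^{(c)})$ by conditioning on the random effect $b_i^{(c)}$ and averaging afterward.

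First I would handle the inner (conditional) expectation. Because the OLS estimator is linear in the responses, $\hat{\gamma}_i^{(c)} = \{(W_i^{(c)})^T W_i^{(c)}\}^{-1}(W_i^{(c)})^T Y_i^{(c)}$, so conditioning on $b_i^{(c)}$ and using $\mathbb{E}(\epsilon_{i,j}^{(c)})=0$ yields $\mathbb{E}(\hat{\gamma}_i^{(c)}\mid b_i^{(c)}) = \{(W_i^{(c)})^T W_i^{(c)}\}^{-1}(W_i^{(c)})^T f_i$, where $f_i$ is the vector of true amplitude-phase means from (\ref{eq:trig_mod}). The key observation is that this mean lies exactly in the column span of $W_i^{(c)}$: applying the identity $\cos(u+v)=\cos u\cos v-\sin u\sin v$ to each harmonic rewrites $f(X_{i,j}^{(c)},\theta_i^{(c)})$ as $h(X_{i,j}^{(c)},\gamma_i^{*})$ with $\gamma^{*}_{i,2k-1}=-\theta^{(c)}_{i,2k-1}\sin(\theta^{(c)}_{i,2k})$ and $\gamma^{*}_{i,2k}=\theta^{(c)}_{i,2k-1}\cos(\theta^{(c)}_{i,2k})$, precisely the identities in (\ref{eq:alt_to_orig}). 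Hence the OLS projection returns these coefficients without error, so $\mathbb{E}(\hat{\gamma}_i^{(c)}\mid b_i^{(c)}) = \gamma_i^{*}$. Assumption 1 (equally spaced sampling over a full $24$-hour cycle with $n>2K$) is exactly what makes $(W_i^{(c)})^T W_i^{(c)}$ invertible here, since the discrete trigonometric columns are then orthogonal and nonzero. This step establishes that there is no misspecification bias within an individual, so all of the bias must enter through the nonlinear map $\theta_i^{(c)}\mapsto\gamma_i^{*}$.

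Next I would take the outer expectation over $b_i^{(c)}$. Writing $\theta^{(c)}_{i,2k-1}=\beta^{(c)}_{2k-1}+b^{(c)}_{i,2k-1}$ and $\theta^{(c)}_{i,2k}=\beta^{(c)}_{2k}+b^{(c)}_{i,2k}$, Assumption 2 makes the amplitude perturbation $b^{(c)}_{i,2k-1}$ independent of the phase perturbation $b^{(c)}_{i,2k}$, so the expectation of the product factors; combined with $\mathbb{E}(b^{(c)}_{i,2k-1})=0$ this gives, for instance, $\mathbb{E}(\hat{\gamma}^{(c)}_{i,2k-1}) = -\beta^{(c)}_{2k-1}\,\mathbb{E}\{\sin(\beta^{(c)}_{2k}+b^{(c)}_{i,2k})\}$. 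Expanding $\sin(\beta^{(c)}_{2k}+b^{(c)}_{i,2k})$ and $\cos(\beta^{(c)}_{2k}+b^{(c)}_{i,2k})$ with the angle-addition formulas, and then regrouping via $\alpha^{(c)}_{2k-1}=-\beta^{(c)}_{2k-1}\sin(\beta^{(c)}_{2k})$ and $\alpha^{(c)}_{2k}=\beta^{(c)}_{2k-1}\cos(\beta^{(c)}_{2k})$, produces the two stated expressions for $\mathbb{E}(\hat{\alpha}^{(c)}_{2k-1})$ and $\mathbb{E}(\hat{\alpha}^{(c)}_{2k})$. The intercept case is immediate, since $\gamma^{(c)}_{i,0}=\theta^{(c)}_{i,0}=\beta^{(c)}_0+b^{(c)}_{i,0}$ has mean $\alpha^{(c)}_0=\beta^{(c)}_0$.

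I expect the conditional-unbiasedness step to be the crux, both because it pins down where bias does and does not arise and because it requires verifying that the fitted linear basis exactly spans the true amplitude-phase signal under the design in Assumption 1. Once that is in place, the remainder is the routine but essential bookkeeping of factoring the expectation under Assumption 2 and applying trigonometric identities; the conceptual payoff is that the attenuation is entirely a Jensen-type effect of averaging the nonlinear reparameterization, surfacing through the factors $\mathbb{E}\{\cos(b^{(c)}_{i,2k})\}\le 1$ and $\mathbb{E}\{\sin(b^{(c)}_{i,2k})\}$.
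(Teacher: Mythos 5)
Your proposal is correct and follows essentially the same route as the paper's derivation in Appendix~A1: reduce $\mathbb{E}(\hat{\alpha}^{(c)})$ to $\mathbb{E}(\gamma^{(c)}_{i,2k-1})$ and $\mathbb{E}(\gamma^{(c)}_{i,2k})$ via unbiasedness of the individual-level least-squares fit, then expand $\sin(\beta^{(c)}_{2k}+b^{(c)}_{i,2k})$ and $\cos(\beta^{(c)}_{2k}+b^{(c)}_{i,2k})$ with the angle-addition formulas, factor using Assumption~2 and $\mathbb{E}(b^{(c)}_{i,2k-1})=0$, and regroup in terms of $\alpha^{(c)}_{2k-1}$ and $\alpha^{(c)}_{2k}$. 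Your conditioning argument for the first step (the amplitude-phase mean lies exactly in the column span of $W^{(c)}_i$, so OLS is conditionally unbiased for $\gamma^{*}_i$) is in fact a cleaner justification than the paper's brief appeal to the asymptotic normality of $\hat{\gamma}^{(c)}_i$.
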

A derivation for Proposition \ref{prop:1} is provided in Appendix \ref{App:A1}. To clarify the setup for this result, the first assumption concerns the sample collection protocol. Specifically, this sample collection protocol is an example of an equispaced experimental design, which is optimal for trigonometric regression under multiple statistical criteria (\citealp[Pages 94-97]{Federov1972}; \citealp[Pages 241-243]{Pukelsheim2006}). Equispaced experimental designs have also been recommended for the design of circadian biology experiments \citep{Hughes2017, Zong2023}. This experimental design is setup such that the number of distinct measurement times for each individual is greater than $2K$, which ensures consistent parameter estimation \citep[Pages 21-23]{Bloomfield2000}. The second assumption that $b^{(c)}_{i,j}$ is independent of $b^{(c)}_{i,k}$ for all $j\neq k$ is a simplifying assumption that is made to obtain closed-form expressions for the expected population-level parameter estimates. 

Proposition \ref{prop:1} shows that when the individual-level phase-shift parameters are distinct across individuals, the STS method produced biased estimates of the population-level parameters $\alpha^{(c)}$. Specifically, while the intercept term $\hat{\alpha}^{(c)}_0$ is an unbiased estimate of $\alpha^{(c)}_0$, the estimates $\hat{\alpha}^{(c)}_{2k-1}$ and $\hat{\alpha}^{(c)}_{2k}$ for each $k \in \{1,\ldots,K\}$ are biased. This bias depends on the expectation of the sine and cosine transforms of the individual-level phase-shift parameters. For example, when the probability distribution generating $b^{(c)}_{i,2k}$ becomes more uniform over the interval $[-\pi, \pi)$, the quantities $\mathbb{E}\{\cos(b^{(c)}_{i,2k})\}$ and $\mathbb{E}\{\sin(b^{(c)}_{i,2k})\}$ attenuate further toward zero in magnitude. This bias would also affect the numeric value of the amplitude and phase-shift parameters obtained using the identities in (\ref{eq:alt_to_orig}), which we summarize in the following corollary.
\begin{cor} \label{cor:1}
If the assumptions of Proposition \ref{prop:1} are valid, then the $k$-th population-level amplitude parameter obtained from the identities in (\ref{eq:alt_to_orig}) would be expressed as
\begin{align*}
    \sqrt{\mathbb{E}(\hat{\alpha}^{(c)}_{2k-1})^2+\mathbb{E}(\hat{\alpha}^{(c)}_{2k})^2} &= \beta^{(c)}_{2k-1}|\phi_{b^{(c)}_{i,2k}}(1)|,
\end{align*}
and the $k$-th population-level phase-shift parameter would be expressed as
\begin{align*}
    \mathrm{atan2}\left\{-\mathbb{E}(\hat{\alpha}^{(c)}_{2k-1}), \mathbb{E}(\hat{\alpha}^{(c)}_{2k})\right\} &= \mathrm{atan2}\bigg[\sin(\beta^{(c)}_{2k})\mathbb{E}\left\{\cos( b^{(c)}_{i,2k})\right\} + \cos(\beta^{(c)}_{2k})\mathbb{E}\left\{\sin(b^{(c)}_{i,2k})\right\}, \\
    & \quad \quad \quad \quad \ \cos(\beta^{(c)}_{2k}) \mathbb{E}\left\{\cos(b^{(c)}_{i,2k})\right\} - \sin(\beta^{(c)}_{2k}) \mathbb{E}\left\{\sin(b^{(c)}_{i,2k})\right\} \bigg].
\end{align*}
Here, $|\phi_{b^{(c)}_{i,2k}}(t)|$ denotes the magnitude of the characteristic function for $b^{(c)}_{i, 2k}$ evaluated at $t$.
\end{cor}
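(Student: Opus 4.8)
The plan is to treat this as a direct algebraic consequence of Proposition~\ref{prop:1}, feeding the already-derived expectations of the linear parameters into the amplitude and phase-shift identities from (\ref{eq:alt_to_orig}). To keep the bookkeeping light, I would first abbreviate $\mu_c = \mathbb{E}\{\cos(b^{(c)}_{i,2k})\}$ and $\mu_s = \mathbb{E}\{\sin(b^{(c)}_{i,2k})\}$, so that Proposition~\ref{prop:1} reads $\mathbb{E}(\hat{\alpha}^{(c)}_{2k-1}) = \alpha^{(c)}_{2k-1}\mu_c - \alpha^{(c)}_{2k}\mu_s$ and $\mathbb{E}(\hat{\alpha}^{(c)}_{2k}) = \alpha^{(c)}_{2k}\mu_c + \alpha^{(c)}_{2k-1}\mu_s$, and recall that $\alpha^{(c)}_{2k-1} = -\beta^{(c)}_{2k-1}\sin(\beta^{(c)}_{2k})$ and $\alpha^{(c)}_{2k} = \beta^{(c)}_{2k-1}\cos(\beta^{(c)}_{2k})$.

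For the amplitude claim I would square both expectations and add. The cross terms $\pm 2\alpha^{(c)}_{2k-1}\alpha^{(c)}_{2k}\mu_c\mu_s$ cancel, leaving the factored form $\{(\alpha^{(c)}_{2k-1})^2 + (\alpha^{(c)}_{2k})^2\}(\mu_c^2 + \mu_s^2)$. The first factor collapses to $(\beta^{(c)}_{2k-1})^2$ by the Pythagorean identity applied to the two displayed expressions for $\alpha^{(c)}_{2k-1}$ and $\alpha^{(c)}_{2k}$. The second factor is identified as $|\phi_{b^{(c)}_{i,2k}}(1)|^2$ by writing the characteristic function as $\phi_{b^{(c)}_{i,2k}}(1) = \mathbb{E}\{\cos(b^{(c)}_{i,2k})\} + i\,\mathbb{E}\{\sin(b^{(c)}_{i,2k})\} = \mu_c + i\mu_s$, whose squared modulus is $\mu_c^2 + \mu_s^2$. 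Taking the positive square root then yields $\beta^{(c)}_{2k-1}|\phi_{b^{(c)}_{i,2k}}(1)|$.

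For the phase-shift claim I would substitute $\alpha^{(c)}_{2k-1} = -\beta^{(c)}_{2k-1}\sin(\beta^{(c)}_{2k})$ and $\alpha^{(c)}_{2k} = \beta^{(c)}_{2k-1}\cos(\beta^{(c)}_{2k})$ into $-\mathbb{E}(\hat{\alpha}^{(c)}_{2k-1})$ and $\mathbb{E}(\hat{\alpha}^{(c)}_{2k})$. Both arguments then carry a common factor of $\beta^{(c)}_{2k-1}$, and the bracketed remainders match verbatim the two arguments displayed in the corollary, namely $\sin(\beta^{(c)}_{2k})\mu_c + \cos(\beta^{(c)}_{2k})\mu_s$ and $\cos(\beta^{(c)}_{2k})\mu_c - \sin(\beta^{(c)}_{2k})\mu_s$. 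The conclusion follows from the invariance of $\mathrm{atan2}$ under multiplication of both of its arguments by a common positive scalar.

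The algebra is routine once the characteristic-function identification is made, so the only genuinely delicate point is the positivity bookkeeping, which I expect to be the main obstacle to a fully rigorous statement. Both results rely on $\beta^{(c)}_{2k-1} \geq 0$: the square root of $(\beta^{(c)}_{2k-1})^2|\phi_{b^{(c)}_{i,2k}}(1)|^2$ equals $|\beta^{(c)}_{2k-1}|\,|\phi_{b^{(c)}_{i,2k}}(1)|$, and the $\mathrm{atan2}$ factorization is valid only when the extracted factor $\beta^{(c)}_{2k-1}$ is positive. I would note that this nonnegativity is precisely the convention built into (\ref{eq:alt_to_orig}), where $\theta^{(c)}_{i,2k-1}$ and hence its population analogue $\beta^{(c)}_{2k-1}$ is defined as a nonnegative square root, so that $|\beta^{(c)}_{2k-1}| = \beta^{(c)}_{2k-1}$ and no additive shift of $\pi$ in the phase is required.
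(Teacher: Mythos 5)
Your proposal is correct and takes essentially the same route as the paper's Appendix~\ref{App:A2}: substitute the expectations from Proposition~\ref{prop:1} into the identities of (\ref{eq:alt_to_orig}) and identify $\mathbb{E}\{\cos(b^{(c)}_{i,2k})\}^2+\mathbb{E}\{\sin(b^{(c)}_{i,2k})\}^2$ with $|\phi_{b^{(c)}_{i,2k}}(1)|^2$, then cancel the common positive factor $\beta^{(c)}_{2k-1}$ inside $\mathrm{atan2}$. The only (cosmetic) difference is that the paper first recombines the expectations into $\mathbb{E}\{\sin(\beta^{(c)}_{2k}+b^{(c)}_{i,2k})\}$ and $\mathbb{E}\{\cos(\beta^{(c)}_{2k}+b^{(c)}_{i,2k})\}$ and applies Euler's identity with $|\exp(z\beta^{(c)}_{2k})|=1$ rather than expanding the squares directly, and your explicit remark on the nonnegativity of $\beta^{(c)}_{2k-1}$ makes precise a step the paper leaves implicit.
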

A derivation for this result is provided in Appendix \ref{App:A2}. Corollary \ref{cor:1} shows that the population-level amplitude parameter estimates are attenuated, as the magnitude of a characteristic function $|\phi_{b^{(c)}_{i,2k}}(t)|$ is bounded by one for any argument $t$. Additionally, Corollary \ref{cor:1} indicates that the population-level phase-shift estimates could be unbiased depending on the distribution of individual-level phase-shift parameters. For example, if this distribution is symmetric around zero, then $\mathbb{E}\{\sin(tb^{(c)}_{i,2k})\}=0$ given any real-valued constant $t$. As a result,
\begin{align*}
    \mathrm{atan2}\left\{-\mathbb{E}(\hat{\alpha}^{(c)}_{2k-1}), \mathbb{E}(\hat{\alpha}^{(c)}_{2k})\right\} &= \mathrm{atan2}\bigg[\sin(\beta^{(c)}_{2k})\mathbb{E}\left\{\cos( b^{(c)}_{i,2k})\right\}, \cos(\beta^{(c)}_{2k})\mathbb{E}\left\{\cos( b^{(c)}_{i,2k})\right\} \bigg] \\
    &= \beta^{(c)}_{2k}.
\end{align*}
It is noted that this result for unbiased phase-shift estimation has been identified in other research efforts \citep{Gorczycad2024, Gorczycab2024}.

\subsubsection{Numeric Example for Parameter Estimation and Inference} \label{sec:2.2.2}

Proposition \ref{prop:1} and Corollary \ref{cor:1} imply that inferences with population-level parameter estimates obtained from the STS method would also be biased. To illustrate this bias for inference, suppose an investigator performs a study on how a treatment affects biomarker levels. The investigator recruits an even number of individuals ($M^{(1)}$) whose biomarker levels follow a first-order amplitude-phase trigonometric regression model in (\ref{eq:amp}) post-treatment (there is only one case cohort, $c=1$). For the experimental protocol, the investigator measures biomarker levels from each individual once every four hours over a 24-hour period, which results in $n_i^{(1)}=n=6$ longitudinal biomarker level measurements from each individual.

Once the investigator obtains data, the STS method of Section \ref{sec:2.1} is used to obtain population-level parameter estimates. To isolate the effect of bias from distinct individual-level phase-shifts, we assume that the individual-level parameter estimates equal their true values. Specifically, each individual-level intercept parameter and each individual-level amplitude parameter equal their corresponding population-level parameters, with $\theta^{(1)}_{i,0} = \beta^{(1)}_0 = 6$ and $\theta^{(1)}_{i,1} = \beta^{(1)}_1 = 1/2$. However, half of the participating individuals have an individual-level phase-shift parameter of $\theta^{(1)}_{i,2} = -\pi/4$, while the other half have $\theta^{(1)}_{i,2} = \pi/4$. The variance estimate of the individual-level random noise is $(\hat{\sigma}^{(1)}_i)^2=(\sigma^{(1)}_i)^2=1$ for all $M^{(1)}$ individuals.

When the STS method is used to estimate the population-level parameter vector $\hat{\alpha}^{(1)}$, application of (\ref{eq:sts_pop}), or averaging over the individual-level parameter estimates, would yield
\begin{align*}
    \hat{\alpha}^{(1)} &= \frac{1}{M^{(1)}}\sum_{i=1}^{M^{(1)}/2}\left[6 \ \ -\frac{\sin(-\pi/4)}{2} \ \ \frac{\cos(-\pi/4)}{2}\right] + \frac{1}{M^{(1)}}\sum_{i=1}^{M^{(1)}/2}\left[6 \ \ -\frac{\sin(\pi/4)}{2} \ \ \frac{\cos(\pi/4)}{2}\right] \\
    &=\left[6 \ \ 0 \ \ \frac{1}{2\sqrt{2}}\right].
\end{align*}
When the identities from (\ref{eq:alt_to_orig}) are then used to obtain population-level amplitude and phase-shift parameter estimates, the investigator would obtain 
\begin{align*}
    \hat{\beta}^{(1)}_1 &= \sqrt{(0^2) + \left(\frac{1}{2\sqrt{2}}\right)^2} = \frac{1}{2\sqrt{2}}, \quad \quad \hat{\beta}^{(1)}_2 = \mathrm{atan2}\left(0, \frac{1}{2\sqrt{2}} \right) = 0.
\end{align*}
To clarify, the amplitude parameter estimate $\hat{\beta}^{(1)}_1$ is biased, while the phase-shift parameter estimate $\hat{\beta}^{(1)}_2$ is unbiased because the distribution for $\theta^{(1)}_{i,2}$ is symmetric around zero. This attenuation bias aligns with Corollary \ref{cor:1}, as 
\begin{align*}
    \beta_1^{(1)}|\phi_{b^{(1)}_{i,2}}(1)| &= \frac{\left|\mathbb{E}\left\{\exp(zb^{(1)}_{i,2})\right\}\right|}{2} = \frac{\left|\mathbb{E}\left\{\cos(b^{(1)}_{i,2})\right\}\right|}{2}=\frac{|\cos(-\pi/4)+\cos(\pi/4)|}{4}=\frac{1}{2\sqrt{2}}, 
\end{align*}
where $z = \sqrt{-1}$. Figure \ref{fig:example} provides a graphical comparison of population-level biomarker level estimates obtained from the STS method over time to the corresponding true individual-level and population-level biomarker levels.

After the investigator obtains population-level parameter estimates, the null hypothesis $H_0:\beta^{(1)}_1=0$, or the zero amplitude test, is assessed to determine whether or not biomarker levels oscillate post-treatment \citep{Bingham1982, Tong1976}. In Appendix \ref{App:B2}, we compute that the numeric value of the corresponding Wald test statistic would be
\begin{align*}
    \tau &=\frac{3M^{(1)}\phi^2_{b^{(1)}_{i,2}}(1)}{4} = \frac{3M^{(1)}}{8}.
\end{align*}
If the investigator assumes that the test statistic follows a central chi-squared distribution with one degree of freedom under the null hypothesis, then the test statistic would need to be at least as large as $3.841$ to reject the null hypothesis at a significance threshold of $\rho=0.05$ \citep[Section 1.3.6.7.4]{Nist2002}. As a consequence, the investigator would need to recruit at least $M^{(1)}=12$ individuals to reject the null hypothesis and avoid committing a type II error \citep[Section 3.1]{Lehmann2022}. It is noted that if there were no individual-level differences in phase-shift parameters, then the Wald test statistic would instead equal
\begin{align*}
    \tau &= \frac{3M^{(1)}}{4}.
\end{align*}
In this scenario, the investigator would only need to recruit $M^{(1)}=6$ individuals to reject the null hypothesis $H_0: \beta^{(1)}_1=0$, assuming each individual-level parameter estimate equals their corresponding estimand.

\subsection{A Refined Two-Stage Method for Trigonometric Regression} \label{sec:2.3}

Proposition \ref{prop:1} demonstrates that the population-level parameter estimates produced by the standard two-stage (STS) method are attenuated when the individual-level phase-shift parameter estimates are distinct across individuals. The numeric example of Section \ref{sec:2.2.2} shows that these attenuated parameter estimates would attenuate the Wald test statistic computed when performing the zero amplitudes test. To address this attenuation bias in population-level parameter estimates and inaccurate hypothesis testing results, this section introduces a refined two-stage (RTS) method for trigonometric regression. 

The first refinement transforms each individual-level parameter estimate $\hat{\gamma}^{(c)}_i$ from the linear model in (\ref{eq:lin_mod}) into separate amplitude and phase-shift components before estimating the population-level parameters. Specifically, the $(2K+1)\times 1$ vector $\hat{\gamma}^{(c)}_i$ would be transformed into the $(3K+1)\times 1$ vector $\tilde{\theta}^{(c)}_i$, where each element of $\tilde{\theta}^{(c)}_i$ is defined as
\begin{align}
    \tilde{\theta}^{(c)}_{i,0} &= \hat{\theta}_{i,0}^{(c)}, \nonumber \\
    \tilde{\theta}^{(c)}_{i,3k-2} &= \sqrt{(\hat{\gamma}_{i,2k-1}^{(c)})^2+(\hat{\gamma}_{i,2k}^{(c)})^2}, \label{eq:amp} \\
    \tilde{\theta}^{(c)}_{i,3k-1} &= \sin\left\{\mathrm{atan2}(-\hat{\gamma}_{i,2k-1}^{(c)}, \hat{\gamma}_{i,2k}^{(c)})\right\}, \label{eq:circ_avg1} \\  
    \tilde{\theta}^{(c)}_{i,3k} &= \cos\left\{\mathrm{atan2}(-\hat{\gamma}_{i,2k-1}^{(c)}, \hat{\gamma}_{i,2k}^{(c)})\right\}, \label{eq:circ_avg2}
\end{align}
with $k \in \{1,\ldots, K\}$. The amplitude quantity in (\ref{eq:amp}) is due to direct application of the amplitude identity from (\ref{eq:alt_to_orig}). The quantities in (\ref{eq:circ_avg1}) and (\ref{eq:circ_avg2}) similarly leverages the phase-shift identity from (\ref{eq:alt_to_orig}), as
\begin{align*}
    \hat{\theta}^{(c)}_{i,2k} = \mathrm{atan2}(-\hat{\gamma}_{i,2k-1}^{(c)}, \hat{\gamma}_{i,2k}^{(c)}),
\end{align*}
where $\hat{\theta}^{(c)}_{i,2k}$ is the $k$-th phase-shift estimate for the $i$-th individual in the $c$-th cohort. However, we define $\tilde{\theta}^{(c)}_{i,3k-1} = \sin(\hat{\theta}_{i,2k})$ and $\tilde{\theta}^{(c)}_{i,3k} = \cos(\hat{\theta}_{i,2k})$ to avoid direct averaging of the individual-level phase-shift parameters when computing population-level estimates. To clarify, phase-shift parameters are defined as circular quantities, or angles on the interval $[-\pi, \pi)$ \citep{Mardia1976}. If a population-level phase-shift parameter is defined on the boundary of this interval, then direct averaging of the corresponding individual-level phase-shift parameters could bias the population-level estimate depending on how the individual-level phase-shifts are dispersed. The transform of $\hat{\theta}^{(c)}_{i,2k}$ to $[\sin(\hat{\theta}^{(c)}_{i,2k}) \ \ \cos(\hat{\theta}^{(c)}_{i,2k})]$ is a mapping of the phase-shift parameter onto the unit circle, which ensures that $\mathrm{atan2}(\tilde{\theta}_{i,3k-1}^{(c)}, \tilde{\theta}_{i,3k}^{(c)})$ is an unbiased population-level phase-shift estimate \citep[Section 2.2]{Mardia1999}. It is noted that the quantities in (\ref{eq:circ_avg1}) and (\ref{eq:circ_avg2}) follow a normal distribution asymptotically \citep[Section 4.8]{Mardia1999}. The corresponding population-level parameter vector would then be computed as
\begin{align*}
    \tilde{\beta}^{(c)} &= \frac{1}{M^{(c)}}\sum_{i=1}^{M^{(c)}}\tilde{\theta}^{(c)}_{i}.
\end{align*}

The second refinement involves adjusting calculation of the empirical covariance matrix $\mathrm{Var}\{g_1(\tilde{\beta}^{(c)})\}$ following (\ref{eq:delt_var}). This adjustment involves first identifying a function $g_2(\hat{\gamma}_i^{(c)})$ where $g_1(\tilde{\theta}_i^{(c)})=g_2(\hat{\gamma}_i^{(c)})$. The empirical covariance matrix $\mathrm{Var}\{g_1(\tilde{\beta}^{(c)})\}$ can then be computed as
\begin{align*}
\mathrm{Var}\{g_1(\tilde{\beta}^{(c)})\} = \frac{1}{M^{(c)}}\left\{G_1(\tilde{\beta}^{(c)})\tilde{D}^{(c)}G_1(\tilde{\beta}^{(c)}) + \frac{1}{M^{(c)}}\sum_{i=1}^{M^{(c)}}G_2(\hat{\gamma}_i^{(c)})\hat{\Sigma}^{(c)}_iG_2(\hat{\gamma}_i^{(c)})\right\}.
\end{align*}
Here,
\begin{align*}
    \tilde{D}^{(c)} &= \frac{1}{M^{(c)}-1}\sum_{i=1}^{M^{(c)}} (\tilde{\theta}^{(c)}_{i}-\tilde{\beta}^{(c)})(\tilde{\theta}^{(c)}_{i}-\tilde{\beta}^{(c)})^T,
\end{align*}
with $G_1(\kappa)$ representing the derivative of $g_1(\kappa)$ with respect to the $(3K+1)\times 1$ vector $\kappa$, and $G_2(\xi)$ the derivative of $g_2(\xi)$ with respect to the $(2K+1)\times 1$ vector $\xi$.

\section{Simulation Study} \label{sec:3}

\subsection{Parameter Estimation and Inference with a Single Cohort} \label{sec:3.1}

\subsubsection{Simulation Setup for a Single Cohort} \label{sec:3.1.1}

A simulation study is conducted to compare the standard two-stage (STS) method for trigonometric regression from Section \ref{sec:2.1} to our refined two-stage (RTS) method from Section \ref{sec:2.3}. This study considers different simulation settings obtained from varying the following four design factors for a control cohort ($c=0$):
\begin{description}
    \item[Number of Harmonics.] $K=1$ (with $\beta^{(0)}_2 = \pi/4$); or $K=3$ (with $\beta^{(0)}_2 = \pi/8$, $\beta^{(0)}_4 = \pi/4$, and $\beta^{(0)}_6 = 3\pi/8$).
    \item[Phase-Shift Variability.] $b_{2k}^{(0)}\sim \mathrm{VM}(0, 2)$ for all $k \in \{1,\ldots, K\}$ (high phase-shift variability); and $b_{2k}^{(0)}\sim \mathrm{VM}(0, 8)$ for all $k \in \{1,\ldots, K\}$ (low phase-shift variability).
    \item[Sample Size and Cohort Size.] $n_{i}^{(0)} = 12$ for all $i$ and $M^{(0)} = 10$ (small sample size and cohort size); and $n_{i}^{(0)} = 192$ for all $i$ and $M^{(0)} = 20$ (large sample size and cohort size).
    \item[Signal-to-Noise Ratio.] $\beta^{(0)}_{2k-1}=1.5$ with $b^{(0)}_{i,2k-1}\sim \mathrm{TN}(-0.75, 0.75, 0, 1/\sqrt{2})$ for all $k \in \{1,\ldots, K\}$ (a high signal-to-noise ratio); and $\beta^{(0)}_{2k-1}=0.5$ with $b^{(0)}_{i,2k-1}\sim \mathrm{TN}(-0.25, 0.25, 0, 1/\sqrt{2})$ for all $k \in \{1,\ldots, K\}$ (a low signal-to-noise ratio). 
 \end{description}
Here, $\mathrm{TN}(Z_1, Z_2, Z_3, Z_4)$ represents a truncated normal distribution with mean $Z_1$, variance $Z_2$, lower bound $Z_3$, and upper bound $Z_4$; and $\mathrm{VM}(\kappa, \xi)$ denotes a von Mises distribution with mean $\kappa$ and concentration $\xi$. The von Mises distribution is the circular analog of the normal distribution \citep{Lee2010}, and the phase-shift parameter is considered a circular quantity \citep{Mardia1976}. 

In total, 16 different simulation settings are evaluated by varying these design factors. For instance, one setting would involve configuring the number of harmonics to $K=3$; the phase-shift variability to generate each $b^{(0)}_{i,2k} \sim \mathrm{VM}(0, 2)$ for all $k \in \{1, 2, 3\}$; the sample size for each $i$-th individual and cohort size to $n_{i}^{(0)} = 12$ and $M^{(0)} = 10$, respectively; and the signal-to-noise ratio to specify each amplitude parameter $\beta^{(0)}_{2k-1} = 1.5$ for all $k\in\{1, 2, 3\}$. In all 16 simulation settings, the following additional quantities are generated in the same manner:
\begin{enumerate}
    \item The random noise $\epsilon^{(0)}_{i,j}\sim \mathrm{N}(0, 1)$, or is generated from a standard normal distribution.
    \item Each population-level intercept parameter $\beta^{(0)}_0=6$.
    \item Each individual-level intercept parameter $b^{(0)}_{i, 0}\sim \mathrm{N}(0, 1)$.
    \item Each individual-level amplitude parameter $b^{(0)}_{i,2k-1}\sim \mathrm{TN}(0, 1/2, -\beta^{(0)}_{2k-1}, \beta^{(0)}_{2k-1})$ for all $k$.
    \item Covariate data are obtained from an equispaced experimental design, where $X^{(0)}_{i,j} = 24(j-1)/n_i^{(0)}$ for all $i$.
\end{enumerate}

These additional quantities enable interpretation of each design factor. Specifically, the number of harmonics design factor reflects the range of order parameters that are typically specified for trigonometric regression in circadian biology studies \citep{Albert2005, Hughes2009}. The phase-shift variability design factor represents how individual-level phase-shift parameters are dispersed. In particular, the concentration parameter $\kappa$ can be interpreted as a reciprocal measure of dispersion, with the approximation $1/\kappa$ used for the variance parameter of a normal distribution when $\kappa$ is large \citep[Equation 3.5.22]{Mardia1999}. This approximation implies that when $\kappa = 8$, the standard deviation of each individual-level phase-shift parameter for the $k$-th harmonic is approximately $12/(\pi\sqrt{8}) \approx 1.350$ hours, which could represent a scenario where the investigator applied strict inclusion criteria when selecting the study population. In contrast, when $\kappa=2$, the standard deviation increases to approximately $12/(\pi\sqrt{2}) \approx 2.701$ hours, which could represent a scenario where the investigator exercised less stringent control when selecting the study population \citep{Kennaway2023}. The sample size and cohort size design factor reflects the type of biological phenomenon under study. For example, a small number of samples taken from each individual ($n_i^{(0)}=12$) and a small cohort ($M^{(0)}=10$) could reflect a scenario where an investigator is measuring biomarker levels that are expensive to obtain. Finally, the signal-to-noise ratio design factor represents the ratio of each population-level amplitude for the $k$-th harmonic relative to the variance of the individual-level random noise, or $\beta^{(0)}_{2k-1}/\{(\sigma^{(0)})^2\}$.

For each simulation setting, 1,000 simulation trials are performed. In each simulation trial, four datasets are generated: 
\begin{description}
    \item[Dataset 1.] Generated following the design factors specified for a simulation setting.
    \item[Dataset 2.] Generated following the design factors specified for a simulation setting, except $b^{(0)}_{i,2k}=0$ for all $i$ and $k \in \{1,\ldots, K\}$.
    \item[Dataset 3.] Generated following the design factors specified for a simulation setting, except $\beta^{(0)}_{2k-1}=0$ and $b^{(0)}_{i,2k-1}=0$ for all $i$ and $k \in \{1,\ldots, K\}$.
    \item[Dataset 4.] Generated following the design factors specified for a simulation setting, except $\beta^{(0)}_{2k-1}=0$, $b^{(0)}_{i,2k-1}=0$, and $b^{(0)}_{i,2k}=0$ for all $i$ and $k \in \{1,\ldots, K\}$.
\end{description}
Dataset 1 and Dataset 3 represent scenarios where individual-level phase-shift parameters are distinct across individuals, whereas Dataset 2 and Dataset 4 represent scenarios where individual-level phase-shift parameters equal their corresponding population-level parameters. This setup enables a comparison of parameter estimates and hypothesis test results obtained from each method when individual variability is present relative to when it is absent. Additionally, Dataset 1 and Dataset 2 represent scenarios where the biological phenomenon oscillates, whereas Dataset 3 and Dataset 4 represent scenarios where the biological phenomenon does not oscillate. This setup enables a comparison of each method when performing the zero amplitudes test, or assessing the null hypothesis $H_0: \beta^{(0)}_{2k-1} = 0$ for all $k \in \{1,\ldots, K\}$.

Once each dataset is generated, we estimate population-level parameters and perform the zero amplitudes test using both the STS method and our RTS method. We then record the following quantities:
\begin{description}
    \item[Quantity 1.] $\hat{\beta}^{(0)}_{0} - \beta^{(0)}_0$, or the difference between the estimated and true population-level intercept parameters, on Dataset 1 and Dataset 2.
    \item[Quantity 2.] $\hat{\beta}^{(0)}_{2k-1} - \beta^{(0)}_{2k-1}$ for each $k \in \{1,\ldots, K\}$, or the difference between the estimated and true population-level amplitude parameters for each $k$-th harmonic, on Dataset 1 and Dataset 2.
    \item[Quantity 3.] $\mathrm{atan2}\{\mathrm{sin}(\hat{\beta}^{(0)}_{2k} - \beta^{(0)}_{2k}), \mathrm{cos}(\hat{\beta}^{(0)}_{2k} - \beta^{(0)}_{2k})\}$ for each $k \in \{1,\ldots, K\}$, or the circular difference between the estimated and true population-level phase-shift parameters for each $k$-th harmonic, on Dataset 1 and Dataset 2.
    \item[Quantity 4.] The bootstrapped $p$-value computed when performing the zero amplitudes test on Dataset 1 and Dataset 2.
    \item[Quantity 5.] The bootstrapped $p$-value computed when performing the zero amplitudes test on Dataset 3 and Dataset 4.
\end{description} 
It is noted that Quantity 3 computes the circular difference between the estimated and true phase-shift parameters, as the phase-shift parameters are defined as circular quantities \citep{Mardia1976}. Further, to compute a bootstrapped $p$-value for hypothesis testing, we perform $R=1,000$ bootstrap replicates. 

Once all 1,000 simulation trials are performed, we report the mean and standard deviations for Quantities 1-3. For Quantity 4, rather than reporting summary statistics of the $p$-values directly, we evaluate the overall effectiveness of the hypothesis test by computing the area under the empirical statistical power curve ($\mathrm{AUC}_{\mathrm{SP}}$). Specifically, let $p_i$ denote the $p$-value obtained in the $i$-th simulation trial for a simulation setting. Given a significance threshold $\rho$, the empirical statistical power is defined as
\begin{align*}
\mathrm{Statistical \ Power}(\rho) = \frac{1}{1000} \sum_{i=1}^{1000} \mathbf{1}(p_i \leq \rho)
\end{align*}
when the null hypothesis is false \citep[Section 3.1]{Lehmann2022}, where $\mathbf{1}(p_i \leq \rho )$ denotes an indicator function that equals one when $p_i \leq \rho$ and zero otherwise. We then define the $\mathrm{AUC}_{\mathrm{SP}}$ as the quantity
\begin{align}
\mathrm{AUC}_{\mathrm{SP}} = \int_0^1 \left\{\mathrm{Statistical \ Power}(\rho)\right\} d\rho. \label{eq:aucsp}
\end{align}
The quantity in (\ref{eq:aucsp}) serves as a threshold-free evaluation of a method's statistical power for a simulation setting. For the $\mathrm{AUC}_{\mathrm{SP}}$, larger values imply greater statistical power when considering all possible significance thresholds.

For Quantity 5, we similarly compute an area under the empirical type I error curve ($\mathrm{AUC}_{\mathrm{T1E}}$). Specifically, the empirical type I error rate is defined as
\begin{align*}
\mathrm{Type\ I\ Error}(\rho) = \frac{1}{1000} \sum_{i=1}^{1000} \mathbf{1}(p_i \leq \rho)
\end{align*}
when the null hypothesis is true, which implies that
\begin{align}
\mathrm{AUC}_{\mathrm{T1E}} = \int_0^1 \left\{\mathrm{Type\ I\ Error}(\rho)\right\} d\rho. \label{eq:auct1e}
\end{align}
Given that Quantity 5 is computed from datasets where the null hypothesis is true, the type I error rate would equal the significance threshold specified if the hypothesis test is well-calibrated, which would result in $\mathrm{AUC}_{\mathrm{T1E}} = 0.5$. However, it is possible for the hypothesis test to be mis-calibrated, with values less than 0.5 indicating that the method is conservative (the method rejects the null hypothesis less often than expected), while values greater than 0.5 indicate the method is anti-conservative (the method rejects the null hypothesis more often than expected). It is noted that for the $\mathrm{AUC}_{\mathrm{SP}}$ and the $\mathrm{AUC}_{\mathrm{T1E}}$ we will compute bootstrap confidence intervals with $R=100,000$ bootstrap replicates. 

\subsubsection{Simulation Study Results for a Single Cohort} \label{sec:3.1.2}

Table \ref{tab:sim1_K1} and Table \ref{tab:sim1_K3} summarize results for all simulation settings when the number of harmonics is set to $K=1$ and set to $K=3$, respectively. Our refined two-stage (RTS) method consistently mitigates attenuation bias in amplitude estimates and achieves greater statistical power for the zero amplitudes test when compared to the standard two-stage (STS) method, which aligns with the theory results in Section \ref{sec:2.2.1} and the numeric example in Section \ref{sec:2.2.2}. The only exception arises in simulation settings with low phase-shift variability, small sample sizes, and low signal-to-noise ratios. In these settings, our RTS method produces biased population-level parameter estimates and displays lower statistical power for the zero amplitudes test when compared to the STS method. These tables also demonstrate that our RTS method produces the same parameter estimates, as well as statistical power and type I error control for the zero amplitudes test, regardless of whether or not there are individual-level differences in phase-shift parameters. In every simulation setting, these corresponding quantities change for the STS method depending on whether or not there are individual-level differences in phase-shift parameters.

Figure \ref{fig:power_31} and Figure \ref{fig:typeI_31} present the empirical power and type I error curves for the zero amplitudes test, which are used to compute 
$\mathrm{AUC}_{\mathrm{SP}}$ and $\mathrm{AUC}_{\mathrm{T1E}}$, respectively. 95\% confidence intervals for these curves are obtained using the Dvoretzky–Kiefer–Wolfowitz inequality \citep{Dvoretzky1956}. Figure \ref{fig:power_31} indicates that our RTS method outperforms the STS method at every significance threshold, except in simulation settings with low phase-shift variability, small sample sizes, and low signal-to-noise ratios. Further, Figure \ref{fig:typeI_31} shows that both our RTS method and the STS method display conservative type I error control for the zero amplitudes test. This finding is consistent with the observations of \citet{Robins2000}, who noted that bootstrapped $p$-values are conservative when assessing the null hypothesis that a subset of the model parameters are equal to zero.

\subsection{Inference with Multiple Cohorts} \label{sec:3.2}

\subsubsection{Simulation Study Setup for Multiple Cohorts} \label{sec:3.2.1}

A second simulation study is conducted to compare the STS method to our RTS method for assessing differences in oscillations between cohorts. This second study considers different simulation settings obtained from varying design factors for a case cohort ($c=1$) and a control cohort ($c=0$). Specifically, the control cohort's data were generated by varying four design factors, defined and parameterized exactly as presented in Section \ref{sec:3.1.1}. For each design factor specified for the control cohort, the same design factor is specified for the case cohort, where the corresponding design factor for the case cohort is defined as follows:
\begin{description}
    \item[Number of Harmonics.] $K=1$ (with $\beta^{(1)}_2 = \pi/2$); or $K=3$ (with $\beta^{(1)}_2 = \pi/4$, $\beta^{(1)}_4 = \pi/2$, and $\beta^{(1)}_6 = 3\pi/4$).
    \item[Phase-Shift Variability.] $b_{2k}^{(1)}\sim \mathrm{VM}(0, 4)$ for all $k \in \{1,\ldots, K\}$ (high phase-shift variability); and $b_{2k}^{(1)}\sim \mathrm{VM}(0, 16)$ for all $k \in \{1,\ldots, K\}$ (low phase-shift variability). 
    \item[Sample Size and Cohort Size.] $n_{i}^{(1)} = 12$ for all $i$ and $M^{(1)} = 10$ (small sample size and cohort size); and $n_{i}^{(1)} = 192$ for all $i$ and $M^{(1)} = 20$ (large sample size and cohort size).
    \item[Signal-to-Noise Ratio.] $\beta^{(1)}_{2k-1}=1$ with $b^{(1)}_{i,2k-1}\sim \mathrm{TN}(-0.5, 0.5, 0, 1/\sqrt{2})$ for all $k \in \{1,\ldots, K\}$ (a high signal-to-noise ratio); and $\beta^{(1)}_{2k-1}=0.25$ with $b^{(1)}_{i,2k-1}\sim \mathrm{TN}(-0.125, 0.375, 0, 1/\sqrt{2})$ for all $k \in \{1,\ldots, K\}$ (a low signal-to-noise ratio). 
 \end{description}
 In all 16 simulation settings, the following additional quantities are generated in the same manner:
\begin{enumerate}
    \item The random noise $\epsilon^{(c)}_{i,j}\sim \mathrm{N}(0, 1)$ for each $c \in \{0,1\}$.
    \item Each population-level intercept parameter for the control cohort $\beta^{(0)}_0=6$.
    \item Each individual-level intercept parameter $b^{(c)}_{i, 0}\sim \mathrm{N}(0, 1)$ for each $c \in \{0, 1\}$.
    \item Each individual-level amplitude parameter $b^{(c)}_{i,2k-1}\sim \mathrm{TN}(0, 1/2, -\beta^{(c)}_{2k-1}, \beta^{(c)}_{2k-1})$ for all $k$ and each $c \in \{0,1\}$.
    \item Covariate data are obtained from an equispaced experimental design, where $X^{(c)}_{i,j} = 24(j-1)/n_i^{(c)}$ for all $i$ and $c \in \{0,1\}$.
\end{enumerate}
It is emphasized that for this second simulation study, we will set $\beta_{0}^{(1)} = 5$ when the signal-to-noise ratio design factor is set to ``high''; and we set $\beta_{0}^{(1)} = 4$ when this design factor is set to ``low''. 

For each simulation setting, 1,000 simulation trials are performed. In each simulation trial, four datasets are generated:
\begin{description}
    \item[Dataset 1.] Generated following the design factors specified for a simulation setting.
    \item[Dataset 2.] Generated following the design factors specified for a simulation setting, except $b^{(c)}_{i,2k}=0$ for all $i$, $c$, and $k \in \{1,\ldots, K\}$.
    \item[Dataset 3.] Generated following the design factors specified for a simulation setting, except $\beta^{(0)}_{k}$ is set to the same value as $\beta^{(1)}_{k}$ for all $k \in \{0,\ldots, 2K\}$.
    \item[Dataset 4.] Generated following the design factors specified for a simulation setting, except $\beta^{(0)}_{k}$ is set to the same value as $\beta^{(1)}_{k}$ for all $k \in \{0,\ldots, 2K\}$ and $b^{(c)}_{i,2k}=0$ for all $i$, $c$, and $k \in \{1,\ldots, K\}$.
\end{description}
Dataset 1 and Dataset 3 again represent scenarios where individual-level phase-shift parameters are distinct across individuals, whereas Dataset 2 and Dataset 4 represent scenarios where individual-level phase-shift parameters equal their corresponding population-level parameters. However, Dataset 1 and Dataset 2 now represent scenarios where each cohort displays different population-level oscillations, whereas Dataset 3 and Dataset 4 represent scenarios where each cohort displays the same population-level oscillations. 

For each dataset generated, we estimate population-level parameters using the STS method and our RTS method. We then assess two null hypotheses: $H_0: \beta_0^{(1)}-\beta_0^{(0)}=0$, which is known as the ``equal midlines test'' \citep{Bingham1982}; and $H_0: \beta_{2k-1}^{(1)} - \beta_{2k-1}^{(0)} = \mathrm{atan2}\{\sin(\beta_{2k}^{(1)} - \beta_{2k}^{(0)}), \cos(\beta_{2k}^{(1)} - \beta_{2k}^{(0)})\} = 0$ for all $k \in \{1, \ldots, K\}$, which we will define as the ``equal rhythms test.'' To clarify, the equal rhythms test assesses whether or not every amplitude and phase-shift parameter estimated for the control cohort is equal to the corresponding quantity estimated for the case cohort. It is emphasized that we assess $\mathrm{atan2}\{\sin(\beta_{2k}^{(1)} - \beta_{2k}^{(0)}), \cos(\beta_{2k}^{(1)} - \beta_{2k}^{(0)})\}=0$ as part of the equal rhythms test to transform the phase-shift parameters, which are circular quantities, to linear quantities. After we assess these null hypotheses, we record the following quantities:
\begin{description}
    \item[Quantity 1.] The bootstrapped $p$-value computed for the equal midlines test on Dataset 1 and Dataset 2.
    \item[Quantity 2.] The bootstrapped $p$-value computed for the equal midlines test on Dataset 3 and Dataset 4.
    \item[Quantity 3.] The bootstrapped $p$-value computed for the equal rhythms test on Dataset 1 and Dataset 2.
    \item[Quantity 4.] The bootstrapped $p$-value computed for the equal rhythms test on Dataset 3 and Dataset 4.
\end{description} 
Once 1,000 simulation trials have been performed, we report the $\mathrm{AUC}_{\mathrm{SP}}$ in (\ref{eq:aucsp}) and its bootstrapped standard deviation computed with $R=1,000$ replicates for Quantities 1 and 3. We also report the $\mathrm{AUC}_{\mathrm{T1E}}$ in (\ref{eq:auct1e}) and its bootstrapped standard deviation computed with $R=1,000$ replicates for Quantities 2 and 4.

\subsubsection{Simulation Study Results for Multiple Cohorts}

Table \ref{tab:sim2} summarizes the results for every simulation setting. Overall, our RTS method achieves greater statistical power and maintains type I error control for the equal rhythms test when compared to the STS method. The only exception occurs in settings with small sample sizes and low signal-to-noise ratios, where the RTS method exhibits reduced power relative to the STS method. Additionally, the RTS method maintains similar power and type I error rates regardless of whether individual-level phase-shift parameters are distinct across individuals. In contrast, the STS method’s performance varies depending on the presence of this variability. For the equal midlines test, both methods yield comparable power and type I error control across all settings. This finding for the equal midlines test is consistent with Proposition \ref{prop:1}, which states that the midline parameter is unbiased.

Figure \ref{fig:power_32R} and Figure \ref{fig:typeI_32R} display the empirical power and type I error curves used to compute $\mathrm{AUC}_{\mathrm{SP}}$ and $\mathrm{AUC}_{\mathrm{T1E}}$, respectively, for the equal rhythms test. The empirical power curves show that our RTS method has greater statistical power than the STS method at each significance threshold, unless the simulation setting is specified to have a low signal-to-noise ratio as well as a small sample size and cohort size. Notably, the STS method displays anti-conservative type I error control for this test when the null hypothesis is true. This mis-calibration arises from differences in the distributions used to generate individual-level phase-shift parameters across cohorts. For instance, when the phase-shift variability design factor is set to ``low,'' the control cohort generates $b_{i,2k}^{(0)} \sim \mathrm{VM}(0, 8)$, while the case cohort generates $b_{i,2k}^{(1)} \sim \mathrm{VM}(0, 16)$. According to Corollary \ref{cor:1}, these differences in dispersion lead to unequal attenuation of population-level amplitudes across cohorts, which could lead to an incorrect study conclusion that a biological phenomenon oscillates differently for each cohort. Figure \ref{fig:power_32M} and Figure \ref{fig:typeI_32M} show the corresponding power and type I error curves for the equal midlines test. Both our RTS method and the STS method obtain the same performance at each significance threshold.

\section{Real Data Illustrations} \label{sec:4}

\subsection{Setup for Illustrations} \label{sec:4.1}

In this section, we compare parameter estimates and hypothesis test results produced by our RTS method to those produced by the STS method on two circadian biology datasets. For this comparison, we do not assume that the order parameter $K$ for trigonometric regression is known a priori.  Instead, we select $K$ based on a forward selection procedure that is used in practice \citep{Mokon2020}. This procedure can be summarized as follows:
\begin{enumerate}
    \item Starting with $k = 1$, sequentially fit models of increasing order.
    \item At each iteration $k$, we assess the null hypothesis $H_0 : \beta^{(c)}_{2k-1} = 0$ using the bootstrap procedure from Section \ref{sec:2.1.2}. We record the $p$-value output from this test, which we denote as $p_k$.
    \item We retain the $k$-th harmonic for parameter estimation if $p_k < 0.05$. The selection process stops at the first $k$ where $p_k \geq 0.05$, and the selected order parameter would be $K = k - 1$.
\end{enumerate}
We will apply this procedure separately for each method (our RTS method and the STS method) and independently on each cohort in a dataset. To facilitate comparison of population-level parameter estimates and hypothesis test results across methods and cohorts, we will define a common order parameter $K$ as the largest order parameter selected across methods and cohorts.

\subsection{Illustration with Cortisol Levels Derived from Blood Tissue Samples} \label{sec:4.2}

We first analyze cortisol level data previously studied by \citet{Albert2005} and \citet{Wang2003}. The data came from an experiment where blood samples were drawn every two hours over a 24-hour period from three cohorts: nine healthy individuals with no known illnesses (the control cohort with $c=0$), eleven individuals diagnosed with major depressive disorder (MDD, the case cohort with $c=1$), and sixteen individuals with Cushing’s syndrome \citep{Wang1996}. Cortisol levels measured from each blood sample were transformed onto a logarithmic scale. This illustration focuses on the control and MDD cohorts because previous studies have shown that cortisol levels in individuals with Cushing’s syndrome typically do not oscillate \citep{BOYAR1979, LIU1987}.

The order parameter selection framework of Section \ref{sec:4.1} identified an order parameter of $K=3$. Table \ref{tab:ind} provides the population-level parameter estimates and the test statistics computed for the zero amplitudes test, or assessing the null hypothesis $H_0:\beta^{(c)}_{2k-1}=0$ for all $k \in \{1,2,3\}$, on each cohort separately. Each amplitude estimate produced by our RTS method is larger than the corresponding quantity produced by the STS method, which is consistent with the theoretical results in Section \ref{sec:2.2.1} and the numerical example of Section \ref{sec:2.2.2}. The $p$-value produced by the zero amplitudes test is also smaller for our RTS method when compared to the corresponding $p$-value produced by the STS method on the case cohort (both methods produce a bootstrapped $p$-value equal to zero for the control cohort). 

Figure \ref{fig:amps} visualizes the fitted population-level models for each cohort and method. Our RTS method produces multiple peaks in cortisol levels, whereas the STS method produces a single peak. The multiple peaks produced by our RTS method align with previous analyses of cortisol levels, which have been in part attributed to meal-induced cortisol stimulation \citep{Debono2009, LEGLER1982, Stimson2014}. When we perform the equal rhythms test for each method, we find that our RTS method produced a smaller $p$-value ($p=0.560$) when compared to the $p$-value produced by the STS method ($p=0.784$). Both methods produced the same $p$-value for the equal midlines test ($p=0.195$). 

\subsection{Illustration with Heart Rates Obtained from Wearable Devices} \label{sec:4.3}

The MMASH (Multilevel Monitoring of Activity and Sleep in Healthy people) dataset contains psycho-physiological data collected from 22 healthy adult males, including measures of sleep quality, physical activity, and anxiety. For this illustration, we focus on heart rate data recorded continuously over a 24-hour period with wearable heart rate monitors, which capture beat-to-beat intervals. We process these data to extract heart beats-per-minute (BPM) values following the protocol described in the MMASH study \citep{Rossi2020}.

For illustration, we create two cohorts based on each participating individual's reported stress levels based on their responses to the Daily Stress Inventory (DSI), which was taken towards the end of the study. To clarify, the DSI is a 58-item self-report questionnaire in which an individual answers questions about events that occurred during the previous 24-hour period and their perceived impact. The overall DSI score for this questionnaire ranges from 0 to 406, with higher scores reflecting both a greater number and intensity of stressful experiences \citep{Brantley1987}. We assign the 11 individuals with the lowest DSI scores into a ``low-stress cohort,'' and the 11 individuals with the highest DSI scores into a ``high-stress cohort.''

Application of the order selection procedure from Section \ref{sec:4.1} identified an order parameter of $K=13$. Table \ref{tab:ind2} presents population-level parameter estimates and $p$-values computed for the zero amplitudes test, which assesses the null hypothesis $H_0: \beta^{(c)}_{2k-1} = 0$ for all $k \in \{1, \ldots, 13\}$, on each cohort separately. Each amplitude produced by our RTS method is again larger than the corresponding quantity produced by the STS method. Each method produced a bootstrapped $p$-value equal to zero for the zero amplitudes test. 

Figure \ref{fig:amps2} presents the corresponding population-level fits produced by each method for each cohort. The STS method again produces a fit with fewer peaks that are more attenuated when compared to the fit produced by our RTS method. Notably, our RTS method produces three distinct heart rate peaks near 90 BPM for the high-stress cohort, which occur around hours 10, 16, and 20. In contrast, the RTS method produces a single peak at this frequency around hour 10 for the low-stress cohort, followed by relatively stable oscillations around 80 BPM until approximately hour 22. Both methods produced the same $p$-value for the equal rhythms test ($p=0.000$) and for the equal midlines test ($p=0.957$). 

\section{Discussion} \label{sec:5}

In this article, we propose a refined two-stage (RTS) method for analyzing circadian biology data with trigonometric regression. The development of this method is motivated by Proposition \ref{prop:1} and Corollary \ref{cor:1}, which show that individual-level differences in phase-shift parameters can bias population-level parameter estimates produced by the STS method, which could lead to inaccurate study conclusions. The presence of this bias is numerically validated by our simulation studies in Section \ref{sec:3}, which demonstrate that the STS method produces attenuated population-level parameter estimates and has lower statistical power for hypothesis tests. Notably, the STS method can also lose type I error control when comparing oscillations across cohorts. Our RTS method, on the other hand, consistently maintained type I error control for hypothesis testing and does not make assumptions about how the individual-level parameters are generated, which enhances its applicability to a wide array of circadian biology study data.

This study presents opportunities for future methodological research. First, our RTS method could be improved for scenarios in which the population-level amplitudes are small relative to the variance of the random noise (the signal-to-noise ratio) as well as both the sample size and cohort size are small. One approach that could mitigate these issues would involve incorporating biological assumptions about the distribution of individual-level phase-shift parameters \citep{Gorczycaa2024, Gorczycad2024}. Second, the method could be extended to incorporate additional covariates that influence individual-level oscillations, which could enable control of potential confounders in clinical studies.

\section*{Acknowledgments}
This research was supported by the Brazilian National Council for Scientific and Technological Development (CNPq; Proc. 446340/2024-3, 314878/2025-4, 442650/2025-6).

\section*{Conflict of interest}

The authors declare no potential conflict of interests.

\section*{DATA AVAILABILITY STATEMENT}
The authors have made code scripts for reproducing the results from Section \ref{sec:4} available \textcolor{blue}{\href{https://bitbucket.org/michaelgorczyca/two_stage_trigonometric_regression/src/main/}{here}}.


\clearpage
\newpage

\begin{figure}[!h]
\center
\includegraphics[scale=0.25]{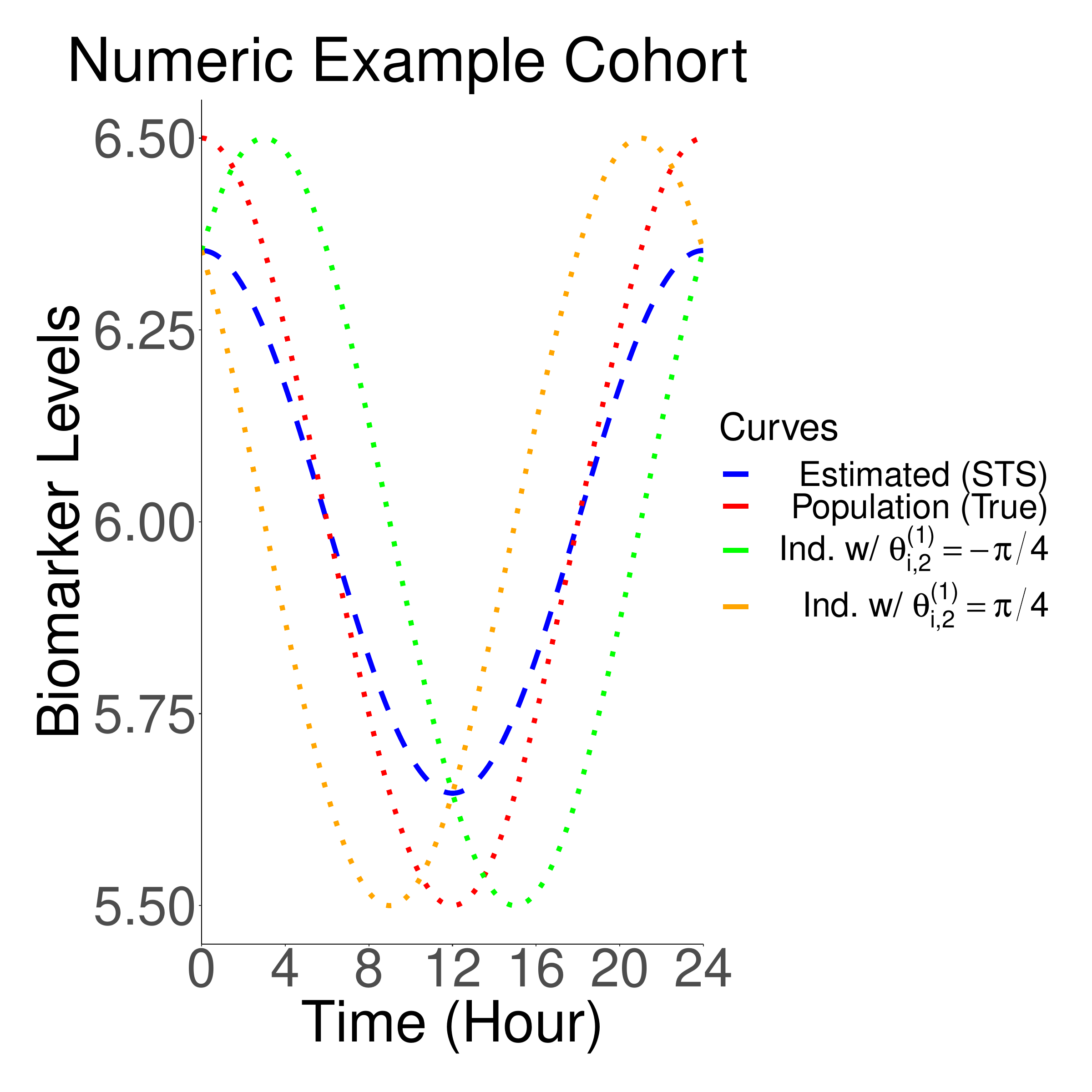}
    \caption{Illustration of using the standard two-stage (STS) method for first-order trigonometric regression. Half the individuals have a phase-shift parameter $\theta_{i,2}^{(1)}=-\pi/4$ (``Ind. w/ $\theta_{i,2}^{(1)}=-\pi/4$''), while the other half have $\theta_{i,2}^{(1)}=\pi/4$ (``Ind. w/ $\theta_{i,2}^{(1)}=\pi/4$''). The dashed blue line represents true population-level biomarker levels over time, while the dotted red line represents population-level biomarker level estimates produced by the STS method. The estimated population-level amplitude is attenuated relative to the true amplitude.} \label{fig:example}
\end{figure}

\clearpage
\newpage
 
\begin{sidewaystable*}
	\caption{Results from the simulation study in Section \ref{sec:3.1} when the number of harmonics design factor is set to $K=1$. Each estimated quantity is shown with its standard deviation in parentheses. Our refined two-stage (RTS) method consistently mitigates attenuation bias and yields higher power for the zero amplitudes test than the standard two-stage (STS) method, except in simulation settings with low phase-shift variability, small sample sizes, and low signal-to-noise ratios. For the ``Dataset'' column, For the ``Dataset'' column, ``DPS'' denotes a dataset generated to have ``different phase-shift'' parameters across individuals, and ``SPS'' denotes a dataset generated to have ``same phase-shift'' parameters for every individual. Bolded entries for the DPS datasets indicate better performance.} \label{tab:sim1_K1}
 \centering
\resizebox{1.0\textwidth}{!}{
  \begin{tabular}{|c|c|c|c|c|c|c|c|c|c|}
			\hline
   Signal-to-Noise Ratio & Sample Size and Cohort Size & Phase-Shift Variability & 
    Dataset& Method  & $\hat{\beta}^{(0)}_0-\beta^{(0)}_0$ & $\hat{\beta}^{(0)}_1-\beta^{(0)}_1$ & $\mathrm{atan2}\{\sin(\hat{\beta}^{(0)}_2-\beta^{(0)}_2), \cos(\hat{\beta}^{(0)}_2-\beta^{(0)}_2)\}$ & $\mathrm{AUC}_{\mathrm{SP}}$ & $\mathrm{AUC}_{\mathrm{T1E}}$  \\
   \hline
   
\multirow{16}{*}{High} & \multirow{8}{*}{Large} & \multirow{4}{*}{High} & \multirow{2}{*}{DPS} & RTS & $\mathbf{4.388\times 10^{-3} \ (0.225)}$ & $\mathbf{2.177\times 10^{-3} \ (5.721\times 10^{-2})}$ & $\mathbf{4.626\times 10^{-3} \ (0.924)}$ & \textbf{1.000 (0.000)} & \textbf{0.346 (0)} \\
& & & & STS& $\mathbf{4.388\times 10^{-3} \ (0.225)}$ & -0.439 (0.140) & $4.940\times 10^{-3}$ (0.195) & 0.882 (0.000) & 0.361 (0.000) \\
\cline{4-10}
& & & \multirow{2}{*}{SPS} & RTS& $4.388\times 10^{-3}$ (0.225) & $1.753\times 10^{-3}$ ($5.766\times 10^{-3}$) & $-4.264\times 10^{-4}$ ($1.548\times 10^{-2}$) & 1.000 (0.000) & 0.346 (0.000) \\
& & & & STS& $4.388\times 10^{-3}$ (0.225) & $-1.669\times 10^{-3}$ ($5.773\times 10^{-2}$) & $-4.831\times 10^{-4}$ ($1.481\times 10^{-2}$) & 1.000 (0.000) & 0.361 (0.000) \\
\cline{3-10}
& & \multirow{4}{*}{Small} & \multirow{2}{*}{DPS} & RTS & $\mathbf{6.177\times 10^{-3} \ (0.219)}$ & $\mathbf{1.275\times 10^{-3} \ (5.630 \times 10^{-3})}$ & $\mathbf{-6.608\times10^{-4} \ (8.427\times 10^{-2})}$ & \textbf{1.000 (0.000)} & \textbf{0.343 (0.000)} \\
& & & & STS& $\mathbf{6.177\times 10^{-3} \ (0.219)}$ & $-9.488\times 10^{-2} \ (6.090\times 10^{-2})$ & $9.018 \times 10^{-4} \ (8.510\times 10^{-2})$ & 1.000 (0.000) & 0.351 (0.000) \\
\cline{4-10}
& & & \multirow{2}{*}{SPS} & RTS & $6.177\times 10^{-3} \ (0.219)$ & $1.345\times 10^{-3}$ ($5.658\times 10^{-2}$) & $-6.225\times 10^{-4}$ ($1.518\times 10^{-2}$) & 1.000 (0.000) & 0.343 (0.000) \\
& & & & STS& $6.177\times 10^{-3} \ (0.219)$ & $-2.052\times 10^{-3}$ ($5.668\times 10^{-2}$) & $-6.115\times 10^{-4}$ ($1.453\times 10^{-2}$) & 1.000 (0.000) & 0.351 (0.000) \\
\cline{2-10}
& \multirow{8}{*}{Small} & \multirow{4}{*}{High} & \multirow{2}{*}{DPS} & RTS & $\mathbf{5.997 \times 10^{-3} \ (0.338)}$ & $\mathbf{6.509 \times 10^{-2} \ (0.148)}$ & $\mathbf{-1.851\times 10^{-2} \ (0.293)}$ & \textbf{0.998 (0.000)} & 0.439 (0.000) \\
& & & & STS& $\mathbf{5.997 \times 10^{-3} \ (0.338)}$ & -0.404 (0.233) & $-1.633\times 10^{-2}$ (0.308) & 0.922 (0.000) & \textbf{0.375 (0.000)} \\
\cline{4-10}
& & & \multirow{2}{*}{SPS} & RTS& $5.997 \times 10^{-3} \ (0.338)$) & $5.854\times 10^{-2} \ (0.149)$ & $3.984\times 10^{-5}$ ($9.120\times 10^{-2}$) & 0.998 (0.000) & 0.439 (0.000) \\
& & & & STS& $5.997\times 10^{-3}$ (0.338) & $5.026\times 10^{-3}$ (0.149) & $3.420\times 10^{-4}$ ($8.413\times 10^{-2}$) & 1.000 (0.000) & 0.375 (0.000) \\
\cline{3-10}
& & \multirow{4}{*}{Small} & \multirow{2}{*}{DPS} & RTS & $\mathbf{1.086\times 10^{-3} \ (0.328)}$ & $\mathbf{6.547\times 10^{-2} \ (0.144)}$ & $\mathbf{7.555\times 10^{-4} \ (0.147)}$ & 0.998 (0.000) & 0.436 (0.000) \\
& & & & STS& $\mathbf{1.086\times 10^{-3} \ (0.328)}$ & $-7.744\times 10^{-2}$ ($0.151$) & $-1.423\times 10^{-3}$ (0.147) & \textbf{1.000 (0.000)} & \textbf{0.347 (0.000)} \\
\cline{4-10}
& & & \multirow{2}{*}{SPS} & RTS & $1.086\times 10^{-3} \ (0.328)$ & $6.503\times 10^{-2}$ (0.145) & $3.935\times 10^{-3}$ ($8.914\times 10^{-2}$) & 0.997 (0.000) & 0.436 (0.000) \\
& & & & STS& $1.086\times 10^{-3} \ (0.328)$ & $1.119\times 10^{-2}$ (0.149) & $2.339\times 10^{-3}$ ($8.260\times 10^{-2}$) & 1.000 (0.000) & 0.347 (0.000) \\
\hline
\multirow{16}{*}{Low} & \multirow{8}{*}{Large} & \multirow{4}{*}{High} & \multirow{2}{*}{DPS} & RTS & $\mathbf{9.226\times 10^{-3} \ (0.223)}$ & $\mathbf{1.051\times 10^{-2} \ (3.012\times 10^{-2})}$ & $\mathbf{-9.929\times 10^{-3} \ (0.197)}$ & \textbf{1.000 (0.000)} & \textbf{0.349 (0.000)} \\
& & & & STS& $\mathbf{9.226\times 10^{-3} \ (0.223)}$ & -0.143 ($5.453\times 10^{-2}$) & $-1.176\times 10^{-2}$ (0.200) & 0.971 (0.000) & 0.353 (0.000) \\
\cline{4-10}
& & & \multirow{2}{*}{SPS} & RTS& $9.226\times 10^{-3} \ (0.223)$ & $1.087\times 10^{-2}$ ($3.077\times 10^{-2}$) & $-2.499\times 10^{-3}$ ($4.739\times 10^{-2}$) & 1.000 (0.000) & 0.349 (0.000) \\
& & & & STS& $9.226\times 10^{-3} \ (0.223)$ & $5.456\times 10^{-4}$ ($3.121\times 10^{-2}$) & $-2.185\times 10^{-3}$ ($4.427\times 10^{-2}$) & 1.000 (0.000) & 0.353 (0.000) \\
\cline{3-10}
& & \multirow{4}{*}{Small} & \multirow{2}{*}{DPS} & RTS & $\mathbf{-6.611\times 10^{-4} \ (0.222)}$ & $\mathbf{1.110\times 10^{-2} (2.815\times 10^{-2})}$ & $\mathbf{-4.282\times 10^{-3} \ (9.734\times 10^{-2})}$ & \textbf{1.000 (0.000)} & \textbf{0.330 (0.000)} \\
& & & & STS& $\mathbf{-6.611\times 10^{-4} \ (0.222)}$ & $-3.034\times 10^{-2}$ ($2.977\times 10^{-2}$) & $-4.403\times 10^{-3}$ ($9.741\times 10^{-2}$) & \textbf{1.000 (0.000)} & 0.358 (0.000) \\
\cline{4-10}
& & & \multirow{2}{*}{SPS} & RTS & $-6.611\times 10^{-4} \ (0.222)$ & $1.109\times 10^{-2}$ ($2.831\times 10^{-2}$) & $-3.673\times 10^{-3}$ ($4.725\times 10^{-2}$) & 1.000 (0.000) & 0.330 (0.000) \\
& & & & STS& $-6.611\times 10^{-4} \ (0.222)$ & $6.686\times 10^{-4}$ ($2.892\times 10^{-2}$) & $-3.637\times 10^{-3}$ ($4.421\times 10^{-2}$) & 1.000 (0.000) & 0.358 (0.000) \\
\cline{2-10}
& \multirow{8}{*}{Small} & \multirow{4}{*}{High} & \multirow{2}{*}{DPS} & RTS & $\mathbf{-2.063\times 10^{-2} \ (0.329)}$ & 0.197 (0.109) & $\mathbf{-3.654\times 10^{-3} \ (0.530)}$ & \textbf{0.728 (0.000)} & 0.431 (0.000) \\
& & & & STS& $\mathbf{-2.063\times 10^{-2} \ (0.329)}$ & $\mathbf{-0.116 (0.138)}$ & $6.777\times 10^{-3}$ (0.508) & 0.720 (0.000) & \textbf{0.351 (0.000)} \\
\cline{4-10}
& & & \multirow{2}{*}{SPS} & RTS &$-2.063\times 10^{-2} \ (0.329)$ & 0.191 (0.104) & $7.801\times 10^{-4}$ (0.293) & 0.715 (0.000) & 0.431 (0.000) \\
& & & & STS& $-2.063\times 10^{-2} \ (0.329)$ & $1.616\times 10^{-2}$ (0.127) & $-1.001\times 10^{-3}$ (0.258) & 0.921 (0.000) & 0.351 (0.000) \\
\cline{3-10}
& & \multirow{4}{*}{Small} & \multirow{2}{*}{DPS} & RTS & $\mathbf{-2.607\times 10^{-3} \ (0.317)}$ & 0.191 (0.108) & $\mathbf{-1.364\times 10^{-3} \ (0.351)}$ & 0.733 (0.000) & 0.433 (0.000) \\
& & & & STS& $\mathbf{-2.607\times 10^{-3} \ (0.317)}$ & $\mathbf{-1.191\times 10^{-2} \ (0.131)}$ & $-8.110\times 10^{-3} \ (0.307)$ & $\mathbf{0.889 \ (0.000)}$ & $\mathbf{0.342 \ (0.000)}$ \\
\cline{4-10}
& & & \multirow{2}{*}{SPS} & RTS & $-2.607\times 10^{-3} \ (0.317)$ & 0.192 (0.109) & $-7.094\times 10^{-3}$ (0.296) & 0.729 (0.000) & 0.433 (0.000) \\
& & & & STS& $-2.607\times 10^{-3} \ (0.317)$ & $1.612\times 10^{-2}$ (0.132) & $-8.981\times 10^{-3}$ (0.262) & 0.915 (0.000) & 0.342 (0.000) \\
\hline
\end{tabular}
}
\end{sidewaystable*}

\clearpage
\newpage

\begin{sidewaystable*}
	\caption{Results from the simulation study in Section \ref{sec:3.1} when the number of harmonics design factor is set to $K=3$. Each estimated quantity is shown with its standard deviation in parentheses. Our refined two-stage (RTS) method consistently mitigates attenuation bias and yields higher power for the zero amplitudes test than the standard two-stage (STS) method, except under low phase-shift variability, small sample sizes, and low signal-to-noise ratios. For the ``Dataset'' column, For the ``Dataset'' column, ``DPS'' denotes a dataset generated to have ``different phase-shift'' parameters across individuals, and ``SPS'' denotes a dataset generated to have ``same phase-shift'' parameters for every individual. Bolded entries for the DPS datasets indicate better performance.} \label{tab:sim1_K3}
 \centering
\resizebox{1.0\textwidth}{!}{
  \begin{tabular}{|c|c|c|c|c|c|c|c|c|c|c|c|c|c|}
			\hline
   Signal-to-Noise Ratio & Sample Size & Phase-Shift Variability & 
    Dataset& Method  &  $\hat{\beta}^{(0)}_0-\beta^{(0)}_0$ & $\hat{\beta}^{(0)}_1-\beta^{(0)}_1$ & $\mathrm{atan2}\{\sin(\hat{\beta}^{(0)}_2-\beta^{(0)}_2), \cos(\hat{\beta}^{(0)}_2-\beta^{(0)}_2)\}$ & $\hat{\beta}^{(0)}_3-\beta^{(0)}_3$ & $\mathrm{atan2}\{\sin(\hat{\beta}^{(0)}_4-\beta^{(0)}_4), \cos(\hat{\beta}^{(0)}_4-\beta^{(0)}_4)\}$ & $\hat{\beta}^{(0)}_5-\beta^{(0)}_5$ & $\mathrm{atan2}\{\sin(\hat{\beta}^{(0)}_6-\beta^{(0)}_6), \cos(\hat{\beta}^{(0)}_6-\beta^{(0)}_6)\}$ & $\mathrm{AUC}_{\mathrm{SP}}$ & $\mathrm{AUC}_{\mathrm{T1E}}$ \\
   \hline
   
\multirow{16}{*}{High} & \multirow{8}{*}{Large} & \multirow{4}{*}{High} & \multirow{2}{*}{DPS} & RTS & $\mathbf{-1.513\times 10^{-2} \ (0.229)}$ & $\mathbf{5.184\times 10^{-3} \ (5.571\times 10^{-2})}$ & $3.435\times 10^{-3} \ (0.185)$ & $\mathbf{6.452\times 10^{-3} \ (5.728\times 10^{-2})}$ & $-4.359\times 10^{-3} \ (0.190)$ & $\mathbf{5.017\times 10^{-3} \ (5.878\times 10^{-2})}$ & $\mathbf{-1.437\times 10^{-3} \ (0.196)}$ & \textbf{1.000 (0.000)} & \textbf{0.311 (0.000)} \\
& & & & STS & $\mathbf{-1.513\times 10^{-2} \ (0.229)}$ & -0.431 (0.143) & $\mathbf{3.321\times 10^{-3} \ (0.188)}$ & -0.432 (0.145) & $\mathbf{-3.376\times 10^{-3} \ (0.192)}$ & -0.426 (0.143) & $-2.498\times 10^{-3}$ (0.200) & 0.973 (0.000) & 0.329 (0.000) \\
\cline{5-14}
& & & \multirow{2}{*}{SPS} & RTS& $-1.513\times 10^{-2} \ (0.229)$ & $5.483\times 10^{-3}$ ($5.627\times 10^{-2}$) & $-9.555\times 10^{-5}$ ($1.560\times 10^{-2}$) & $5.647\times 10^{-3}$ ($5.607\times 10^{-2}$) & $3.467\times 10^{-4}$ ($1.560\times 10^{-2}$) & $3.646\times 10^{-3}$ ($5.862\times 10^{-2}$) & $7.262\times 10^{-5}$ ($1.647\times 10^{-2}$) & 1.000 (0.000) & 0.311 (0.000) \\
& & & & STS & $-1.513\times 10^{-2} \ (0.229)$ & $2.149\times 10^{-3}$ ($5.641\times 10^{-2}$) & $-1.908\times 10^{-4}$ ($1.505\times 10^{-2}$) & $2.259\times 10^{-3}$ ($5.620\times 10^{-2}$) & $3.342\times 10^{-4}$ ($1.499\times 10^{-2}$) & $2.190\times 10^{-4}$ ($5.874\times 10^{-2}$) & $6.785\times 10^{-5}$ ($1.582\times 10^{-2}$) & 1.000 (0.000) & 0.329 (0.000) \\
\cline{3-14}

& & \multirow{4}{*}{Low} & \multirow{2}{*}{DPS} & RTS & $\mathbf{-3.075\times 10^{-3} \ (0.221)}$ & $\mathbf{3.556\times 10^{-3} \ (5.897\times 10^{-2})}$ & $\mathbf{-1.157\times 10^{-3} \ (8.124\times 10^{-2})}$ & $\mathbf{4.653\times 10^{-3} \ (5.538\times 10^{-2})}$ & $3.895\times 10^{-3} \ (8.290\times 10^{-2})$ & $\mathbf{3.536\times 10^{-3} \ (5.456\times 10^{-2})}$ & $\mathbf{-4.876\times 10^{-3} \ (8.687\times 10^{-2})}$ & \textbf{1.000 (0.000)} & \textbf{0.293 (0.000)} \\
& & & & STS & $\mathbf{-3.075\times 10^{-3} \ (0.221)}$ & $-9.156\times 10^{-2} \ (6.396\times 10^{-2})$ & $-1.308\times 10^{-3} \ (8.216\times 10^{-2})$ & $-9.169\times 10^{-2} \ (6.026\times 10^{-2})$ & $\mathbf{3.711\times 10^{-3} \ (8.361\times 10^{-2})}$ & $-9.280\times 10^{-2} \ (5.811\times 10^{-2})$ & $-5.047\times 10^{-3} \ (8.848\times 10^{-2})$ & \textbf{1.000 (0.000)} & 0.339 (0.000) \\
\cline{5-14}
& & & \multirow{2}{*}{SPS} & RTS& $-3.075\times 10^{-3} \ (0.221)$ & $3.168\times 10^{-3} \ (5.915\times 10^{-2})$ & $-7.078\times 10^{-4} \ (1.572\times 10^{-2})$ & $4.398\times 10^{-3} \ (5.560\times 10^{-2})$ & $3.829\times 10^{-4} \ (1.509\times 10^{-2})$ & $3.398\times 10^{-3} \ (5.481\times 10^{-2})$ & $.2034\times 10^{-4} \ (1.588\times 10^{-2})$ & 1.000 (0.000) & 0.293 (0.000) \\
& & & & STS & $-3.075\times 10^{-3} \ (0.221)$ & $-2.348\times 10^{-4} \ (5.926\times 10^{-2})$ & $-7.366\times 10^{-4}$ ($1.531\times 10^{-2}$) & $9.870\times 10^{-4}$ ($5.573\times 10^{-2}$) & $4.450\times 10^{-4}$ ($1.459\times 10^{-2}$) & $3.888\times 10^{-6}$ ($5.489\times 10^{-2}$) & $7.729\times 10^{-5}$ ($1.505\times 10^{-2}$) & 1.000 (0.000) & 0.339 (0.000) \\
\cline{2-14}

 & \multirow{8}{*}{Small} & \multirow{4}{*}{High} & \multirow{2}{*}{DPS} & RTS & $\mathbf{1.478 \times 10^{-2} \ (0.326)}$ & $\mathbf{5.896 \times 10^{-2} \ (0.151)}$ & $\mathbf{-1.605 \times 10^{-2} \ (0.300)}$ & $\mathbf{6.399 \times 10^{-2} \ (0.146)}$ & $\mathbf{-2.790 \times 10^{-3} \ (.302)}$ & $\mathbf{6.434 \times 10^{-2} \ (0.152)}$ & $1.098 \times 10^{-2} \ (0.298)$ & \textbf{0.999 (0.000)} & 0.452 (0.000) \\
& & & & STS & $\mathbf{1.478 \times 10^{-2} \ (0.326)}$ & $-0.421 \ (0.232)$ & $-1.889 \times 10^{-2}$ ($0.0312$) & -0.403 (0.233) & $-4.035 \times 10^{-3}$ (0.313) & -0.404 (0.231) & $\mathbf{9.741 \times 10^{-3} \ (0.309)}$ & 0.949 (0.000) & \textbf{0.309 (0.000)} \\
\cline{5-14}
& & & \multirow{2}{*}{SPS} & RTS& $1.478 \times 10^{-2} \ (0.326)$ & $6.131 \times 10^{-2}$ (0.148) & $-1.077 \times 10^{-3}$ (0.107) & $6.942 \times 10^{-2}$ (0.149) & $4.230 \times 10^{-4}$ (0.101) & $6.605 \times 10^{-2}$ (0.151) & $1.952 \times 10^{-3}$ (0.105) & 0.999 (0.000) & 0.452 (0.000) \\
& & & & STS & $1.478 \times 10^{-2} \ (0.326)$ & $4.985 \times 10^{-4}$ ($0.154$) & $-2.286 \times 10^{-3}$ ($9.873 \times 10^{-2}$) & $7.914 \times 10^{-3}$ (0.154) & $-5.456 \times 10^{-4}$ ($9.522 \times 10^{-2}$) & $3.637 \times 10^{-3}$ (0.155) & $1.615 \times 10^{-3}$ ($9.636 \times 10^{-2}$) & 1.000 (0.000) & 0.309 (0.000) \\

\cline{3-14}
& & \multirow{4}{*}{Low} & \multirow{2}{*}{DPS} & RTS & $\mathbf{1.185 \times 10^{-3} \ (0.327)}$ & $\mathbf{6.770 \times 10^{-2} \ (0.146)}$ & $-2.967 \times 10^{-3} \ (0.155)$ & $\mathbf{6.991 \times 10^{-2} \ (0.154)}$ & $2.627 \times 10^{-3} \ (0.151)$ & $\mathbf{6.290 \times 10^{-2} \ (0.150)}$ & $\mathbf{1.194 \times 10^{-4} \ (0.157)}$ & 0.999 (0.000) & 0.444 (0.000) \\
& & & & STS & $\mathbf{1.185 \times 10^{-3} \ (0.327)}$ & $-8.120 \times 10^{-2}$ (0.160) & $\mathbf{-2.873 \times 10^{-3} \ (0.156)}$ & $-7.900 \times 10^{-2}$ ($0.160$) & $\mathbf{2.185 \times 10^{-3} \ (0.149)}$ & $-8.420 \times 10^{-2}$ ($0.158$) & $7.770 \times 10^{-4}$ ($0.155$) & \textbf{1.000 (0.000)} & \textbf{0.332 (0.000)} \\
\cline{5-14}
& & & \multirow{2}{*}{SPS} & RTS& $1.185 \times 10^{-3}$ ($0.327$) & $6.743 \times 10^{-2}$ ($0.149$) & $1.697 \times 10^{-3}$ ($9.737 \times 10^{-2}$) & $6.827 \times 10^{-2}$ ($0.155$) & $2.144 \times 10^{-3}$ ($9.793 \times 10^{-2}$) & $6.203 \times 10^{-2}$ ($0.152$) & $-4.773 \times 10^{-4}$ ($0.102$) & $0.999$ (0.000) & $0.444$ (0.000) \\
& & & & STS & 1.185 $\times 10^{-3}$ ($0.327$) & $5.613 \times 10^{-3}$ ($0.154$) & $1.631 \times 10^{-3}$ ($9.083 \times 10^{-2}$) & $7.352 \times 10^{-3}$ ($0.158$) & $2.073 \times 10^{-3}$ ($9.103 \times 10^{-2}$) & $1.468 \times 10^{-3}$ ($0.156$) & $-1.43 \times 10^{-3}$ ($9.26 \times 10^{-2}$) & $1.000$ (0.000) & $0.332$ (0.000) \\

\hline
\multirow{16}{*}{Low} & \multirow{8}{*}{Large} & \multirow{4}{*}{High} & \multirow{2}{*}{DPS} & RTS & $\mathbf{-1.032 \times 10^{-3} \ (0.226)}$ & $\mathbf{9.954 \times 10^{-3} \ (3.023 \times 10^{-2})}$ & $\mathbf{-8.437 \times 10^{-3} \ (0.2)}$ & $\mathbf{1.130 \times 10^{-2} \ (2.739 \times 10^{-2})}$ & $\mathbf{-7.765 \times 10^{-3} \ (0.197)}$ & $\mathbf{1.108 \times 10^{-2} \ (2.837 \times 10^{-2})}$ & $\mathbf{1.028 \times 10^{-3} \ (0.199)}$ & \textbf{1.000 (0.000)} & \textbf{0.286 (0.000)} \\
& & & & STS & $\mathbf{-1.032 \times 10^{-3} \ (0.226)}$ & $-0.144$ ($5.156 \times 10^{-2}$) & $-1.042 \times 10^{-2}$ ($0.205$) & $-0.146$ ($5.056 \times 10^{-2}$) & $-9.871 \times 10^{-3}$ ($0.202$) & $-0.144$ ($5.24 \times 10^{-2}$) & $2.907 \times 10^{-3}$ ($0.202$) & $0.994$ (0.000) & $0.335$ (0.000) \\
\cline{5-14}
& & & \multirow{2}{*}{SPS} & RTS& $-1.032 \times 10^{-3}$ ($0.226$) & $1.045 \times 10^{-2}$ ($3.012 \times 10^{-2}$) & $-2.315 \times 10^{-3}$ ($4.812 \times 10^{-2}$) & $1.166 \times 10^{-2}$ ($2.763 \times 10^{-2}$) & $1.295 \times 10^{-3}$ ($4.979 \times 10^{-2}$) & $1.076 \times 10^{-2}$ ($2.867 \times 10^{-2}$) & $2.458 \times 10^{-3}$ ($5.034 \times 10^{-2}$) & $1.000$ (0.000) & $0.286$ (0.000) \\
& & & & STS & $-1.032 \times 10^{-3}$ ($0.226$) & $3.656 \times 10^{-5}$ ($3.05 \times 10^{-2}$) & $-2.138 \times 10^{-3}$ ($4.477 \times 10^{-2}$) & $1.327 \times 10^{-3}$ ($2.808 \times 10^{-2}$) & $1.182 \times 10^{-3}$ ($4.587 \times 10^{-2}$) & $5.013 \times 10^{-4}$ ($2.912 \times 10^{-2}$) & $2.019 \times 10^{-3}$ ($4.703 \times 10^{-2}$) & $1.000$ (0.000) & $0.335$ (0.000) \\

\cline{3-14}
& & \multirow{4}{*}{Low} & \multirow{2}{*}{DPS} & RTS & $\mathbf{1.629 \times 10^{-2} (0.226)}$ & $\mathbf{9.889 \times 10^{-3} (2.9 \times 10^{-2})}$ & $7.313 \times 10^{-4} \ (9.628 \times 10^{-2})$ & $\mathbf{1.181 \times 10^{-2} \ (2.833 \times 10^{-2})}$ & $-1.808 \times 10^{-3} \ (9.71 \times 10^{-2})$ & $\mathbf{9.748 \times 10^{-3} \ (2.921 \times 10^{-2})}$ & $1.024 \times 10^{-3} \ (9.560 \times 10^{-2})$ & \textbf{1.000 (0.000)} & \textbf{0.309 (0.000)} \\
& & & & STS & $\mathbf{1.629 \times 10^{-2} \ (0.226)}$ & $-3.026 \times 10^{-2}$ ($3.029 \times 10^{-2}$) & $\mathbf{-6.150 \times 10^{-4} \ (9.623 \times 10^{-2})}$ & $-2.987 \times 10^{-2}$ ($2.944 \times 10^{-2}$) & $\mathbf{-7.974 \times 10^{-4} \ (9.857 \times 10^{-2})}$ & $-3.115 \times 10^{-2}$ ($3.051 \times 10^{-2}$) & $\mathbf{8.018 \times 10^{-4} \ (9.577 \times 10^{-2})}$ & \textbf{1.000 (0.000)} & $0.341$ (0.000) \\
\cline{5-14}
& & & \multirow{2}{*}{SPS} & RTS& $1.629 \times 10^{-2}$ ($0.226$) & $1.035 \times 10^{-2}$ ($2.918 \times 10^{-2}$) & $1.988 \times 10^{-3}$ ($4.874 \times 10^{-2}$) & $1.135 \times 10^{-2}$ ($2.806 \times 10^{-2}$) & $-1.072 \times 10^{-3}$ ($4.861 \times 10^{-2}$) & $9.881 \times 10^{-3}$ ($2.906 \times 10^{-2}$) & $-1.402 \times 10^{-3}$ ($4.81 \times 10^{-2}$) & $1.000$ (0.000) & $0.309$ (0.000) \\
& & & & STS & $1.629 \times 10^{-2}$ ($0.226$) & $1.774 \times 10^{-4}$ ($2.984 \times 10^{-2}$) & $1.449 \times 10^{-3}$ ($4.514 \times 10^{-2}$) & $1.059 \times 10^{-3}$ ($2.84 \times 10^{-2}$) & $-7.82 \times 10^{-4}$ ($4.512 \times 10^{-2}$) & $-5.479 \times 10^{-4}$ ($2.968 \times 10^{-2}$) & $-8.682 \times 10^{-4}$ ($4.535 \times 10^{-2}$) & $1.000$ (0.000) & $0.341$ (0.000) \\

\cline{2-14}
& \multirow{8}{*}{Small} & \multirow{4}{*}{High} & \multirow{2}{*}{DPS} & RTS & $\mathbf{2.565 \times 10^{-3} \ (0.0329)}$ & $0.196 \ (0.109)$ & $4.111 \times 10^{-2} \ (0.546)$ & $0.195 \ (0.106)$ & $\mathbf{2.274 \times 10^{-2} \ (0.551)}$ & $0.192 \ (0.109)$ & $2.041 \times 10^{-3} \ (0.527)$ & $\mathbf{0.781 \ (0.000)}$ & $0.461 \ (0.000)$ \\
& & & & STS & $\mathbf{2.565 \times 10^{-3} \ (0.329)}$ & $\mathbf{-0.105 \ (0.141)}$ & $\mathbf{2.554 \times 10^{-2} \ (0.526)}$ & $\mathbf{-0.112 \ (0.136)}$ & $2.373 \times 10^{-2}$ ($0.542$) & $\mathbf{-0.111 \ (0.136)}$ & $\mathbf{1.618 \times 10^{-3} \ (0.499)}$ & $0.683$ (0.000) & \textbf{0.313 (0.000)} \\
\cline{5-14}
& & & \multirow{2}{*}{SPS} & RTS& $2.565 \times 10^{-3}$ ($0.329$) & $0.199$ ($0.113$) & $1.539 \times 10^{-2}$ ($0.31$) & $0.193$ ($0.107$) & $1.126 \times 10^{-2}$ ($0.311$) & $0.194$ ($0.107$) & $-9.906 \times 10^{-3}$ ($0.311$) & $0.783$ (0.000) & $0.461$ (0.000) \\
& & & & STS & $2.565 \times 10^{-3}$ ($0.329$) & $2.035 \times 10^{-2}$ ($0.136$) & $1.312 \times 10^{-2}$ ($0.272$) & $1.779 \times 10^{-2}$ ($0.13$) & $9.64 \times 10^{-3}$ ($0.277$) & $1.88 \times 10^{-2}$ ($0.129$) & $-6.937 \times 10^{-3}$ ($0.275$) & $0.89$ (0.000) & $0.313$ (0.000) \\

\cline{3-14}
& & \multirow{4}{*}{Low} & \multirow{2}{*}{DPS} & RTS & $\mathbf{-1.452 \times 10^{-2} \ (0.329)}$ & $0.195 \ (0.110)$ & $0.221 \times 10^{-2} \ (0.358)$ & $0.195 \ (0.109)$ & $\mathbf{1.140 \times 10^{-2} \ (0.366)}$ & $0.190 \ (0.106)$ & $8.885 \times 10^{-3} \ (0.363)$ & $0.772 \ (0.000)$ & $0.453 \ (0.000)$ \\
& & & & STS & $\mathbf{-1.452 \times 10^{-2} \ (0.329)}$ & $\mathbf{-6.978 \times 10^{-3} \ (0.131)}$ & $\mathbf{1.250 \times 10^{-2} \ (0.306)}$ & $\mathbf{-7.975 \times 10^{-3} \ (0.130)}$ & $1.203 \times 10^{-2}$ ($0.322$) & $\mathbf{-1.099 \times 10^{-2} \ (0.128)}$ & $\mathbf{2.028 \times 10^{-3} \ (0.320)}$ & \textbf{0.853 (0.000)} & \textbf{0.317 (0.000)} \\
\cline{5-14}
& & & \multirow{2}{*}{SPS} & RTS& $-1.452 \times 10^{-2}$ ($0.329$) & $0.196$ ($0.11$) & $1.359 \times 10^{-2}$ ($0.307$) & $0.195$ ($0.109$) & $1.502 \times 10^{-2}$ ($0.325$) & $0.192$ ($0.107$) & $9.254 \times 10^{-3}$ ($0.323$) & $0.78$ (0.000) & $0.453$ (0.000) \\
& & & & STS & $-1.452 \times 10^{-2}$ ($0.329$) & $2.067 \times 10^{-2}$ ($0.131$) & $1.100 \times 10^{-2}$ ($0.265$) & $1.994 \times 10^{-2}$ ($0.131$) & $1.400 \times 10^{-2}$ ($0.286$) & $1.776 \times 10^{-2}$ ($0.129$) & $7.321 \times 10^{-3}$ ($0.279$) & $0.894$ (0.000) & $0.317$ (0.000) \\
\hline
\end{tabular}
}
\end{sidewaystable*}

\clearpage
\newpage

\begin{figure}[!h]
\center
\includegraphics[scale=0.2]{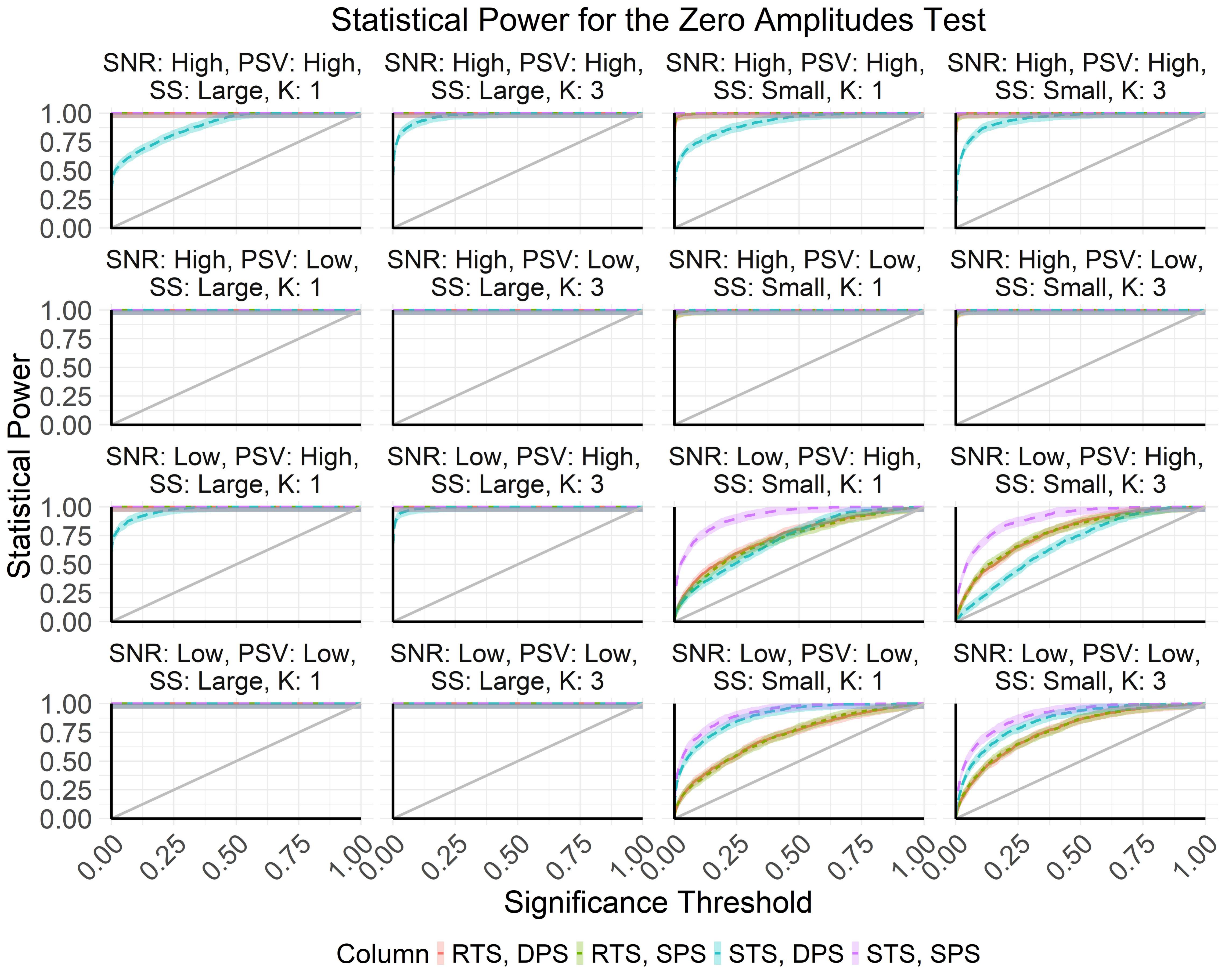}
    \caption{Empirical power curves used to compute $\mathrm{AUC}_{\mathrm{SP}}$ for the zero amplitudes test. Our RTS method generally outperforms STS across all significance thresholds, except in settings with low phase-shift variability, small sample sizes, and low signal-to-noise ratios. Here, ``DPS'' denotes curves computed from datasets generated to have ``different phase-shift'' parameters across individuals, and ``SPS'' denotes curves computed from datasets generated to have ``same phase-shift'' parameters for every individual.} \label{fig:power_31}
\end{figure}

\clearpage
\newpage

\begin{figure}[!h]
\center
\includegraphics[scale=0.2]{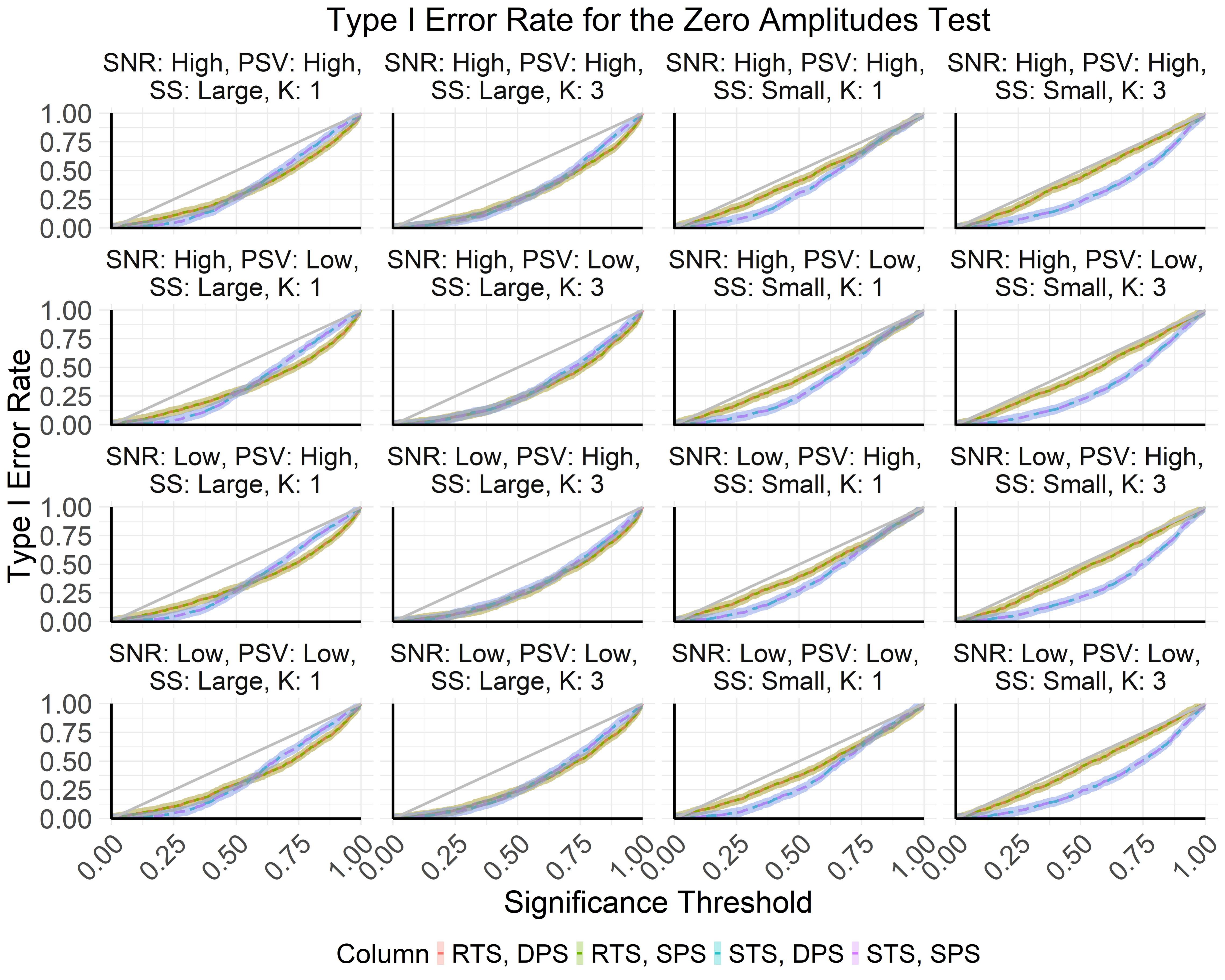}
    \caption{Empirical type I error curves used to compute $\mathrm{AUC}_{\mathrm{T1E}}$ for the zero amplitudes test. Both RTS and STS methods are conservative in rejecting the null hypothesis when it is true. Here, ``DPS'' denotes curves computed from datasets generated to have ``different phase-shift'' parameters across individuals, and ``SPS'' denotes curves computed from datasets generated to have ``same phase-shift'' parameters for every individual.} \label{fig:typeI_31}
\end{figure}

\clearpage
\newpage

\begin{table}
	\caption{Results from the simulation study in Section \ref{sec:3.2}. Our RTS method consistently yields higher statistical power for the equal midlines test and the equal rhythms test when compared to the STS method, except in simulation settings with small sample sizes and low signal-to-noise ratios. For the ``Dataset'' column, ``DPS'' denotes a dataset generated to have ``different phase-shift'' parameters across individuals, and ``SPS'' denotes a dataset generated to have ``same phase-shift'' parameters for every individual. Bolded entries for the DPS datasets indicate better performance.} \label{tab:sim2}
 \centering
\resizebox{1.0\textwidth}{!}{
  \begin{tabular}{|c|c|c|c|c|c|c|c|c|c|}
			\hline
   Signal-to-Noise Ratio & Sample Size & Phase-Shift Variability & Number of Harmonics &
    Dataset& Method  & Statistical Power (Equal Midlines) & Statistical Power (Equal Rhythms) & Type I Error Rate (Equal Midlines) & Type I Error Rate (Equal Rhythms) \\
   \hline
   
\multirow{32}{*}{High} & \multirow{16}{*}{Large} & \multirow{8}{*}{High} & \multirow{4}{*}{$K=1$} & \multirow{2}{*}{DPS} & RTS & $\mathbf{0.969 \ (2.840 \times 10^{-3})}$ & $\mathbf{1.000 \ (3.553 \times 10^{-6})}$ & $\mathbf{0.488 \ (9.227 \times 10^{-3})}$ & $\mathbf{0.496 \ (9.084 \times 10^{-3})}$ \\
& & & & & STS & $\mathbf{0.969 \ (2.840 \times 10^{-3})}$ & $0.968$ ($2.934 \times 10^{-3}$) & $\mathbf{0.488 \ (9.227 \times 10^{-3})}$ & $0.724$ ($8.326 \times 10^{-3}$) \\
\cline{5-10}
& & & & \multirow{2}{*}{SPS} & RTS& $0.969$ ($2.839 \times 10^{-3}$) & $1.000$ (0.000) & $0.488$ ($9.227 \times 10^{-3}$) & $0.449$ ($8.707 \times 10^{-3}$) \\
& & & & & STS & $0.969$ ($2.839\times 10^{-3}$) & $1.000$ (0.000) & $0.488$ ($9.227\times 10^{-3}$) & $0.446$ ($8.680\times 10^{-3}$) \\
\cline{4-10}

& & & \multirow{4}{*}{$K=3$} & \multirow{2}{*}{DPS} & RTS & $\mathbf{0.969 \ (2.676 \times 10^{-3})}$ & $\mathbf{1.000 \ (0.000)}$ & $\mathbf{0.508 \ (9.261 \times 10^{-3})}$ & $\mathbf{0.503 \ (8.848 \times 10^{-3})}$ \\
& & & & & STS & $\mathbf{0.969 \ (2.676 \times 10^{-3})}$ & $\mathbf{1.000 \ (1.520 \times 10^{-5})}$ & $\mathbf{0.508 \ (9.261 \times 10^{-3})}$ & $0.930 \ (4.088 \times 10^{-3})$ \\
\cline{5-10}
& & & & \multirow{2}{*}{SPS} & RTS& $0.968$ ($2.682 \times 10^{-3}$) & $1.000$ (0.000) & $0.508$ ($9.261 \times 10^{-3}$) & $0.364$ ($7.822 \times 10^{-3}$) \\
& & & & & STS & $0.968$ ($2.682 \times 10^{-3}$) & $1.000$ (0.000) & $0.508$ ($9.261 \times 10^{-3}$) & $0.523$ ($9.21 \times 10^{-3}$) \\

\cline{3-10}
& & \multirow{8}{*}{Low} & \multirow{4}{*}{$K=1$} & \multirow{2}{*}{DPS} & RTS & $\mathbf{0.974 \ (2.101 \times 10^{-3})}$ & \textbf{1.000 (0.000)} & $\mathbf{0.508 \ (9.294 \times 10^{-3})}$ & $\mathbf{0513 \ (9.078 \times 10^{-3})}$ \\
& & & & & STS & $\mathbf{0.974 \ (2.101 \times 10^{-3})}$ & $\mathbf{1.000 \ (0.000)}$ & $\mathbf{0.508 \ (9.294 \times 10^{-3})}$ & $0.559$ ($9.233 \times 10^{-3}$) \\
\cline{5-10}
& & & & \multirow{2}{*}{SPS} & RTS& $0.974$ ($2.103 \times 10^{-3}$) & $1.000$ (0.000) & $0.508 \ (9.289 \times 10^{-3})$ & $0.464$ ($9.065 \times 10^{-3}$) \\
& & & & & STS & $0.9741$ ($2.103 \times 10^{-3}$) & $1.000$ (0.000) & $0.508$ ($9.289 \times 10^{-3}$) & $0.463$ ($9.091 \times 10^{-3}$) \\

\cline{4-10}
& & & \multirow{4}{*}{$K=3$} & \multirow{2}{*}{DPS} & RTS & $\mathbf{0.971 \ (2.732 \times 10^{-3})}$ & $\mathbf{1.000 \ (0.000)}$ & $\mathbf{0.494 \ (9.244 \times 10^{-3})}$ & $\mathbf{0.493 \ (8.709 \times 10^{-3})}$ \\
& & & & & STS & $\mathbf{0.971 \ (2.732 \times 10^{-3})}$ & $\mathbf{1.000 \ (0.000)}$ & $\mathbf{0.494 \ (9.244 \times 10^{-3})}$ & $0.727$ ($8.412 \times 10^{-3}$) \\
\cline{5-10}
& & & & \multirow{2}{*}{SPS} & RTS& $0.971$ ($2.731 \times 10^{-3}$) & $1.000$ (0.000) & $0.494$ ($9.244 \times 10^{-3}$) & $0.352$ ($7.75 \times 10^{-3}$) \\
& & & & & STS & $0.971$ ($2.731 \times 10^{-3}$) & $1.000$ (0.000) & $0.494$ ($9.244 \times 10^{-3}$) & $0.511$ ($9.229 \times 10^{-3}$) \\
\cline{2-10}

& \multirow{16}{*}{Small} & \multirow{8}{*}{High} & \multirow{4}{*}{$K=1$} & \multirow{2}{*}{DPS} & RTS & $\mathbf{0.882 \ (5.929 \times 10^{-3})}$ & $\mathbf{0.938 \ (3.569 \times 10^{-3})}$ & $\mathbf{0.487 \ (9.154 \times 10^{-3})}$ & $\mathbf{0.474 \ (9.078 \times 10^{-3})}$ \\
& & & & & STS & $\mathbf{0.882 \ (5.929 \times 10^{-3})}$ & $0.818$ ($7.375 \times 10^{-3}$) & $\mathbf{0.487 \ (9.154 \times 10^{-3})}$ & $0.550$ ($9.047 \times 10^{-3}$) \\
\cline{5-10}
& & & & \multirow{2}{*}{SPS} & RTS& $0.882$ ($5.926 \times 10^{-3}$) & $0.991$ ($9.315 \times 10^{-4}$) & $0.486$ ($9.154 \times 10^{-3}$) & $0.415$ ($8.523 \times 10^{-3}$) \\
& & & & & STS & $0.882$ ($5.926 \times 10^{-3}$) & $0.997$ ($3.401 \times 10^{-4}$) & $0.486$ ($9.154 \times 10^{-3}$) & $0.407$ ($8.636 \times 10^{-3}$) \\
\cline{4-10}

& & & \multirow{4}{*}{$K=3$} & \multirow{2}{*}{DPS} & RTS & $\mathbf{0.876 \ (5.744 \times 10^{-3})}$ & $\mathbf{0.979 \ (1.274 \times 10^{-3})}$ & $\mathbf{0.489 \ (8.908 \times 10^{-3})}$ & $\mathbf{0.389 \ (8.378 \times 10^{-3})}$ \\
& & & & & STS & $\mathbf{0.876 \ (5.744 \times 10^{-3})}$ & $0.967$ ($2.449 \times 10^{-3}$) & $\mathbf{0.489 \ (8.908 \times 10^{-3})}$ & $0.704$ ($8.097 \times 10^{-3}$) \\
\cline{5-10}
&  & & & \multirow{2}{*}{SPS} & RTS& $0.876$ ($5.737 \times 10^{-3}$) & $0.998$ ($2.367 \times 10^{-4}$) & $0.489$ ($8.904 \times 10^{-3}$) & $0.256$ ($6.909 \times 10^{-3}$) \\
&  & & & & STS & $0.876$ ($5.737 \times 10^{-3}$) & $1.000$ ($2.155 \times 10^{-5}$) & $0.489$ ($8.904 \times 10^{-3}$) & $0.407$ ($8.803 \times 10^{-3}$) \\
\cline{3-10}

& & \multirow{8}{*}{Low} & \multirow{4}{*}{$K=1$} & \multirow{2}{*}{DPS} & RTS & $\mathbf{0.873 \ (5.895 \times 10^{-3})}$ & $0.981 \ (1.695 \times 10^{-3})$ & $\mathbf{0.496 \ (8.985 \times 10^{-3})}$ & $0.451 \ (8.649 \times 10^{-3})$ \\
& & & & & STS & $\mathbf{0.873 \ (5.895 \times 10^{-3})}$  & $\mathbf{0.984 \ (1.243 \times 10^{-3})}$ & $\mathbf{0.496 \ (8.985 \times 10^{-3})}$ & $\mathbf{0.450 \ (8.712 \times 10^{-3})}$ \\
\cline{5-10}
& & & & \multirow{2}{*}{SPS} & RTS& $0.873$ ($5.893 \times 10^{-3}$) & $0.992$ ($1.044 \times 10^{-3}$) & $0.496$ ($8.987 \times 10^{-3}$) & $0.419$ ($8.532 \times 10^{-3}$) \\
& & & & & STS & $0.873$ ($5.893 \times 10^{-3}$) & $0.997$ ($3.069 \times 10^{-4}$) & $0.496$ ($8.987 \times 10^{-3}$) & $0.411$ ($8.403 \times 10^{-3}$) \\
\cline{4-10}

& & & \multirow{4}{*}{$K=3$} & \multirow{2}{*}{DPS} & RTS & $\mathbf{0.864 \ (6.386 \times 10^{-3})}$ & $0.996 \ (3.445 \times 10^{-4})$ & $\mathbf{0.495 \ (8.995 \times 10^{-3})}$ & $\mathbf{0.306 \ (7.21 \times 10^{-3})}$ \\
& & & & & STS & $\mathbf{0.864 \ (6.386 \times 10^{-3})}$ & $\mathbf{0.999 \ (9.745 \times 10^{-5})}$ & $\mathbf{0.495 \ (8.995 \times 10^{-3})}$ & $0.489$ ($8.855 \times 10^{-3}$) \\
\cline{5-10}
& & & & \multirow{2}{*}{SPS} & RTS& $0.864$ ($6.384 \times 10^{-3}$) & $0.998$ ($1.989 \times 10^{-4}$) & $0.495$ ($8.985 \times 10^{-3}$) & $0.249$ ($6.155 \times 10^{-3}$) \\
& & & & & STS & $0.864$ ($6.384 \times 10^{-3}$) & $1.000$ ($2.436 \times 10^{-5}$) & $0.495$ ($8.985 \times 10^{-3}$) & $0.399$ ($8.103 \times 10^{-3}$) \\
\hline

\multirow{32}{*}{Low} & \multirow{16}{*}{Large} & \multirow{8}{*}{High} & \multirow{4}{*}{$K=1$} & \multirow{2}{*}{DPS} & RTS & $\mathbf{1.000 \ (2.511 \times 10^{-5})}$ & $\mathbf{1.000 \ (1.270 \times 10^{-5})}$ & $\mathbf{0.503 \ (9.160 \times 10^{-3})}$ & $\mathbf{0.489 \ (9.037 \times 10^{-3})}$ \\
& & & & & STS & $\mathbf{1.000 \ (2.511 \times 10^{-5})}$ & $0.970$ ($2.797 \times 10^{-3}$) & $\mathbf{0.503 \ (9.160 \times 10^{-3})}$ & $0.665$ ($8.795 \times 10^{-3}$) \\
\cline{5-10}
& & & & \multirow{2}{*}{SPS} & RTS& $1.000$ ($2.511 \times 10^{-5}$) & $1.000$ ($2.508 \times 10^{-6}$) & $0.503$ ($9.164 \times 10^{-3}$) & $0.456$ ($8.935 \times 10^{-3}$) \\
& & & & & STS & $1.000$ ($2.511 \times 10^{-5}$) & $1.000$ (0.000) & $0.503$ ($9.164 \times 10^{-3}$) & $0.452$ ($8.95 \times 10^{-3}$) \\
\cline{4-10}

& & & \multirow{4}{*}{$K=3$} & \multirow{2}{*}{DPS} & RTS & $\mathbf{1.000 \ (1.654 \times 10^{-5})}$ & \textbf{1.000 (0.000)} & $\mathbf{0.511 \ (9.138 \times 10^{-3})}$ & $\mathbf{0.425 \ (8.396 \times 10^{-3})}$ \\
& & & & & STS & $\mathbf{1.000 \ (1.654 \times 10^{-5})}$ & $\mathbf{1.000 \ (3.215 \times 10^{-5})}$ & $\mathbf{0.511 \ (9.138 \times 10^{-3})}$ & $0.867$ ($5.772 \times 10^{-3}$) \\
\cline{5-10}
& & & & \multirow{2}{*}{SPS} & RTS& $1.000$ ($1.636 \times 10^{-5}$) & $1.000$ (0.000) & $0.511$ ($9.134 \times 10^{-3}$) & $0.278$ ($6.589 \times 10^{-3}$) \\
& & & & & STS & $1.000$ ($1.636 \times 10^{-5}$) & $1.000$ (0.000) & $0.511$ ($9.134 \times 10^{-3}$) & $0.435$ ($8.683 \times 10^{-3}$) \\

\cline{3-10}
& & \multirow{8}{*}{Low} & \multirow{4}{*}{$K=1$} & \multirow{2}{*}{DPS} & RTS & $\mathbf{1.000 \ (2.060 \times 10^{-5})}$ & $\mathbf{1.000 \ (2.505 \times 10^{-6}})$ & $\mathbf{0.479 \ (9.003 \times 10^{-3})}$ & $\mathbf{0.475 \ (8.837 \times 10^{-3})}$ \\
& & & & & STS & $\mathbf{1.000 \ (2.060 \times 10^{-5})}$ & $\mathbf{1.000 \ (2.508 \times 10^{-6})}$ & $\mathbf{0.479 \ (9.003 \times 10^{-3})}$ & $0.487$ ($9.032 \times 10^{-3}$) \\
\cline{5-10}
& & & & \multirow{2}{*}{SPS} & RTS& $1.000$ ($2.177 \times 10^{-5}$) & $1.000$ (0.000) & $0.479$ ($9.003 \times 10^{-3}$) & $0.446$ ($8.732 \times 10^{-3}$) \\
& & & & & STS & $1.000$ ($2.177 \times 10^{-5}$) & $1.000$ (0.000) & $0.479$ ($9.003 \times 10^{-3}$) & $0.44$ ($8.68 \times 10^{-3}$) \\

\cline{4-10}
& & & \multirow{4}{*}{$K=3$} & \multirow{2}{*}{DPS} & RTS & $\mathbf{1.000 \ (1.113 \times 10^{-5})}$ & $\mathbf{1.000 \ (0.000)}$ & $\mathbf{0.513 \ (9.257 \times 10^{-3})}$ & $\mathbf{0.371 \ (7.778 \times 10^{-3})}$ \\
& & & & & STS & $\mathbf{1.000 \ (1.113 \times 10^{-5})}$ & $\mathbf{1.000 \ (0.000)}$ & $\mathbf{0.513 \ (9.257 \times 10^{-3})}$ & $0.576$ ($9.129 \times 10^{-3}$) \\
\cline{5-10}
& & & & \multirow{2}{*}{SPS} & RTS& $1.000$ ($1.113\times 10^{-5}$) & $1.000$ (0.000) & $0.513$ ($9.258\times 10^{-3}$) & $0.292$ ($6.772\times 10^{-3}$) \\
& & & & & STS & $1.000$ ($1.113 \times 10^{-5}$) & $1.000$ (0.000) & $0.513$ ($9.258 \times 10^{-3}$) & $0.454$ ($8.812 \times 10^{-3}$) \\
\cline{2-10}

& \multirow{16}{*}{Small} & \multirow{8}{*}{High} & \multirow{4}{*}{$K=1$} & \multirow{2}{*}{DPS} & RTS & $\mathbf{0.995 \ (4.843 \times 10^{-4})}$ & $\mathbf{0.581 \ (8.616 \times 10^{-3})}$ & $\mathbf{0.492 \ (9.005 \times 10^{-3})}$ & $\mathbf{0.415 \ (8.356 \times 10^{-3})}$ \\
& & & & & STS & $\mathbf{0.995 \ (4.843 \times 10^{-4})}$ & $0.575$ ($9.163 \times 10^{-3}$) & $\mathbf{0.492 \ (9.005 \times 10^{-3})}$ & $0.449$ ($8.705 \times 10^{-3}$) \\
\cline{5-10}
& & & & \multirow{2}{*}{SPS} & RTS& $0.995$ ($4.817 \times 10^{-4}$) & $0.618$ ($8.657 \times 10^{-3}$) & $0.492$ ($9.007 \times 10^{-3}$) & $0.392$ ($8.33 \times 10^{-3}$) \\
& & & & & STS & $0.995$ ($4.817 \times 10^{-4}$) & $0.733$ ($7.981 \times 10^{-3}$) & $0.492$ ($9.007 \times 10^{-3}$) & $0.407$ ($8.417 \times 10^{-3}$) \\
\cline{4-10}

& & & \multirow{4}{*}{$K=3$} & \multirow{2}{*}{DPS} & RTS & $\mathbf{0.995 \ (5.683 \times 10^{-4})}$ & $0.574 \ (8.582 \times 10^{-3})$ & $\mathbf{0.498 \ (9.078 \times 10^{-3})}$ & $\mathbf{0.274 \ (7.016 \times 10^{-3})}$ \\
& & & & & STS & $\mathbf{0.995 \ (5.683 \times 10^{-4})}$ & $\mathbf{0.705 \ (8.065 \times 10^{-3})}$ & $\mathbf{0.498 \ (9.078 \times 10^{-3})}$ & $0.479$ ($8.971 \times 10^{-3}$) \\
\cline{5-10}
& & & & \multirow{2}{*}{SPS} & RTS& $0.995$ ($5.661 \times 10^{-4}$) & $0.611$ ($8.533 \times 10^{-3}$) & $0.498$ ($9.075 \times 10^{-3}$) & $0.21$ ($6.059 \times 10^{-3}$) \\
& & & & & STS & $0.995$ ($5.661 \times 10^{-4}$) & $0.874$ ($5.103 \times 10^{-3}$) & $0.498$ ($9.075 \times 10^{-3}$) & $0.396$ ($8.601 \times 10^{-3}$) \\
\cline{3-10}

& & \multirow{8}{*}{Low} & \multirow{4}{*}{$K=1$} & \multirow{2}{*}{DPS} & RTS & $\mathbf{0.995 \ (5.318 \times 10^{-4})}$ & $0.607 \ (9.097 \times 10^{-3})$ & $\mathbf{0.496 \ (9.277 \times 10^{-3})}$ & $0.400 \ (8.13 \times 10^{-3})$ \\
& & & & & STS & $\mathbf{0.995 \ (5.318 \times 10^{-4})}$ & $\mathbf{0.699 \ (8.249 \times 10^{-3})}$ & $\mathbf{0.496 \ (9.277 \times 10^{-3})}$ & $\mathbf{0.422 \ (8.368 \times 10^{-3})}$ \\
\cline{5-10}
& & & & \multirow{2}{*}{SPS} & RTS& $0.995$ ($5.313 \times 10^{-4}$) & $0.626$ ($8.963 \times 10^{-3}$) & $0.496$ ($9.27 \times 10^{-3}$) & $0.401$ ($8.042 \times 10^{-3}$) \\
& & & & & STS & $0.995$ ($5.313 \times 10^{-4}$) & $0.733$ ($7.884 \times 10^{-3}$) & $0.496$ ($9.27 \times 10^{-3}$) & $0.419$ ($8.284 \times 10^{-3}$) \\
\cline{4-10}

& & & \multirow{4}{*}{$K=3$} & \multirow{2}{*}{DPS} & RTS & $\mathbf{0.995 \ (5.261 \times 10^{-4})}$ & $0.592 \ (8.566 \times 10^{-3})$ & $\mathbf{0.486 \ (9.152 \times 10^{-3})}$ & $\mathbf{0.220 \ (6.202 \times 10^{-3})}$ \\
& & & & & STS & $\mathbf{0.995 \ (5.261 \times 10^{-4})}$ & $\mathbf{0.832 \ (6.147 \times 10^{-3})}$ & $\mathbf{0.486 \ (9.152 \times 10^{-3})}$ & $0.408$ ($8.321 \times 10^{-3}$) \\
\cline{5-10}
& & & & \multirow{2}{*}{SPS} & RTS& $0.995$ ($5.221 \times 10^{-4}$) & $0.601$ ($8.531 \times 10^{-3}$) & $0.486$ ($9.152 \times 10^{-3}$) & $0.202$ ($5.803 \times 10^{-3}$) \\
& & & & & STS & $0.995$ ($5.221 \times 10^{-4}$) & $0.865$ ($5.338 \times 10^{-3}$) & $0.486$ ($9.152 \times 10^{-3}$) & $0.392$ ($8.261 \times 10^{-3}$) \\
\hline
\end{tabular}
}
\end{table}

\clearpage
\newpage

\begin{figure}[!h]
\center
\includegraphics[scale=0.2]{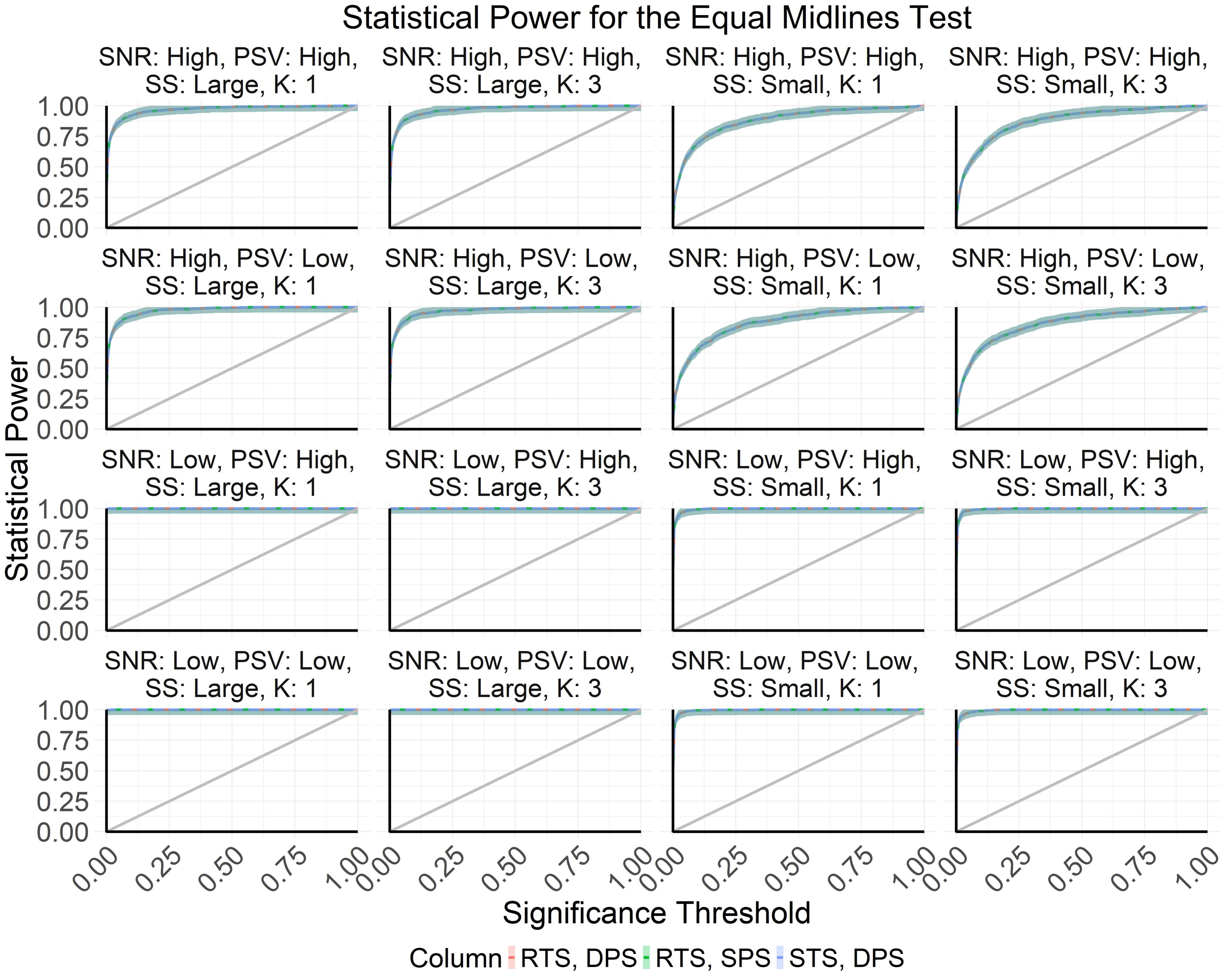}
    \caption{Empirical power curves used to compute $\mathrm{AUC}_{\mathrm{SP}}$ for the equal midlines test. Our RTS method and the STS method obtained the same performance in each simulation setting. Here, ``DPS'' denotes curves computed from datasets generated to have ``different phase-shift'' parameters across individuals, and ``SPS'' denotes curves computed from datasets generated to have ``same phase-shift'' parameters for every individual.} \label{fig:power_32M}
\end{figure}

\clearpage
\newpage

\begin{figure}[!h]
\center
\includegraphics[scale=0.2]{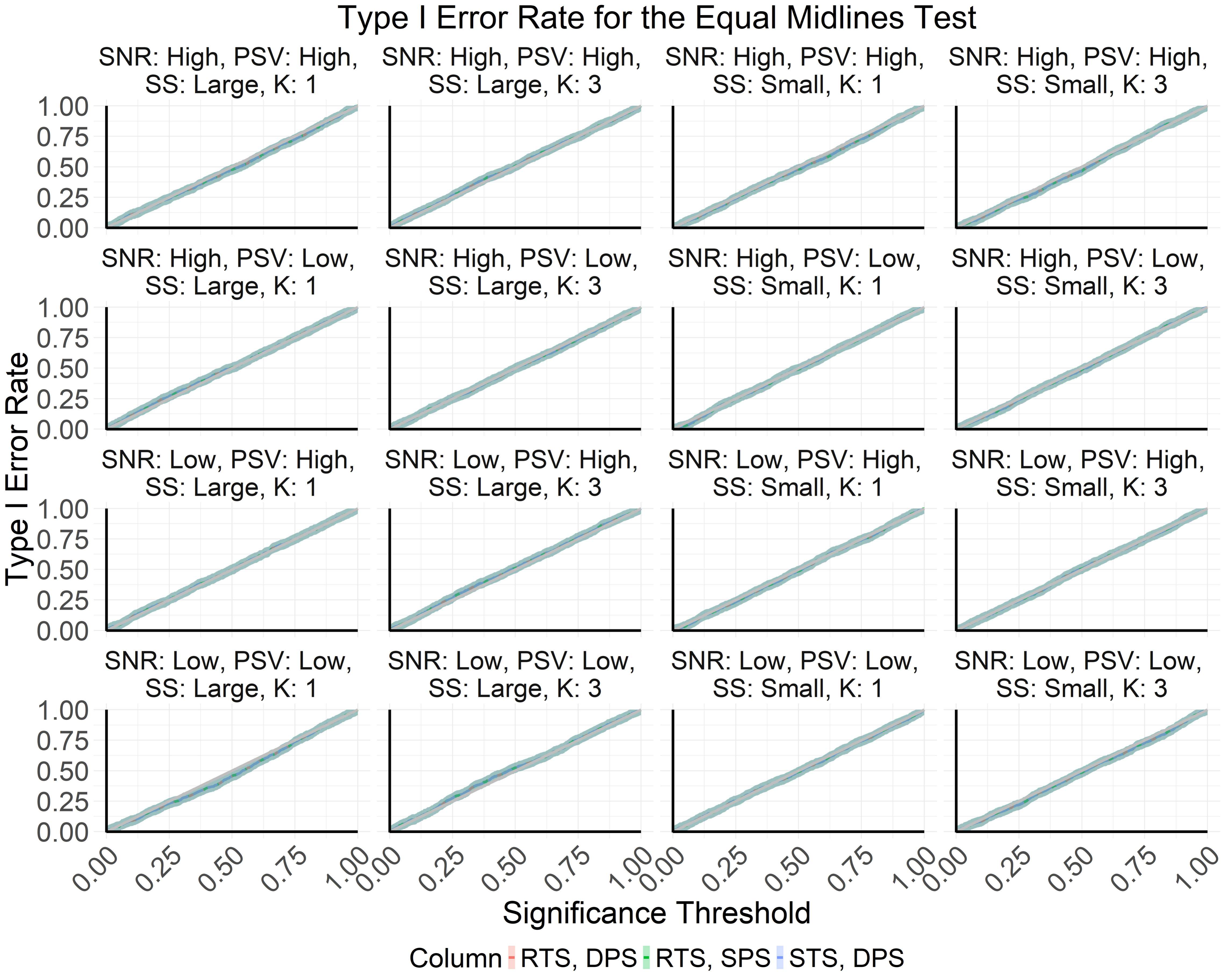}
    \caption{Empirical type I error curves used to compute $\mathrm{AUC}_{\mathrm{T1E}}$ for the equal midlines test. Our RTS method and the STS method obtained the same performance in each simulation setting. Here, ``DPS'' denotes curves computed from datasets generated to have ``different phase-shift'' parameters across individuals, and ``SPS'' denotes curves computed from datasets generated to have ``same phase-shift'' parameters for every individual.} \label{fig:typeI_32M}
\end{figure}

\clearpage
\newpage

\begin{figure}[!h]
\center
\includegraphics[scale=0.2]{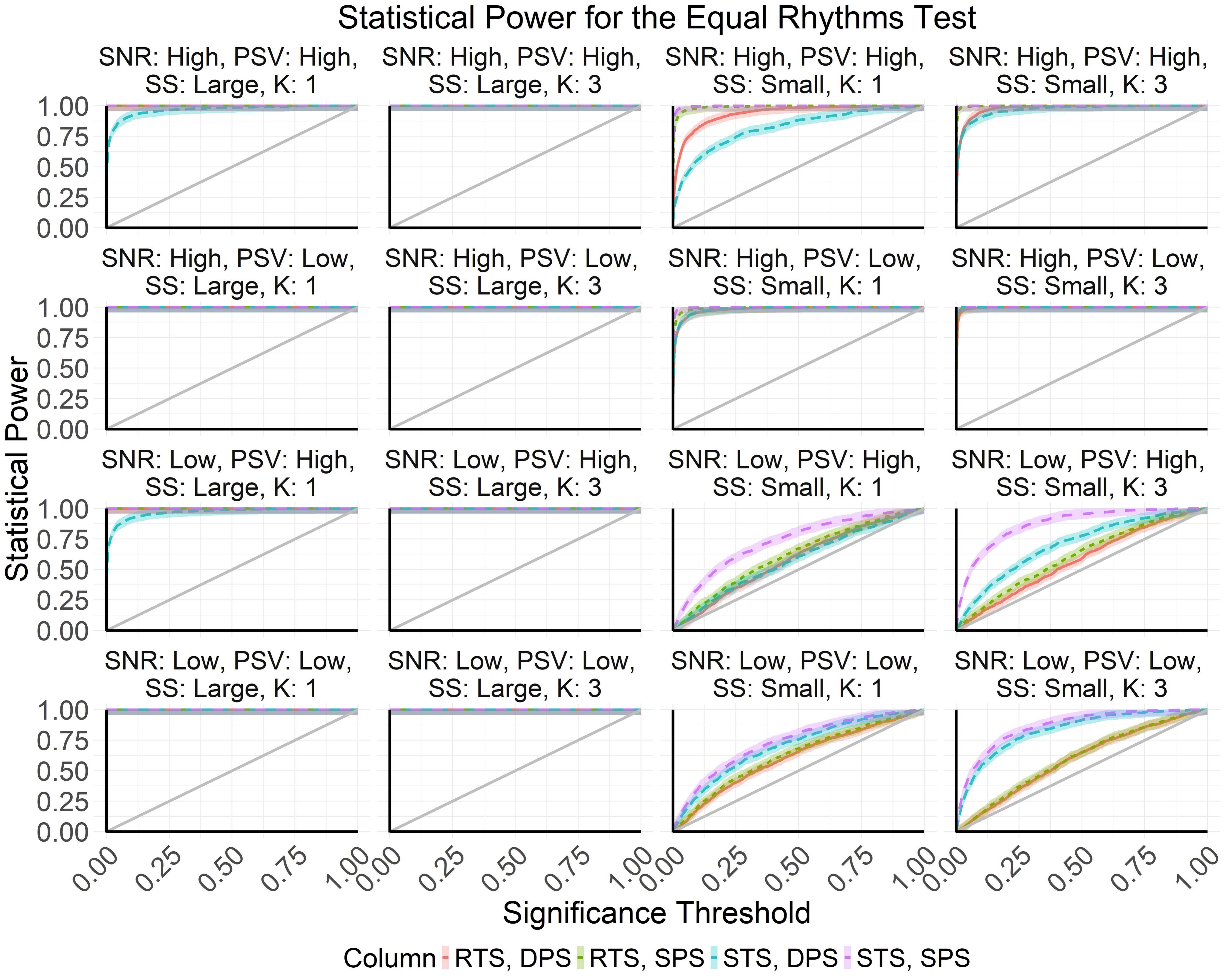}
    \caption{Empirical power curves used to compute $\mathrm{AUC}_{\mathrm{SP}}$ for the equal rhythms test. Our RTS method generally outperforms the STS method in every simulation setting, except in scenarios with low small sample sizes and low signal-to-noise ratios.} \label{fig:power_32R}
\end{figure}

\clearpage
\newpage

\begin{figure}[!h]
\center
\includegraphics[scale=0.2]{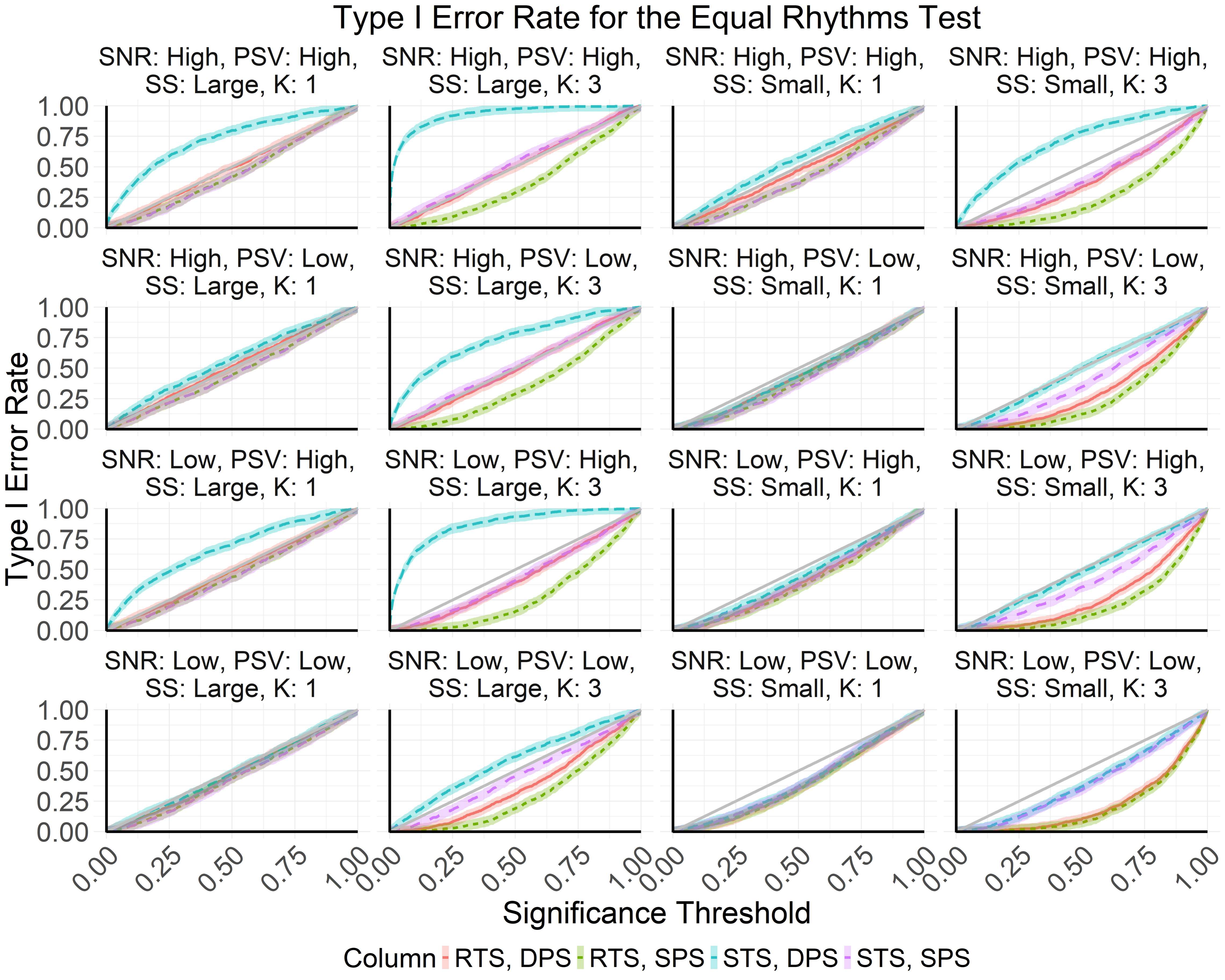}
    \caption{Empirical type I error curves used to compute $\mathrm{AUC}_{\mathrm{T1E}}$ for the equal rhythms test. The STS method appears to lose type I error control, particularly when the variability in individual-level phase-shifts across cohorts is relatively large. Our RTS method maintains type I error control in every simulation setting.} \label{fig:typeI_32R}
\end{figure}

\clearpage
\newpage

\begin{table*}[!h]
	\caption{Comparison of population-level parameter estimates ($\hat{\beta}^{(c)}$) and $p$-values from the zero amplitudes test obtained using our RTS method versus the STS method on cortisol level data. The amplitude estimates produced by our RTS method are consistently larger than those produced by the STS method, and the $p$-values produced by the RTS method are smaller for the cohort with major depressive disorder (MDD). The $p$-values obtained from the control cohort are equal.} \label{tab:ind}
 \centering
 \resizebox{1.05\textwidth}{!}{
		\begin{tabular}{|l|c|c|c|c|c|c|c|c|c|c|c|}
			\hline
   \multirow{1}{*}{Cohort} & \multirow{1}{*}{Method} & \multicolumn{1}{c|}{$\hat{\beta}_0^{(c)}$} & \multicolumn{1}{c|}{$\hat{\beta}_1^{(c)}$} & \multicolumn{1}{c|}{$\hat{\beta}_2^{(c)}$} & \multicolumn{1}{c|}{$\hat{\beta}_3^{(c)}$} & \multicolumn{1}{c|}{$\hat{\beta}_4^{(c)}$} & \multicolumn{1}{c|}{$\hat{\beta}_5^{(c)}$} & \multicolumn{1}{c|}{$\hat{\beta}_6^{(c)}$} & \multicolumn{1}{c|}{$p$-value} \\
   \hline
         \multirow{2}{*}{Control} & RTS & 1.635 & 0.893  & 3.073 & 0.563  & 1.987 & 0.190  & -2.087 &  0.000 \\
         & STS & 1.635 & 0.836 &  3.081 & 0.360 &  2.034 & 0.132 & -2.201  & 0.000 \\ 
         \hline 
         \multirow{2}{*}{MDD} & RTS & 1.846 & 0.885 &  3.126 & 0.383 &  1.862 & 0.248 &   0.131 &  0.004 \\
         & STS & 1.846 & 0.724  & 3.088 & 0.244 &   1.779 & 0.067 & -1.154  & 0.012 \\
         \hline 
\end{tabular} }
\end{table*}

\clearpage
\newpage

\begin{figure*}[!h]
\centering
\includegraphics[scale=0.25]{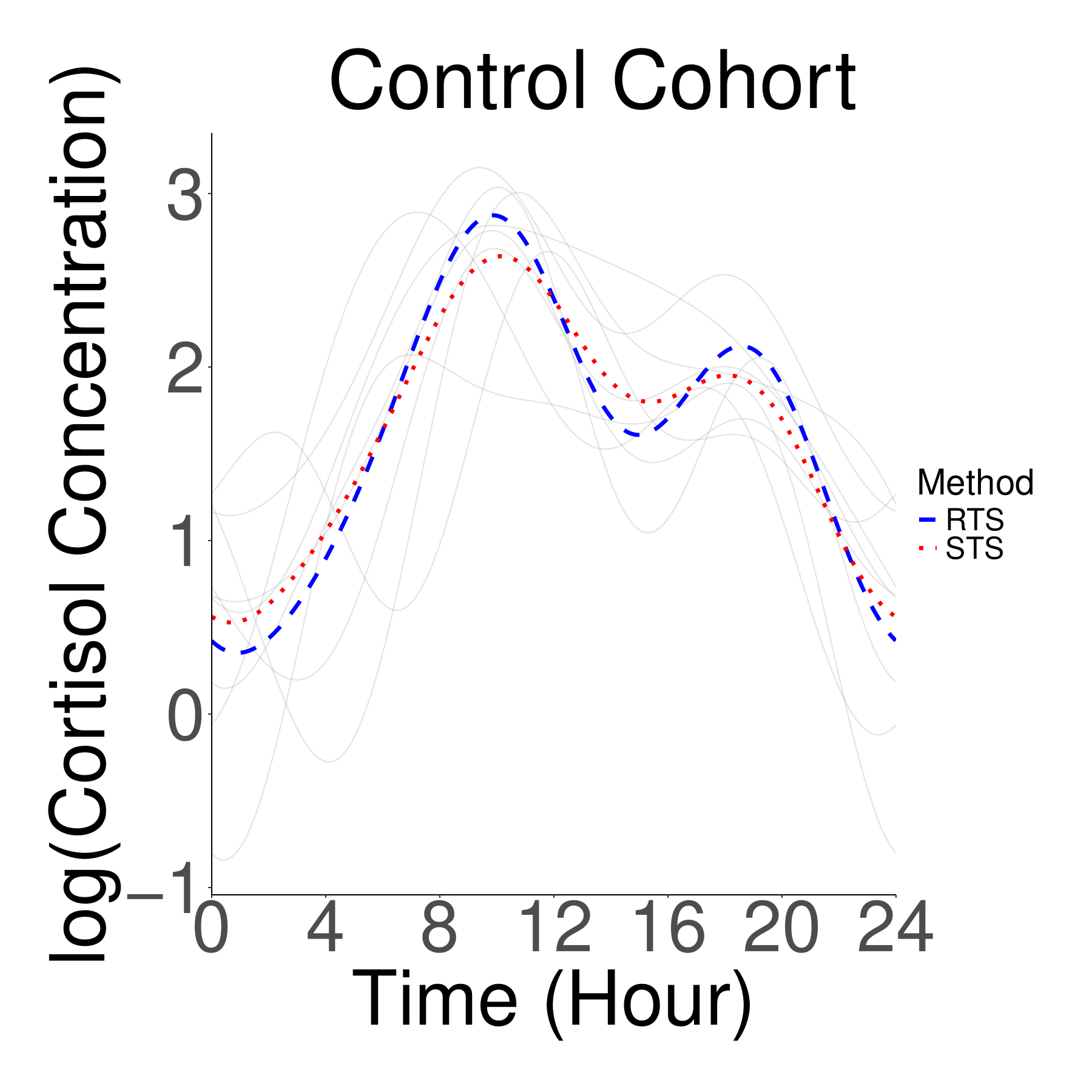}
\includegraphics[scale=0.25]{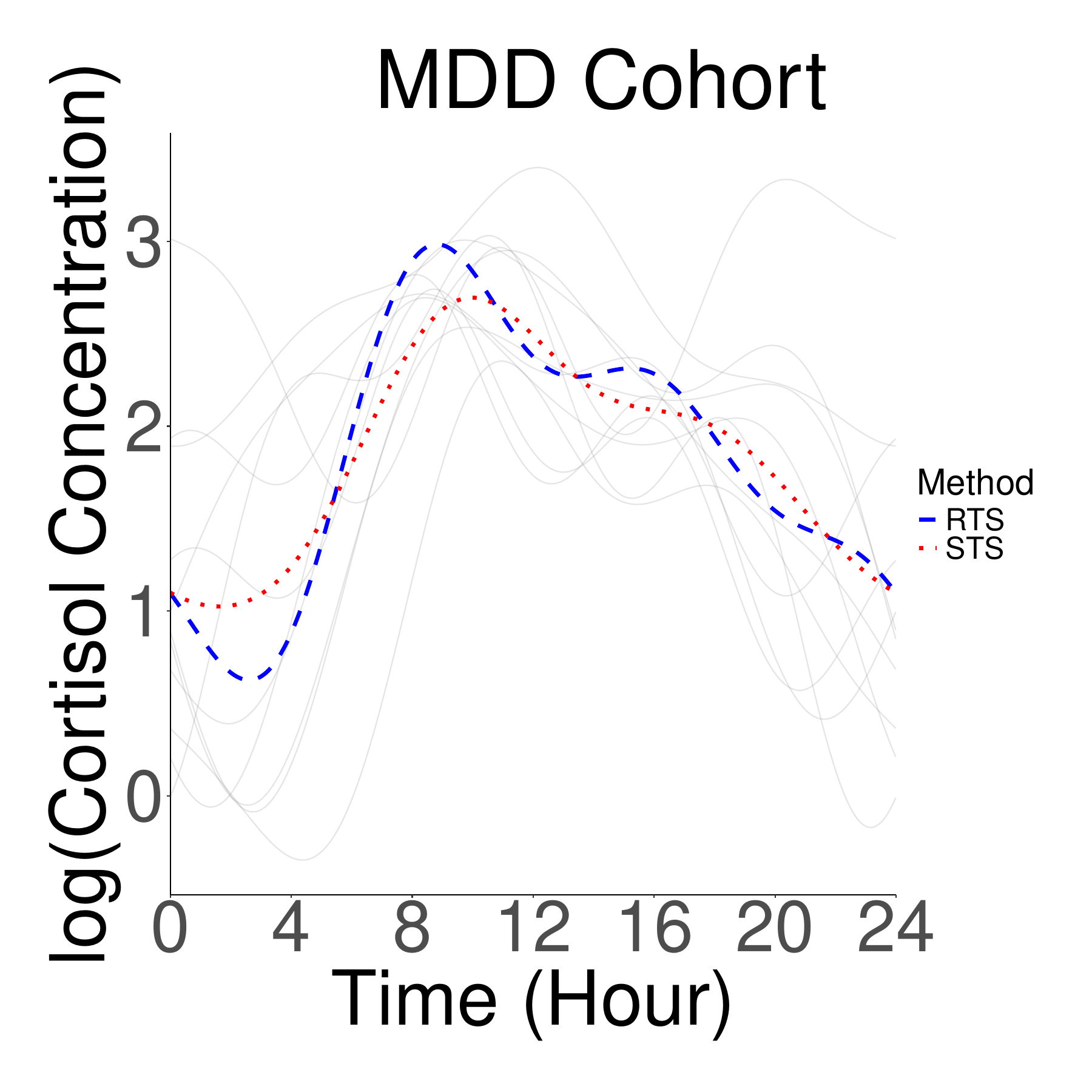}
\caption{Population-level curves computed from each method on cortisol level data. The solid blue line represents the population-level curve fit with our RTS method, while the dashed red line represents the population-level curve fit with the STS method. Each gray line represents corresponding individual-level fits.}
    \label{fig:amps}
\end{figure*}

\clearpage
\newpage

\begin{table*}[!h]
	\caption{Comparison of population-level parameter estimates ($\hat{\beta}^{(c)}$) and $p$-values from the zero amplitudes test obtained using our RTS method versus the STS method on heart rate data. The amplitude estimates produced by our RTS method are consistently larger than those produced by the STS method, and the $p$-values produced by the RTS method are smaller for the cohort with major depressive disorder (MDD). The $p$-values obtained from the control cohort are equal.} \label{tab:ind2}
 \centering
 \resizebox{1.05\textwidth}{!}{
		\begin{tabular}{|l|c|c|c|c|c|c|c|c|c|c|c|}
			\hline
Quantity   & RTS Method, Low-Stress Cohort & STS Method, Low-Stress Cohort & RTS Method, High-Stress Cohort & STS  Method, High-Stress Cohort \\
   \hline 
$\beta^{(c)}_0$     &    76.422   &   76.422       &     76.338       &     76.338 \\
$\beta^{(c)}_1$     &    11.470   &   10.152       &     12.410       &     11.764 \\
$\beta^{(c)}_2$   &     2.528   &    2.488       &      2.152       &     2.254 \\
$\beta^{(c)}_3$     &     8.119   &    6.530       &      6.642       &     5.822 \\
$\beta^{(c)}_4$   &     1.291   &    1.382       &      1.130       &     1.263 \\
$\beta^{(c)}_5$     &     3.999   &    2.693       &      4.730       &     3.105 \\
$\beta^{(c)}_6$   &    -0.589   &   -0.492       &     -1.713       &     -1.283 \\
$\beta^{(c)}_7$   &     4.212   &    2.479       &      4.862       &     3.715 \\
$\beta^{(c)}_8$      &    -2.764   &   -3.034       &      2.462       &     2.510 \\
$\beta^{(c)}_9$       &     2.878  &     1.949       &      3.791       &     0.760 \\
$\beta^{(c)}_{10}$       &     0.820  &     0.692       &     -0.598       &     -0.275 \\
$\beta^{(c)}_{11}$       &     2.245  &     0.801       &      2.890       &     1.217 \\
$\beta^{(c)}_{12}$       &   -0.896   &    0.003       &     -0.506       &     -0.572 \\
$\beta^{(c)}_{13}$       &     2.949  &     0.965       &      2.049       &     0.774 \\
$\beta^{(c)}_{14}$       &   -1.911   &   -1.944       &      2.413       &     2.307 \\
$\beta^{(c)}_{15}$       &     2.503  &     0.999       &      2.423       &     0.524 \\
$\beta^{(c)}_{16}$       &     1.681  &     1.808       &      1.511       &     1.940 \\
$\beta^{(c)}_{17}$       &     2.471  &     0.446       &      2.481       &     0.695 \\
$\beta^{(c)}_{18}$       &   -0.657   &   -1.461       &      1.683       &     1.686 \\
$\beta^{(c)}_{19}$       &     2.165  &     0.363       &      1.202       &     0.325 \\
$\beta^{(c)}_{20}$       &     0.553  &     0.120       &     -0.196       &     -1.052 \\
$\beta^{(c)}_{21}$       &     2.271  &     0.889       &      1.589       &     0.658 \\
$\beta^{(c)}_{22}$       &   -0.038   &   -0.102       &      0.110       &     -0.020 \\
$\beta^{(c)}_{23}$       &     1.601  &     0.309       &      1.792       &     0.354 \\
$\beta^{(c)}_{24}$       &     1.267  &    -0.141       &      1.651       &     1.952 \\
$\beta^{(c)}_{25}$       &     2.006  &     0.373       &      2.082       &     0.847 \\
$\beta^{(c)}_{26}$       &   -1.137   &   -1.114       &      0.069       &     -0.044 \\
$p$-value       &     0.000  &     0.000       &      0.000       &     0.000 \\
         \hline 
\end{tabular} }
\end{table*}

\clearpage
\newpage

\begin{figure*}[!h]
\centering
\includegraphics[scale=0.25]{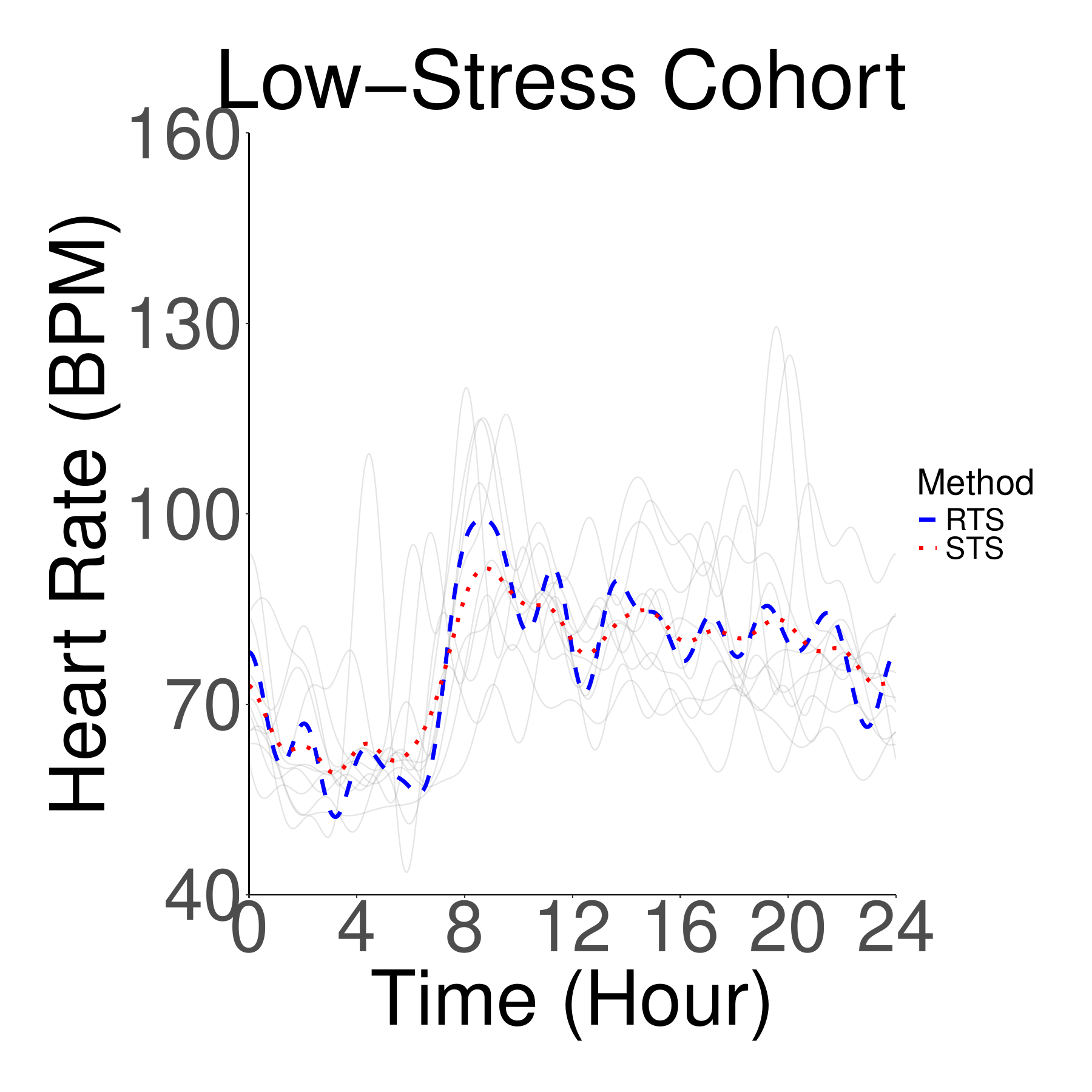}
\includegraphics[scale=0.25]{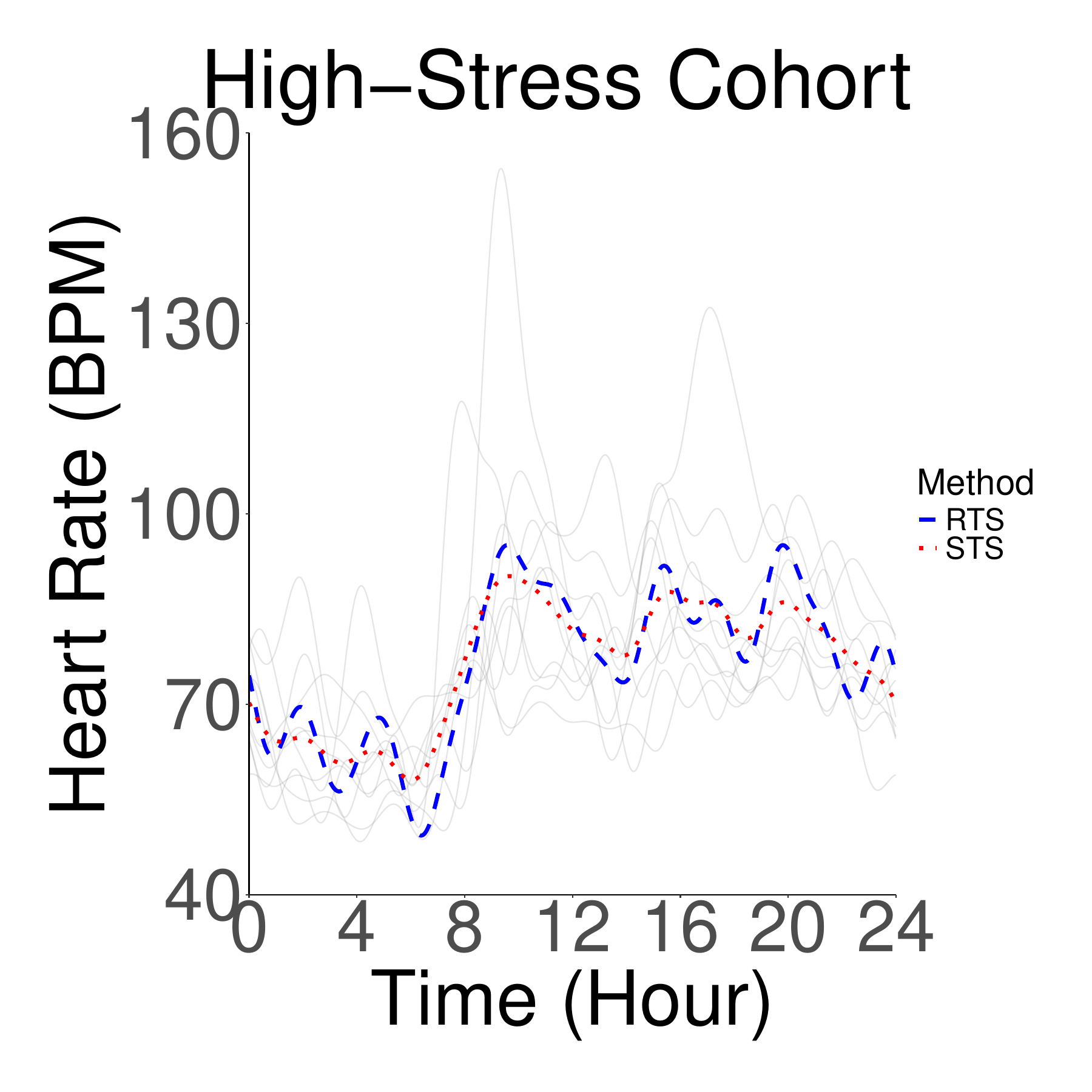}
\caption{Population-level curves computed from each method on heart rate data. The solid blue line represents the population-level curve fit with our RTS method, while the dashed red line represents the population-level curve fit with the STS method. Each gray line represents corresponding individual-level fits.}
    \label{fig:amps2}
\end{figure*}

\clearpage 
\newpage

\appendix

\section{Derivations for Parameter Estimation} \label{App:A}

\subsection{Derivation for Proposition \ref{prop:1}} \label{App:A1}

\begin{proof}
To simplify presentation, we omit the superscript $(c)$. We first note that the parameter estimates $\hat{\gamma}_i\sim \mathrm{N}(\gamma_i, \Sigma_i/n_i)$ by definition of the central limit theorem. Further, the identities in (\ref{eq:alt_to_orig}) imply
\begin{align}
    \gamma_{i,0} &= \beta_0+b_{i,0}, \nonumber \\
    \gamma_{i,2k-1} &= -(\beta_{2k-1}+b_{i,2k-1})\sin(\beta_{2k}+b_{i,2k}), \label{eq:a1_ind} \\
    \gamma_{i,2k} &= (\beta_{2k-1}+b_{i,2k-1})\cos(\beta_{2k}+b_{i,2k}), \nonumber
\end{align}
where the $(2K+1)\times 1$ vector $b_i$ is generated by a probability distribution with mean zero. It is trivial that $\mathbb{E}(\hat{\gamma}_{i,0})=\beta_0$. For $\hat{\alpha}_{2k-1}$, we find
\begin{align}
\mathbb{E}(\hat{\alpha}_{2k-1}) &= \mathbb{E}\left(\frac{1}{M}\sum_{i=1}^{M}\hat{\gamma}_{i,2k-1}\right) \nonumber \\
&= \frac{1}{M}\sum_{i=1}^{M}\mathbb{E}(\hat{\gamma}_{i, 2k-1}) \nonumber \\
&= \frac{1}{M}\sum_{i=1}^{M}\mathbb{E}(\gamma_{i, 2k-1}) \nonumber  \\
&= \mathbb{E}\{-(\beta_{2k-1}+b_{i,2k-1})\sin(\beta_{2k} + b_{i,2k})\} \label{eq:lema1_1}  \\
    &= \mathbb{E}[-(\beta_{2k-1} + b_{i,2k-1})\{\sin(\beta_{2k})\cos(b_{i,2k}) + \cos(\beta_{2k})\sin(b_{i,2k})\}] \label{eq:lema1_2} \\
    &=-\beta_{2k-1}\mathbb{E}\{\sin(\beta_{2k})\cos(b_{i,2k}) + \cos(\beta_{2k})\sin(b_{i,2k})\} \nonumber \\
    &\quad \quad -\mathbb{E}(b_{i,2k-1})\mathbb{E}\{\sin(\beta_{2k})\cos(b_{i,2k}) + \cos(\beta_{2k})\sin(b_{i,2k})\} \label{eq:lema1_3} \\
    &=-\beta_{2k-1}[\sin(\beta_{2k})\mathbb{E}\{\cos(b_{i,2k})\} +\cos(\beta_{2k})\mathbb{E}\{\sin(b_{i,2k})\}] \nonumber \\
    &=\alpha_{2k-1}\mathbb{E}\{\cos(b_{i,2k})\} -\alpha_{2k}\mathbb{E}\{\sin(b_{i,2k})\}. \label{eq:lema1_4}
\end{align}
Here, (\ref{eq:lema1_1}) is by application of the individual-level identities in (\ref{eq:a1_ind}), (\ref{eq:lema1_2}) results from the identity $\sin(Z_1+Z_2) = \sin(Z_1)\cos(Z_2) + \cos(Z_1)\sin(Z_2)$, (\ref{eq:lema1_3}) is due to the assumption that each $b_{i,j}$ is independent of $b_{i,k}$ for all $j\neq k$, and (\ref{eq:lema1_4}) is by application of the population-level identities in (\ref{eq:alt_to_orig}).

Similarly, for $\hat{\alpha}_{2k}$ we find
\begin{align*}
    \mathbb{E}(\hat{\alpha}_{2k}) 
&= \frac{1}{M}\sum_{i=1}^{M}\mathbb{E}(\gamma_{i, 2k}) \\
&= \beta_{2k-1}\mathbb{E}\{\cos(\beta_{2k} + b_{i,2k})\} \\
    &=\beta_{2k-1}[\cos(\beta_{2k})\mathbb{E}\{\cos(b_{i,2k})\} -\sin(\beta_{2k})\mathbb{E}\{\sin(b_{i,2k})\}] \\
    &=\alpha_{2k}\mathbb{E}\{\cos(b_{i,2k})\} +\alpha_{2k-1}\mathbb{E}\{\sin(b_{i,2k})\}.
\end{align*}
\end{proof}

\subsection{Derivation for Corollary \ref{cor:1}} \label{App:A2}

We omit the superscript $(c)$ to simplify presentation. We first compute the $k$-th population-level amplitude given the quantities obtained in Proposition \ref{prop:1}. By application of the identities in (\ref{eq:alt_to_orig}), we find
\begin{align*}
\mathbb{E}(\hat{\alpha}_{2k-1}) &= -\beta_{2k-1}[\sin(\beta_{2k})\mathbb{E}\{\cos(b_{i,2k})\} +\cos(\beta_{2k})\mathbb{E}\{\sin(b_{i,2k})\}] \\ 
&=-\beta_{2k-1}\mathbb{E}\{\sin(\beta_{2k}+b_{i,2k})\}, \\
\mathbb{E}(\hat{\alpha}_{2k}) &= \beta_{2k-1}[\cos(\beta_{2k})\mathbb{E}\{\cos(b_{i,2k})\} -\sin(\beta_{2k})\mathbb{E}\{\sin(b_{i,2k})\}] \\ 
&=\beta_{2k-1}\mathbb{E}\{\cos(\beta_{2k}+b_{i,2k})\}.
\end{align*}
As a consequence, the expression
\begin{align}
    \sqrt{\mathbb{E}(\hat{\alpha}_{2k-1})^2+\mathbb{E}(\hat{\alpha}_{2k})^2} &= \beta_{2k-1}\sqrt{\mathbb{E}\{\sin(\beta_{2k}+b_{i,2k})\}^2 + \mathbb{E}\{\cos(\beta_{2k}+b_{i,2k})\}^2} \nonumber \\
    &= \beta_{2k-1}|\mathbb{E}[\exp\{z(\beta_{2k}+b_{i,2k})\}]| \label{eq:a2_euler} \\
    &= \beta_{2k-1}|\exp(z\beta_{2k})| |\mathbb{E}\{\exp(zb_{i,2k})\}| \nonumber \\
    &=\beta_{2k-1}|\phi_{b_{i,2k}}(1)|, \nonumber
\end{align}
where $z=\sqrt{-1}$, with (\ref{eq:a2_euler}) due to Euler's identity or $\exp(zW) = \cos(W)+z\sin(W)$ given an argument $W$. Now, for the $k$-th phase-shift, we find
\begin{align*}
    \mathrm{atan2}\left\{-\mathbb{E}(\hat{\alpha}_{2k-1}), \mathbb{E}(\hat{\alpha}_{2k})\right\} &=\mathrm{atan2}[\beta_{2k-1}\mathbb{E}\{\sin(\beta_{2k}+b_{i,2k})\}, \beta_{2k-1}\mathbb{E}\{\cos(\beta_{2k}+b_{i,2k})\}] \\
    & = \mathrm{atan2}\bigg[\sin(\beta_{2k})\mathbb{E}\{\cos( b_{i,2k})\} + \cos(\beta_{2k})\mathbb{E}\{\sin(b_{i,2k})\}, \\
    & \quad \quad \quad \quad \ \cos(\beta_{2k}) \mathbb{E}\{\cos(b_{i,2k})\} - \sin(\beta_{2k}) \mathbb{E}\{\sin(b_{i,2k})\} \bigg].
\end{align*}

\section{Derivations for Test Statistic Calculation} \label{App:B}
\subsection{Supporting Lemmas} \label{App:B1}

\begin{lem}[Lemma 1.7, \citealt{Tsybakov2009}] \label{lem:1}  Suppose each $n_i^{(c)} = n$ and each sample $X^{(c)}_{i,j} = 12(j-1)/(n\pi)$ for all $i$. Then
\begin{align*}
(W^{(c)}_i)^T (W^{(c)}_i) &= \mathrm{diag}_{2K+1}\left(n, \frac{n}{2},\ldots, \frac{n}{2}\right),
\end{align*}
where $\mathrm{diag}_p(m_1,\ldots, m_p)$ denotes a $p\times p$ diagonal matrix with $m_k$ representing the $k$-th element along the diagonal, and
\begin{align*}
W^{(c)}_i = \begin{bmatrix}
1 & \sin\left(\frac{\pi X^{(c)}_{i,1}}{12}\right) & \cos\left(\frac{\pi X^{(c)}_{i,1}}{12}\right) & \cdots & \sin\left(\frac{K\pi X^{(c)}_{i,1}}{12}\right) & \cos\left(\frac{K\pi X^{(c)}_{i,1}}{12}\right) \\
\vdots & \vdots & \vdots & \ddots & \vdots & \vdots \\
1 & \sin\left(\frac{\pi X^{(c)}_{i,n_i}}{12}\right) & \cos\left(\frac{\pi X^{(c)}_{i,n_i}}{12}\right) & \cdots & \sin\left(\frac{K\pi X^{(c)}_{i,n_i}}{12}\right) & \cos\left(\frac{K\pi X^{(c)}_{i,n_i}}{12}\right)
\end{bmatrix}.
\end{align*}    
\end{lem}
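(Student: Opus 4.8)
The plan is to recognize $(W_i^{(c)})^T(W_i^{(c)})$ as the Gram matrix whose $(a,b)$ entry is the discrete inner product of columns $a$ and $b$, that is, a sum over $j = 1,\ldots,n$ of products of the trigonometric basis functions evaluated at the sampling times. The entire computation then reduces to discrete trigonometric orthogonality, which I would package into the single master identity for roots of unity,
\begin{align}
\sum_{\ell=0}^{n-1} \exp\!\left(\frac{2\pi z\, m\, \ell}{n}\right) = n\,\mathbf{1}(n \mid m), \qquad z = \sqrt{-1}, \label{eq:rootofunity}
\end{align}
obtained by summing a finite geometric series. First I would substitute the equispaced sampling times so that the argument $k\pi X_{i,j}^{(c)}/12$ of the $k$-th harmonic at the $j$-th observation becomes the integer multiple $2\pi k (j-1)/n$ of the fundamental angle $2\pi/n$. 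Writing $\theta_j = 2\pi(j-1)/n$, every column of $W_i^{(c)}$ is then of the form $\cos(k\theta_j)$ or $\sin(k\theta_j)$ for some $0 \le k \le K$, with the intercept column corresponding to $k = 0$, and each column sum over $j$ reindexes to the form in \eqref{eq:rootofunity}.

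With this reduction, I would compute the entries in two groups. For the diagonal, the $(0,0)$ entry is $\sum_{j=1}^n 1 = n$, while each harmonic diagonal entry is handled by the half-angle identities $\sin^2(k\theta_j) = \tfrac12\{1 - \cos(2k\theta_j)\}$ and $\cos^2(k\theta_j) = \tfrac12\{1 + \cos(2k\theta_j)\}$; taking the real part of \eqref{eq:rootofunity} with $m = 2k$ annihilates the oscillatory remainder and leaves $n/2$. For the off-diagonal entries I would expand products via the product-to-sum formulas, writing $\cos(k\theta_j)\cos(l\theta_j)$ and $\sin(k\theta_j)\sin(l\theta_j)$ as combinations of $\cos\{(k-l)\theta_j\}$ and $\cos\{(k+l)\theta_j\}$, the mixed term $\sin(k\theta_j)\cos(l\theta_j)$ as a combination of $\sin\{(k\pm l)\theta_j\}$, and the intercept-times-harmonic entries as a bare $\cos(k\theta_j)$ or $\sin(k\theta_j)$. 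Each resulting sum is the real or imaginary part of a geometric sum of the type in \eqref{eq:rootofunity}; since every off-diagonal same-type pair has $k \ne l$ the relevant frequency is nonzero, and the mixed products reduce to sums of sines, whose discrete sums over $j$ vanish in all cases.

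The crux of the argument, and the only place the hypotheses do real work, is verifying that none of these frequencies is a nonzero multiple of $n$, so that each off-diagonal and each oscillatory remainder term is genuinely $0$ rather than accidentally $n$. Here I would invoke $n > 2K$: every nonzero frequency $m$ that arises---be it $m = k$ from an intercept--harmonic pair, $m = |k - l|$ or $m = k + l$ from a harmonic--harmonic pair, or $m = 2k$ from a half-angle reduction---satisfies $1 \le m \le 2K$, hence $0 < m < n$ and $n \nmid m$, so that \eqref{eq:rootofunity} returns $0$. Assembling the value $n$ for the intercept column, the value $n/2$ for each of the $2K$ sine and cosine columns, and $0$ for every off-diagonal entry then gives $(W_i^{(c)})^T(W_i^{(c)}) = \mathrm{diag}_{2K+1}(n, n/2, \ldots, n/2)$, as claimed. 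I expect the only genuine obstacle to be bookkeeping: correctly enumerating the frequency ranges for the harmonic pairs and confirming that the strict inequality $2K < n$ covers the extreme case $m = 2K$ produced by the highest-harmonic diagonal terms.
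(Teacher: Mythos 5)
Your proof is correct and complete; the paper itself supplies no argument for this lemma, deferring entirely to the citation of Tsybakov's Lemma 1.7, and your roots-of-unity computation is precisely the standard discrete-orthogonality argument underlying that cited result, so you have in effect filled in the proof the paper omits. Two remarks on hypotheses rather than on your reasoning. First, you silently correct what appears to be a typo in the lemma's statement: with the design literally as written, $X_{i,j}^{(c)} = 12(j-1)/(n\pi)$, the harmonic argument is $k(j-1)/n$ rather than $2\pi k(j-1)/n$, and the Gram matrix is not diagonal; your substitution matches the design $X_{i,j}^{(c)} = 24(j-1)/n$ actually assumed in Proposition 1 and used throughout the paper, which is clearly what is intended. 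Second, you correctly identify $n > 2K$ as the hypothesis doing the real work (it excludes aliasing, i.e.\ it guarantees that no frequency among $k$, $|k-l|$, $k+l$, $2k$ is a nonzero multiple of $n$), but note that this condition is not stated in the lemma itself --- it is imported from Proposition 1, and the lemma as written is false without it. Your bookkeeping of the frequency ranges, including the extreme case $m = 2K$ from the highest-harmonic diagonal and the observation that the mixed sine sums vanish unconditionally, is accurate.
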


\begin{lem} \label{lem:2}
Suppose the assumptions of Lemma \ref{lem:1} and Proposition \ref{prop:1} are valid. If the parameters of a correctly specified first-th order trigonometric regression model are estimated, then the covariance matrix for a population-level parameter vector $\hat{\alpha}^{(c)}$ can be expressed as
\begin{align*}
 \mathrm{Var}(\hat{\alpha}^{(c)}) &= \frac{1}{M^{(c)}}\left(\bar{D}^{(c)} + \frac{1}{M^{(c)}}\sum_{i=1}^{M^{(c)}}\Sigma^{(c)}_i\right),
\end{align*}
where the distinct elements of $\bar{D}$ can be expressed as
\begin{align*}
    \bar{D}_{1,1} &= \mathrm{Var}(b_{i,0}), \\
    \bar{D}_{1,2} &= \mathbb{E}[\{(\gamma_{i,0} - \mathbb{E}(\gamma_{i,0})\}\{(\gamma_{i,1} - \mathbb{E}(\gamma_{i,1})\}] = \mathbb{E}(b_{i,0})\mathbb{E}\{(\gamma_{i,1} - \mathbb{E}(\gamma_{i,1})\}=0, \\
    \bar{D}_{1,3} &= \mathbb{E}[\{(\gamma_{i,0} - \mathbb{E}(\gamma_{i,0})\}\{(\gamma_{i,2} - \mathbb{E}(\gamma_{i,2})\}] = \mathbb{E}(b_{i,0})\mathbb{E}\{(\gamma_{i,2} - \mathbb{E}(\gamma_{i,2})\}=0, \\
    \bar{D}_{2,2} &= \{\beta_1^2 + \mathrm{Var}(b_{i,1})\}\left\{\frac{1-\cos(2\beta_2)\phi_{b_{i,2}}(2)}{2}\right\}-\beta_1^2\sin^2(\beta_2)\phi^2_{b_{i,2}}(1), \\
    \bar{D}_{2,3} &= -\sin(\beta_2)\cos(\beta_2)\phi_{b_{i,2}}(2)\{\beta^2_1+\mathrm{Var}(b_{i,1})\} + \beta_{1}^2\left\{\sin(\beta_2)\phi_{b_{i,2}}(1)\right\}\left\{\cos(\beta_2)\phi_{b_{i,2}}(1)\right\}, \\
    \bar{D}_{3,3} &= \left\{\frac{1+\cos(2\beta_2)\phi_{b_{i,2}}(2)}{2}\right\}\{\beta_1^2 + \mathrm{Var}(b_{i,1})\}-\beta_1^2\cos^2(\beta_2)\phi^2_{b_{i,2}}(1),
\end{align*}
and
\begin{align*}
    \Sigma^{(c)}_i &= \mathrm{diag}\left\{\frac{(\sigma_i^{(c)})^2}{n}, \frac{2(\sigma_i^{(c)})^2}{n}, \frac{2(\sigma_i^{(c)})^2}{n}\right\}.
\end{align*}
\end{lem}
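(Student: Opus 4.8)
The plan is to compute $\mathrm{Var}(\hat{\alpha}^{(c)})$ by conditioning on the individual-level random effects and applying the law of total variance. Writing $\hat{\alpha}^{(c)} = (1/M^{(c)})\sum_i \hat{\gamma}_i^{(c)}$ with the $\hat{\gamma}_i^{(c)}$ independent across $i$, I would first note that, for the correctly specified linear model in (\ref{eq:lin_mod}), the ordinary least squares estimator satisfies $\mathbb{E}(\hat{\gamma}_i^{(c)} \mid \gamma_i^{(c)}) = \gamma_i^{(c)}$ and $\mathrm{Var}(\hat{\gamma}_i^{(c)} \mid \gamma_i^{(c)}) = (\sigma_i^{(c)})^2\{(W_i^{(c)})^T W_i^{(c)}\}^{-1} =: \Sigma_i^{(c)}$ exactly. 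The law of total variance then gives $\mathrm{Var}(\hat{\gamma}_i^{(c)}) = \mathrm{Var}(\gamma_i^{(c)}) + \Sigma_i^{(c)}$, and averaging over the $M^{(c)}$ independent individuals (treating each $(\sigma_i^{(c)})^2$ as the individual's noise variance) yields $\mathrm{Var}(\hat{\alpha}^{(c)}) = (1/M^{(c)})\{\bar{D}^{(c)} + (1/M^{(c)})\sum_i \Sigma_i^{(c)}\}$ with $\bar{D}^{(c)} = \mathrm{Var}(\gamma_i^{(c)})$, which is the claimed decomposition.

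The within-individual term is immediate from Lemma \ref{lem:1}: since the equispaced design makes $(W_i^{(c)})^T W_i^{(c)} = \mathrm{diag}(n, n/2, n/2)$ for $K=1$, inverting this diagonal matrix gives $\Sigma_i^{(c)} = \mathrm{diag}\{(\sigma_i^{(c)})^2/n, 2(\sigma_i^{(c)})^2/n, 2(\sigma_i^{(c)})^2/n\}$.

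The main work is computing the six distinct entries of the symmetric matrix $\bar{D}^{(c)} = \mathrm{Var}(\gamma_i^{(c)})$ from the representation in (\ref{eq:a1_ind}), writing $A = \beta_1 + b_{i,1}$ and $\Phi = \beta_2 + b_{i,2}$ so that $\gamma_{i,1} = -A\sin\Phi$ and $\gamma_{i,2} = A\cos\Phi$. The entries in the first row and column follow quickly: $\bar{D}_{1,1} = \mathrm{Var}(b_{i,0})$, and because $b_{i,0}$ is independent of $(b_{i,1}, b_{i,2})$ with mean zero (Assumption 2 of Proposition \ref{prop:1}), $\bar{D}_{1,2} = \bar{D}_{1,3} = 0$. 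For the amplitude-phase block I would expand $\sin^2\Phi$, $\cos^2\Phi$, and $\sin\Phi\cos\Phi$ through the double-angle identities, so that each reduces to a constant plus $\cos(2\Phi)$ or $\sin(2\Phi)$; taking expectations and using $\mathbb{E}(A^2) = \beta_1^2 + \mathrm{Var}(b_{i,1})$ together with the independence of $A$ and $\Phi$ then expresses every second moment of $(\gamma_{i,1}, \gamma_{i,2})$ in terms of $\mathbb{E}\{\cos(t b_{i,2})\}$ and $\mathbb{E}\{\sin(t b_{i,2})\}$ for $t \in \{1,2\}$. Subtracting the products of first moments, already computed in Proposition \ref{prop:1} and Corollary \ref{cor:1}, then gives $\bar{D}_{2,2}$, $\bar{D}_{2,3}$, and $\bar{D}_{3,3}$.

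The step I expect to be delicate is recognizing each trigonometric moment as a characteristic-function value. Since the phase-shift perturbations are drawn from a distribution that is symmetric about zero (the von Mises $\mathrm{VM}(0,\xi)$ used throughout), $\mathbb{E}\{\sin(t b_{i,2})\} = 0$ and $\phi_{b_{i,2}}(t) = \mathbb{E}\{\cos(t b_{i,2})\}$ is real, so the cross terms involving $\sin(t b_{i,2})$ drop out and the surviving terms collapse to $\phi_{b_{i,2}}(1)$, $\phi_{b_{i,2}}(2)$, and $\phi^2_{b_{i,2}}(1)$ exactly as stated. The only bookkeeping hazard is keeping the between-individual variance $\mathrm{Var}(b_{i,1})$ of the amplitude perturbation distinct from the squared mean $\beta_1^2$ when combining $\mathbb{E}(A^2\sin^2\Phi)$ with $\{\mathbb{E}(A\sin\Phi)\}^2$, and similarly for the remaining entries.
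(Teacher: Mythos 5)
Your proposal is correct and follows essentially the same route as the paper: the between/within decomposition with $\bar{D}=\mathrm{Var}(\gamma_i)$ and $\Sigma_i$ obtained by inverting the diagonal design matrix from Lemma \ref{lem:1}, followed by entry-by-entry computation of $\mathrm{Var}(\gamma_i)$ using the representation in (\ref{eq:a1_ind}), double-angle identities, independence of the amplitude and phase perturbations, and identification of the trigonometric moments with characteristic-function values. The only difference is that you make explicit two steps the paper leaves implicit, namely the law-of-total-variance justification of the decomposition and the symmetry of the phase-shift distribution about zero that is needed to drop the $\mathbb{E}\{\sin(tb_{i,2})\}$ terms and to write $\mathbb{E}\{\cos(tb_{i,2})\}=\phi_{b_{i,2}}(t)$.
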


\begin{proof}
We omit the superscript $(c)$ to simplify presentation. The derivation for $\Sigma_i$ follows from Lemma \ref{lem:1} given that $\gamma_i$ is estimated by minimizing squared loss. We compute the elements for the upper triangular of $\bar{D}$. For elements in the first row of $\bar{D}$, we find
\begin{align*}
    \bar{D}_{1,1} &= \mathrm{Var}(b_{i,0}), \\
    \bar{D}_{1,2} &= \mathbb{E}[\{(\gamma_{i,0} - \mathbb{E}(\gamma_{i,0})\}\{(\gamma_{i,1} - \mathbb{E}(\gamma_{i,1})\}] = \mathbb{E}(b_{i,0})\mathbb{E}\{(\gamma_{i,1} - \mathbb{E}(\gamma_{i,1})\}=0, \\
    \bar{D}_{1,3} &= \mathbb{E}[\{(\gamma_{i,0} - \mathbb{E}(\gamma_{i,0})\}\{(\gamma_{i,2} - \mathbb{E}(\gamma_{i,2})\}] = \mathbb{E}(b_{i,0})\mathbb{E}\{(\gamma_{i,2} - \mathbb{E}(\gamma_{i,2})\}=0.
\end{align*}

For the diagonal element $\bar{D}_{2,2}$ in the second row, note that
\begin{align*}
    \bar{D}_{2,2} &= \mathrm{Var}\{-(\beta_1+b_{i,1})\sin(\beta_2 + b_{i,2})\} \\
    &= \mathbb{E}\{(\beta_1+b_{i,1})^2\}\mathbb{E}\{\sin(\beta_2 + b_{i,2})^2\} - \mathbb{E}(\beta_1+b_{i,1})^2\mathbb{E}\{\sin(\beta_2 + b_{i,2})\}^2 \\
    &=\{\beta_1^2 + \mathrm{Var}(b_{i,1})\}\mathbb{E}\{\sin(\beta_2 + b_{i,2})^2\}-\beta_1^2\mathbb{E}\{\sin(\beta_2 + b_{i,2})\}^2.
\end{align*}
Here,
\begin{align*}
    \mathbb{E}\{\sin(\beta_2 + b_{i,2})^2\} &= \mathbb{E}\{\cos^2(b_{i,2})\sin^2(\beta_2) + 2\cos(\beta_2)\cos(b_{i,2})\sin(\beta_2)\sin(b_{i,2})+\cos^2(\beta_2)\sin^2(b_{i,2})\} \\
    &=\sin^2(\beta_2)\mathbb{E}\{\cos^2(b_{i,2})\} + \cos^2(\beta_2)\mathbb{E}\{\sin^2(b_{i,2})\} \\
    &=\frac{\sin^2(\beta_2)[1+\mathbb{E}\{\cos(2b_{i,2})\}]}{2} + \frac{\cos^2(\beta_2)[1-\mathbb{E}\{\cos(2b_{i,2})\}]}{2} \\
    &=\frac{\sin^2(\beta_2)\{1+\phi_{b_{i,2}}(2)\}}{2}  + \frac{\cos^2(\beta_2)\{1-\phi_{b_{i,2}}(2)\}}{2} \\
    &=\frac{1-\cos(2\beta_2)\phi_{b_{i,2}}(2)}{2}.
\end{align*}
and
\begin{align*}
    \mathbb{E}\{\sin(\beta_2 + b_{i,2})\}^2 &= \mathbb{E}\{\cos(\beta_2)\cos(b_{i,2}) + \cos(b_{i,2})\sin(\beta_2)\}^2 \\
    &=\mathbb{E}\{\cos(b_{i,2})\sin(\beta_2)\}^2 \\
    &=\sin^2(\beta_2)\phi^2_{b_{i,2}}(1),
\end{align*}
which yields
\begin{align*}
    \bar{D}_{2,2} &= \{\beta_1^2 + \mathrm{Var}(b_{i,1})\}\left\{\frac{1-\cos(2\beta_2)\phi_{b_{i,2}}(2)}{2}\right\}-\beta_1^2\sin^2(\beta_2)\phi^2_{b_{i,2}}(1).
\end{align*}

For the off diagonal element $\bar{D}_{2,3}$ in the second row, it follows that
\begin{align*}
    \bar{D}_{2,3} &= \mathbb{E}[\{\gamma_{i,1}-\mathbb{E}(\gamma_{i,1})\}\{\gamma_{i,2}-\mathbb{E}(\gamma_{i,2})\}] \\ 
    &= \mathbb{E}[\gamma_{i,1}\{\gamma_{i,2}-\mathbb{E}(\gamma_{i,2})\}] \\
    &= \mathbb{E}[-(\beta_1+b_{i,1})^2\sin(\beta_2+b_{i,2})\cos(\beta_2+b_{i,2})] \\
    & \quad \quad + \beta_1^2\mathbb{E}[\sin(\beta_2+b_{i,2})\}\mathbb{E}\{\cos(\beta_2+b_{i,2})\}],
\end{align*}
with
\begin{align*}
    \mathbb{E}\{-(\beta_1+b_{i,1})^2\sin(\beta_2+b_{i,2})\cos(\beta_2+b_{i,2})\} &= -\mathbb{E}\{(\beta_1+b_{i,1})^2\}\mathbb{E}\left[\sin(\beta_2)\cos(\beta_2)\{\cos^2(b_{i,2}) - \sin^2(b_{i,2})\} \right] \\
    &= -\{\beta^2_1+\mathrm{Var}(b_{i,1})\}\{\sin(\beta_2)\cos(\beta_2)\mathbb{E}(\cos(2b_{i,2})\} \\
    &= -\sin(\beta_2)\cos(\beta_2)\phi_{b_{i,2}}(2)\{\beta^2_1+\mathrm{Var}(b_{i,1})\}.
\end{align*}
and
\begin{align*}
    \beta_1^2\mathbb{E}[\sin(\beta_2+b_{i,2})\}\mathbb{E}\{\cos(\beta_2+b_{i,2})\}] &= \beta_{1}^2\left\{\sin(\beta_2)\phi_{b_{i,2}}(1)\right\}\left\{\cos(\beta_2)\phi_{b_{i,2}}(1)\right\},
\end{align*}
which yields
\begin{align*}
    \bar{D}_{2,3} &= -\sin(\beta_2)\cos(\beta_2)\phi_{b_{i,2}}(2)\{\beta^2_1+\mathrm{Var}(b_{i,1})\} + \beta_{1}^2\left\{\sin(\beta_2)\phi_{b_{i,2}}(1)\right\}\left\{\cos(\beta_2)\phi_{b_{i,2}}(1)\right\}.
\end{align*}

Finally, the derivation for $\bar{D}_{3,3}$ follows the derivation of $\bar{D}_{2,2}$, with
\begin{align*}
    \bar{D}_{3,3} &= \mathrm{Var}\{(\beta_1+b_{i,1})\cos(\beta_2 + b_{i,2})\} \\
    &= \mathbb{E}\{(\beta_1+b_{i,1})^2\}\mathbb{E}\{\cos(\beta_2 + b_{i,2})^2\} - \mathbb{E}(\beta_1+b_{i,1})^2\mathbb{E}\{\cos(\beta_2 + b_{i,2})\}^2 \\
    &=\{\beta_1^2 + \mathrm{Var}(b_{i,1})\}\mathbb{E}\{\cos(\beta_2 + b_{i,2})^2\}-\beta_1^2\mathbb{E}\{\cos(\beta_2 + b_{i,2})\}^2 \\
    &= \left\{\frac{1+\cos(2\beta_2)\phi_{b_{i,2}}(2)}{2}\right\}\{\beta_1^2 + \mathrm{Var}(b_{i,1})\}-\beta_1^2\cos^2(\beta_2)\phi^2_{b_{i,2}}(1).
\end{align*}
\end{proof}

\begin{lem} \label{lem:3}
Suppose the assumptions of Lemma \ref{lem:2} are valid. If the probability distribution generating individual-level phase-shifts is symmetric with mean zero, then the asymptotic covariance matrix for $\mathrm{Var}\{g(\hat{\alpha})\}$ can be expressed as
\begin{align*}
    \mathrm{Var}\{g(\hat{\alpha})\} &= \frac{D_{2,2}\alpha_1^2\phi^2_{b_{i,1}}(1)+D_{3,3}\alpha_2^2\phi^2_{b_{i,1}}(1) + 2D_{2,3}\alpha_1\alpha_2\phi^2_{b_{i,1}}(1)}{M\phi^2_{b_{i,1}}(1)\beta_1} \\
    & \quad \quad + \frac{2}{M^2n}\sum_{i=1}^{M}\sigma_i^2,
\end{align*}
where $g(\hat{\alpha}^{(c)}) = \sqrt{\alpha_1^2+\alpha_2^2}=\beta_1$.
\end{lem}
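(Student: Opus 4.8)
The plan is to specialize the Delta-method variance formula (\ref{eq:delt_var}) to the amplitude map $g(\alpha)=\sqrt{\alpha_1^2+\alpha_2^2}$ in the first-order case, with the empirical matrices $\hat{D}$ and $\hat{\Sigma}_i$ replaced by the limiting quantities $\bar{D}$ and $\Sigma_i$ furnished by Lemma \ref{lem:2}. The only new ingredients are the gradient of $g$ and the two quadratic forms it induces against $\bar{D}$ (the between-individual term) and each $\Sigma_i$ (the within-individual term).

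First I would record the gradient. Since $g$ ignores the intercept coordinate,
\begin{align*}
G(\alpha) = \left[\,0,\ \frac{\alpha_1}{\sqrt{\alpha_1^2+\alpha_2^2}},\ \frac{\alpha_2}{\sqrt{\alpha_1^2+\alpha_2^2}}\,\right].
\end{align*}
For the between-individual term this gradient must be evaluated at the probability limit of $\hat{\alpha}$, which by Proposition \ref{prop:1} is $\mathbb{E}(\hat{\alpha})$ rather than the true $\alpha$. This is where the lemma's symmetry hypothesis enters: symmetry of the phase-shift distribution about zero forces $\mathbb{E}\{\sin(b_{i,2})\}=0$, so Proposition \ref{prop:1} reduces to $\mathbb{E}(\hat{\alpha}_1)=\alpha_1\phi_{b_{i,2}}(1)$ and $\mathbb{E}(\hat{\alpha}_2)=\alpha_2\phi_{b_{i,2}}(1)$, a common scalar multiple of $(\alpha_1,\alpha_2)$, while Corollary \ref{cor:1} gives $g\{\mathbb{E}(\hat{\alpha})\}=\beta_1\phi_{b_{i,2}}(1)$.

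Next I would assemble $G^T\bar{D}\,G$ at this point. Because the first entry of $G$ vanishes and Lemma \ref{lem:2} gives $\bar{D}_{1,2}=\bar{D}_{1,3}=0$, the intercept coordinate decouples and only the lower-right $2\times 2$ block of $\bar{D}$ contributes, yielding
\begin{align*}
G^T\bar{D}\,G = \frac{\bar{D}_{2,2}\alpha_1^2\phi_{b_{i,2}}^2(1) + 2\bar{D}_{2,3}\alpha_1\alpha_2\phi_{b_{i,2}}^2(1) + \bar{D}_{3,3}\alpha_2^2\phi_{b_{i,2}}^2(1)}{\beta_1^2\phi_{b_{i,2}}^2(1)},
\end{align*}
which, after multiplying by $1/M$, is the first displayed term (the $\phi_{b_{i,2}}^2(1)$ factors cancel). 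For the within-individual term I would evaluate $G$ at each $\gamma_i$ and use $\Sigma_i=\mathrm{diag}\{\sigma_i^2/n,\,2\sigma_i^2/n,\,2\sigma_i^2/n\}$ from Lemma \ref{lem:2}. Because the two non-intercept diagonal entries coincide, the quadratic form collapses to
\begin{align*}
G(\gamma_i)^T\Sigma_i G(\gamma_i) = \frac{2\sigma_i^2}{n}\cdot\frac{\gamma_{i,1}^2+\gamma_{i,2}^2}{\gamma_{i,1}^2+\gamma_{i,2}^2} = \frac{2\sigma_i^2}{n},
\end{align*}
independent of $\gamma_i$; averaging over $i$ and applying the outer $1/M$ produces $\frac{2}{M^2 n}\sum_{i=1}^M\sigma_i^2$.

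The main obstacle is the bookkeeping around the evaluation point: one must keep in mind that the Delta method is anchored at the attenuated limit $\mathbb{E}(\hat{\alpha})$ of the STS estimator, not at the true $\alpha$, and confirm that symmetry is precisely what renders that limit proportional to $(\alpha_1,\alpha_2)$ so the $\phi_{b_{i,2}}(1)$ factors enter uniformly and cancel between the numerator and denominator of the between-individual term. Everything after that is routine algebra, kept short by the block structure of $\bar{D}$ and the equal amplitude/phase diagonal entries of $\Sigma_i$ supplied by Lemma \ref{lem:2}.
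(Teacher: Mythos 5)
Your proposal is correct and follows essentially the same route as the paper's proof: apply the Delta method with $G(\alpha)=\bigl[0,\ \alpha_1/\sqrt{\alpha_1^2+\alpha_2^2},\ \alpha_2/\sqrt{\alpha_1^2+\alpha_2^2}\bigr]$, collapse the within-individual quadratic form to $2\sigma_i^2/n$ via the equal diagonal entries of $\Sigma_i$ from Lemma \ref{lem:2}, and evaluate the between-individual quadratic form at the attenuated limit $\mathbb{E}(\hat{\alpha})=\phi_{b_{i,2}}(1)\,(\alpha_1,\alpha_2)$ supplied by Proposition \ref{prop:1} under symmetry (your proposal is in fact more explicit than the paper about why symmetry makes this limit a scalar multiple of $(\alpha_1,\alpha_2)$). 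Incidentally, your algebra yields $\beta_1^2\phi^2_{b_{i,2}}(1)$ in the denominator of the between-individual term, whereas the lemma as stated writes $\beta_1\phi^2_{b_{i,1}}(1)$; since $\alpha_1^2+\alpha_2^2=\beta_1^2$ and the relevant characteristic function is that of the phase-shift effect $b_{i,2}$, your normalization is the correct one and the paper's exponent and subscript appear to be typographical errors.
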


\begin{proof}
Recall that the Delta method \citep[Theorem 5.19]{Boos2013} indicates the asymptotic distribution of $g(\hat{\alpha})$ is given by
\begin{align*}
    g(\hat{\alpha}) \sim \mathrm{N}\left[g(\alpha), \frac{1}{M}\left\{G(\alpha)^TDG(\alpha) + \frac{1}{M}\sum_{i=1}^{M}G(\gamma_i)^T\Sigma_iG(\gamma_i)\right\}\right]
\end{align*}
when
\begin{align*}
\hat{\alpha}\sim \mathrm{N}\left\{\alpha, \frac{1}{M}\left(D + \frac{1}{M}\sum_{i=1}^{M}\Sigma_i\right)\right\}
\end{align*}
and there are no individual-level differences in phase-shift parameters. Here, the Jacobian $G(\alpha) = \delta g(\alpha)/\delta\alpha$. In the context of first-order trigonometric regression and the hypothesis test $H_0:\beta_1=0$, define the function $g(\alpha) = \sqrt{\alpha_1^2+\alpha_2^2}=\beta_1$, which implies
\begin{align*}
    G(\alpha) &= \begin{bmatrix} 0 & \frac{\alpha_1}{\sqrt{\alpha_1^2+\alpha_2^2}} & \frac{\alpha_2}{\sqrt{\alpha_1^2+\alpha_2^2}}
    \end{bmatrix}.
\end{align*}

We first derive $\sum_{i=1}^{M}\{G(\gamma_i)\Sigma_iG(\gamma_i)^T\}/M$, for which we find
\begin{align*}
    \frac{1}{M}\sum_{i=1}^{M}G(\gamma_i)\Sigma_iG(\gamma_i)^T&=\frac{1}{Mn}\sum_{i=1}^{M}\begin{bmatrix}0 \\ \frac{\gamma_{i,1}}{\sqrt{\gamma_{i,1}^2+\gamma_{i,2}^2}} \\ \frac{\gamma_{i,1}}{\sqrt{\gamma_{i,2}^2+\gamma_{i,2}^2}} \end{bmatrix}^T \begin{bmatrix} \sigma_i^2 & 0 & 0 \\
    0 & 2\sigma_i^2 & 0 \\ 0 & 0 & 2\sigma_i^2\end{bmatrix}\begin{bmatrix}0 \\ \frac{\gamma_{i,1}}{\sqrt{\gamma_{i,1}^2+\gamma_{i,2}^2}} \\ \frac{\gamma_{i,1}}{\sqrt{\gamma_{i,2}^2+\gamma_{i,2}^2}} \end{bmatrix} \\
    &= \frac{2}{Mn}\sum_{i=1}^{M}\sigma_i^2.
\end{align*}

We now derive $G(\alpha)^TDG(\alpha)$, for which we find
\begin{align}
    G(\alpha)DG(\alpha)^T &= \frac{D_{2,2}\alpha_1^2+D_{3,3}\alpha_2^2 + 2D_{2,3}\alpha_1\alpha_2}{\alpha_1^2+\alpha_2^2}. \label{eq:sym_Z}
\end{align}
Substitution of $\alpha$ with the expectation of the parameter estimates from Proposition \ref{prop:1} yields 
\begin{align*}
    \frac{1}{M}\left(D + \frac{1}{M}\sum_{i=1}^{M}\Sigma_i\right) &= \frac{D_{2,2}\alpha_1^2\phi^2_{b_{i,1}}(1)+D_{3,3}\alpha_2^2\phi^2_{b_{i,1}}(1) + 2D_{2,3}\alpha_1\alpha_2\phi^2_{b_{i,1}}(1)}{M\phi^2_{b_{i,1}}(1)\beta_1} \\
    & \quad \quad + \frac{2}{M^2n}\sum_{i=1}^{M}\sigma_i^2.
\end{align*}
\end{proof}

\subsection{Computation of Wald Test Statistic for Section \ref{sec:2.2.2}} \label{App:B2}

In Section \ref{sec:2.2.2}, we set $\beta_1=1/2$ and $\beta_2=0$, which imply that $\alpha_1 = 0$ and $\alpha_2 = 1/2$ by application of the identities in (\ref{eq:alt_to_orig}). We also set $\mathrm{Var}(b_{i,1})=0$ and $\phi_{b_{i,2}}(t) = \{\cos(t\pi/4) + \cos(-t\pi/4)\}/2$. Following Lemma \ref{lem:2}, we first compute 
\begin{align*}
    \bar{D}_{2,2} &= \{\beta_1^2 + \mathrm{Var}(b_{i,1})\}\left\{\frac{1-\cos(2\beta_2)\phi_{b_{i,2}}(2)}{2}\right\}-\beta_1^2\sin^2(\beta_2)\phi^2_{b_{i,2}}(1) \\
    &= \frac{1}{4}\left(\frac{1-1(0)}{2}\right) \\
    &=\frac{1}{8}, \\
    \bar{D}_{2,3} &=\sin(\beta_2)\cos(\beta_2)\phi_{b_{i,2}}(2)\{\beta^2_1+\mathrm{Var}(b_{i,1})\} + \beta_{1}^2\left\{\sin(\beta_2)\phi_{b_{i,2}}(1)\right\}\left\{\cos(\beta_2)\phi_{b_{i,2}}(1)\right\} \\
    &=0, \\
    \bar{D}_{3,3} &= \left\{\frac{1+\cos(2\beta_2)\phi_{b_{i,2}}(2)}{2}\right\}\{\beta_1^2 + \mathrm{Var}(b_{i,1})\}-\beta_1^2\cos^2(\beta_2)\phi^2_{b_{i,2}}(1) \\
    &= \frac{1}{8}-\frac{1}{8} \\
    &=0.
\end{align*}
It follows that
\begin{align*}
    \frac{1}{M}\left(D + \frac{1}{M}\sum_{i=1}^{M}\Sigma_i\right) &= \frac{D_{2,2}\alpha_1^2\phi^2_{b_{i,1}}(1)+D_{3,3}\alpha_2^2\phi^2_{b_{i,1}}(1) + 2D_{2,3}\alpha_1\alpha_2\phi^2_{b_{i,1}}(1)}{M\phi^2_{b_{i,1}}(1)\beta_1} \\
    & \quad \quad + \frac{2}{M^2n}\sum_{i=1}^{M}\sigma_i^2 \\
    &=\frac{D_{2,2}\alpha_1^2\phi^2_{b_{i,1}}(1)}{M\phi^2_{b_{i,1}}(1)\beta_1}  + \frac{1}{3M} \\
    &=\frac{(1/8)\{-\sin^2(0)\}}{M/2}+ \frac{1}{3M} \\
    &=\frac{1}{3M}.
\end{align*}
When it is instead the case that there is no individual-level phase-shift parameters, the test statistic further simplifies to
\begin{align*}
    \tau &=\frac{3M\phi^2_{b_{i,2}}(1)}{4} = \frac{3M}{8}.
\end{align*}

\newpage 

\bibliographystyle{apalike} 
\bibliography{bibliography}

\begin{thebibliography}{}

\bibitem[Albert and Hunsberger, 2005]{Albert2005}
Albert, P.~S. and Hunsberger, S. (2005).
\newblock On analyzing circadian rhythms data using nonlinear mixed models with harmonic terms.
\newblock {\em Biometrics}, 61(4):1115–1120.

\bibitem[Altman, 2016]{Altman2016}
Altman, B.~J. (2016).
\newblock Cancer clocks out for lunch: Disruption of circadian rhythm and metabolic oscillation in cancer.
\newblock {\em Frontiers in Cell and Developmental Biology}, 4.

\bibitem[Andreani et~al., 2015]{Andreani2015}
Andreani, T.~S., Itoh, T.~Q., Yildirim, E., Hwangbo, D.-S., and Allada, R. (2015).
\newblock Genetics of circadian rhythms.
\newblock {\em Sleep Medicine Clinics}, 10(4):413–421.

\bibitem[Archer et~al., 2014]{Archer2014}
Archer, S.~N., Laing, E.~E., M\"{o}ller-Levet, C.~S., van~der Veen, D.~R., Bucca, G., Lazar, A.~S., Santhi, N., Slak, A., Kabiljo, R., von Schantz, M., Smith, C.~P., and Dijk, D.-J. (2014).
\newblock Mistimed sleep disrupts circadian regulation of the human transcriptome.
\newblock {\em Proceedings of the National Academy of Sciences}, 111(6).

\bibitem[Beersma and Gordijn, 2007]{Beersma2007}
Beersma, D.~G. and Gordijn, M.~C. (2007).
\newblock Circadian control of the sleep–wake cycle.
\newblock {\em Physiology and Behavior}, 90(2–3):190–195.

\bibitem[Bingham et~al., 1982]{Bingham1982}
Bingham, C., Arbogast, B., Cornelissen-Guillaume, G.~G., Lee, J.~K., and Halberg, F. (1982).
\newblock Inferential statistical methods for estimating and comparing cosinor parameters.
\newblock {\em Chronobiologia}, 9:397--439.

\bibitem[Bloomfield, 2000]{Bloomfield2000}
Bloomfield, P. (2000).
\newblock {\em Fourier Analysis of Time Series: An Introduction}.
\newblock Wiley.

\bibitem[Boos and Stefanski, 2013]{Boos2013}
Boos, D.~D. and Stefanski, L.~A. (2013).
\newblock {\em Essential statistical inference}.
\newblock Springer Texts in Statistics. Springer, New York, NY, 2013 edition.

\bibitem[Boyar et~al., 1979]{BOYAR1979}
Boyar, R.~M., Witkin, M., Carruth, A., and Ramsey, J. (1979).
\newblock Circadian cortisol secretory rhythms in cushing’s disease.
\newblock {\em The Journal of Clinical Endocrinology and Metabolism}, 48(5):760–765.

\bibitem[Brantley et~al., 1987]{Brantley1987}
Brantley, P.~J., Waggoner, C.~D., Jones, G.~N., and Rappaport, N.~B. (1987).
\newblock A daily stress inventory: Development, reliability, and validity.
\newblock {\em Journal of Behavioral Medicine}, 10(1):61–73.

\bibitem[Chan et~al., 2017]{Chan2017}
Chan, S., Zhang, L., Rowbottom, L., McDonald, R., Bjarnason, G.~A., Tsao, M., Barnes, E., Danjoux, C., Popovic, M., Lam, H., DeAngelis, C., and Chow, E. (2017).
\newblock Effects of circadian rhythms and treatment times on the response of radiotherapy for painful bone metastases.
\newblock {\em Annals of Palliative Medicine}, 6(1):14–25.

\bibitem[Chauhan et~al., 2017]{Chauhan2017}
Chauhan, R., Chen, K.-F., Kent, B.~A., and Crowther, D.~C. (2017).
\newblock Central and peripheral circadian clocks and their role in alzheimer’s disease.
\newblock {\em Disease Models \& Mechanisms}, 10(10):1187–1199.

\bibitem[Cornelissen, 2014]{Cornelissen2014}
Cornelissen, G. (2014).
\newblock Cosinor-based rhythmometry.
\newblock {\em Theoretical Biology and Medical Modelling}, 11(1).

\bibitem[Crainiceanu and Goldsmith, 2010]{Crainiceanu2010}
Crainiceanu, C.~M. and Goldsmith, A.~J. (2010).
\newblock Bayesian functional data analysis using winbugs.
\newblock {\em Journal of Statistical Software}, 32(11).

\bibitem[Dallmann et~al., 2014]{Dallmann2014}
Dallmann, R., Brown, S.~A., and Gachon, F. (2014).
\newblock Chronopharmacology: New insights and therapeutic implications.
\newblock {\em Annual Review of Pharmacology and Toxicology}, 54(1):339–361.

\bibitem[Davidian and Giltiman, 1995]{Davidian1995}
Davidian, M. and Giltiman, D.~M. (1995).
\newblock {\em Nonlinear models for repeated measurement data}.
\newblock Chapman \& Hall/CRC Monographs on Statistics and Applied Probability. Chapman \& Hall/CRC, Philadelphia, PA.

\bibitem[Debono et~al., 2009]{Debono2009}
Debono, M., Ghobadi, C., Rostami-Hodjegan, A., Huatan, H., Campbell, M.~J., Newell-Price, J., Darzy, K., Merke, D.~P., Arlt, W., and Ross, R.~J. (2009).
\newblock Modified-release hydrocortisone to provide circadian cortisol profiles.
\newblock {\em The Journal of Clinical Endocrinology and Metabolism}, 94(5):1548–1554.

\bibitem[del Olmo et~al., 2022]{delolmo2022}
del Olmo, M., Sp\"{o}rl, F., Korge, S., J\"{u}rchott, K., Felten, M., Grudziecki, A., de~Zeeuw, J., Nowozin, C., Reuter, H., Blatt, T., Herzel, H., Kunz, D., Kramer, A., and Ananthasubramaniam, B. (2022).
\newblock Inter-layer and inter-subject variability of diurnal gene expression in human skin.
\newblock {\em {NAR} Genomics and Bioinformatics}, 4(4).

\bibitem[Dvoretzky et~al., 1956]{Dvoretzky1956}
Dvoretzky, A., Kiefer, J., and Wolfowitz, J. (1956).
\newblock Asymptotic minimax character of the sample distribution function and of the classical multinomial estimator.
\newblock {\em The Annals of Mathematical Statistics}, 27(3):642–669.

\bibitem[Federov, 1972]{Federov1972}
Federov, V.~V. (1972).
\newblock {\em Theory of optimal experiments}.
\newblock Springer, New York.

\bibitem[Fontana et~al., 2012]{Fontana2012}
Fontana, A., Copetti, M., Mazzoccoli, G., Kypraios, T., and Pellegrini, F. (2012).
\newblock A linear mixed model approach to compare the evolution of multiple biological rhythms.
\newblock {\em Statistics in Medicine}, 32(7):1125–1135.

\bibitem[Gorczyca, 2024]{Gorczycad2024}
Gorczyca, M.~T. (2024).
\newblock A mixed effects cosinor modelling framework for circadian gene expression.
\newblock {\em arXiv:2405.01450}, in revision.

\bibitem[Gorczyca et~al., 2024a]{Gorczycaa2024}
Gorczyca, M.~T., McDonald, T.~M., and Sefas, J.~D. (2024a).
\newblock A corrected score function framework for modelling circadian gene expression.
\newblock {\em Stat}, 13(4).

\bibitem[Gorczyca et~al., 2024b]{Gorczycab2024}
Gorczyca, M.~T., McDonald, T.~M., and Sefas, J.~D. (2024b).
\newblock A note on trigonometric regression in the presence of berkson-type measurement error.
\newblock {\em Statistica Neerlandica}, 78(4).

\bibitem[Gorczyca and Sefas, 2025]{Gorczyca2025}
Gorczyca, M.~T. and Sefas, J.~D. (2025).
\newblock Weighted trigonometric regression for suboptimal designs in circadian transcriptome studies.
\newblock {\em Statistics in Medicine}, 44(20-22).

\bibitem[Halberg et~al., 2013]{Halberg2013}
Halberg, F., Powell, D., Otsuka, K., Watanabe, Y., Beaty, L.~A., Rosch, P., Czaplicki, J., Hillman, D., Schwartzkopff, O., and Cornelissen, G. (2013).
\newblock Diagnosing vascular variability anomalies, not only mesor-hypertension.
\newblock {\em American Journal of Physiology-Heart and Circulatory Physiology}, 305(3):H279–H294.

\bibitem[Haus, 2009]{Haus2009}
Haus, E. (2009).
\newblock Chronobiology in oncology.
\newblock {\em International Journal of Radiation Oncology*Biology*Physics}, 73(1):3–5.

\bibitem[Heckert et~al., 2002]{Nist2002}
Heckert, N.~A., Filliben, J.~J., Croarkin, C.~M., Guthrie, B. H. W.~F., Tobias, P., and Prinz, J. (2002).
\newblock {\em Handbook 151: NIST/SEMATECH e-Handbook of Statistical Methods, NIST Interagency/Internal Report (NISTIR), National Institute of Standards and Technology}.
\newblock National Institute of Standards and Technology, Gaithersburg, MD.

\bibitem[Hedeker and Gibbons, 2006]{Hedeker2006}
Hedeker, D. and Gibbons, R.~D. (2006).
\newblock {\em Longitudinal Data Analysis}.
\newblock Wiley.

\bibitem[Hou et~al., 2021]{Hou2021}
Hou, R., Tomalin, L.~E., and Suárez-Fariñas, M. (2021).
\newblock cosinormixedeffects: an r package for mixed-effects cosinor models.
\newblock {\em BMC Bioinformatics}, 22(1).

\bibitem[Hsu et~al., 2015]{Hsu2015}
Hsu, P.-K., Pt{\'{a}}{\v{c}}ek, L.~J., and Fu, Y.-H. (2015).
\newblock Genetics of human sleep behavioral phenotypes.
\newblock In {\em Methods in Enzymology}, pages 309--324. Elsevier.

\bibitem[Hughes et~al., 2017]{Hughes2017}
Hughes, M.~E., Abruzzi, K.~C., Allada, R., Anafi, R., Arpat, A.~B., Asher, G., Baldi, P., de~Bekker, C., Bell-Pedersen, D., Blau, J., Brown, S., Ceriani, M.~F., Chen, Z., Chiu, J.~C., Cox, J., Crowell, A.~M., DeBruyne, J.~P., Dijk, D.-J., DiTacchio, L., Doyle, F.~J., Duffield, G.~E., Dunlap, J.~C., Eckel-Mahan, K., Esser, K.~A., FitzGerald, G.~A., Forger, D.~B., Francey, L.~J., Fu, Y.-H., Gachon, F., Gatfield, D., de~Goede, P., Golden, S.~S., Green, C., Harer, J., Harmer, S., Haspel, J., Hastings, M.~H., Herzel, H., Herzog, E.~D., Hoffmann, C., Hong, C., Hughey, J.~J., Hurley, J.~M., de~la Iglesia, H.~O., Johnson, C., Kay, S.~A., Koike, N., Kornacker, K., Kramer, A., Lamia, K., Leise, T., Lewis, S.~A., Li, J., Li, X., Liu, A.~C., Loros, J.~J., Martino, T.~A., Menet, J.~S., Merrow, M., Millar, A.~J., Mockler, T., Naef, F., Nagoshi, E., Nitabach, M.~N., Olmedo, M., Nusinow, D.~A., Ptáček, L.~J., Rand, D., Reddy, A.~B., Robles, M.~S., Roenneberg, T., Rosbash, M., Ruben, M.~D., Rund, S.~S., Sancar, A.,
  Sassone-Corsi, P., Sehgal, A., Sherrill-Mix, S., Skene, D.~J., Storch, K.-F., Takahashi, J.~S., Ueda, H.~R., Wang, H., Weitz, C., Westermark, P.~O., Wijnen, H., Xu, Y., Wu, G., Yoo, S.-H., Young, M., Zhang, E.~E., Zielinski, T., and Hogenesch, J.~B. (2017).
\newblock Guidelines for genome-scale analysis of biological rhythms.
\newblock {\em Journal of Biological Rhythms}, 32(5):380–393.

\bibitem[Hughes et~al., 2009]{Hughes2009}
Hughes, M.~E., DiTacchio, L., Hayes, K.~R., Vollmers, C., Pulivarthy, S., Baggs, J.~E., Panda, S., and Hogenesch, J.~B. (2009).
\newblock Harmonics of circadian gene transcription in mammals.
\newblock {\em PLoS Genetics}, 5(4):e1000442.

\bibitem[Kennaway, 2023]{Kennaway2023}
Kennaway, D.~J. (2023).
\newblock The dim light melatonin onset across ages, methodologies, and sex and its relationship with morningness/eveningness.
\newblock {\em {Sleep}}, 46(5).

\bibitem[Khodasevich et~al., 2021]{Khodasevich2021}
Khodasevich, D., Tsui, S., Keung, D., Skene, D.~J., Revell, V., and Martinez, M.~E. (2021).
\newblock Characterizing the modern light environment and its influence on circadian rhythms.
\newblock {\em Proceedings of the Royal Society B: Biological Sciences}, 288(1955):20210721.

\bibitem[Lee, 2010]{Lee2010}
Lee, A. (2010).
\newblock Circular data.
\newblock {\em WIREs Computational Statistics}, 2(4):477–486.

\bibitem[Legler et~al., 1982]{LEGLER1982}
Legler, M., Brandenberger, G., Hietter, B., Simeoni, M., and Reinhardt, B. (1982).
\newblock Diurnal cortisol peaks and their relationships to meals.
\newblock {\em The Journal of Clinical Endocrinology and Metabolism}, 55(4):757–761.

\bibitem[Lehmann and Romano, 2022]{Lehmann2022}
Lehmann, E. and Romano, J.~P. (2022).
\newblock {\em Testing statistical hypotheses}.
\newblock Springer International Publishing.

\bibitem[Li et~al., 2013]{Li2013}
Li, J.~Z., Bunney, B.~G., Meng, F., Hagenauer, M.~H., Walsh, D.~M., Vawter, M.~P., Evans, S.~J., Choudary, P.~V., Cartagena, P., Barchas, J.~D., Schatzberg, A.~F., Jones, E.~G., Myers, R.~M., Watson, S.~J., Akil, H., and Bunney, W.~E. (2013).
\newblock Circadian patterns of gene expression in the human brain and disruption in major depressive disorder.
\newblock {\em Proceedings of the National Academy of Sciences}, 110(24):9950–9955.

\bibitem[Liu et~al., 1987]{LIU1987}
Liu, J.~H., Kazer, R.~R., and Rasmussen, D.~D. (1987).
\newblock Characterization of the twenty-four hour secretion patterns of adrenocorticotropin and cortisol in normal women and patients with cushing’s disease.
\newblock {\em The Journal of Clinical Endocrinology and Metabolism}, 64(5):1027–1035.

\bibitem[Long et~al., 2016]{Long2016}
Long, J.~E., Drayson, M.~T., Taylor, A.~E., Toellner, K.~M., Lord, J.~M., and Phillips, A.~C. (2016).
\newblock Morning vaccination enhances antibody response over afternoon vaccination: A cluster-randomised trial.
\newblock {\em Vaccine}, 34(24):2679–2685.

\bibitem[Marcheva et~al., 2013]{Marcheva2013}
Marcheva, B., Ramsey, K.~M., Peek, C.~B., Affinati, A., Maury, E., and Bass, J. (2013).
\newblock {\em Circadian Clocks and Metabolism}, page 127–155.
\newblock Springer Berlin Heidelberg.

\bibitem[Mardia, 1976]{Mardia1976}
Mardia, K.~V. (1976).
\newblock Linear-circular correlation coefficients and rhythmometry.
\newblock {\em Biometrika}, 63(2):403.

\bibitem[Mardia and Jupp, 1999]{Mardia1999}
Mardia, K.~V. and Jupp, P.~E. (1999).
\newblock {\em Directional Statistics}.
\newblock Wiley.

\bibitem[Marron et~al., 2015]{Marron2015}
Marron, J.~S., Ramsay, J.~O., Sangalli, L.~M., and Srivastava, A. (2015).
\newblock Functional data analysis of amplitude and phase variation.
\newblock {\em Statistical Science}, 30(4).

\bibitem[McCulloch and Searle, 2000]{McCulloch2000}
McCulloch, C.~E. and Searle, S.~R. (2000).
\newblock {\em Generalized, Linear, and Mixed Models}.
\newblock Wiley.

\bibitem[Mistlberger and Skene, 2007]{Mistlberger2007}
Mistlberger, R.~E. and Skene, D.~J. (2007).
\newblock Nonphotic entrainment in humans?
\newblock {\em Journal of Biological Rhythms}, 20(4):339–352.

\bibitem[M\"{o}ller-Levet et~al., 2013]{MllerLevet2013}
M\"{o}ller-Levet, C.~S., Archer, S.~N., Bucca, G., Laing, E.~E., Slak, A., Kabiljo, R., Lo, J. C.~Y., Santhi, N., von Schantz, M., Smith, C.~P., and Dijk, D.-J. (2013).
\newblock Effects of insufficient sleep on circadian rhythmicity and expression amplitude of the human blood transcriptome.
\newblock {\em Proceedings of the National Academy of Sciences}, 110(12).

\bibitem[Montaigne et~al., 2018]{Montaigne2018}
Montaigne, D., Marechal, X., Modine, T., Coisne, A., Mouton, S., Fayad, G., Ninni, S., Klein, C., Ortmans, S., Seunes, C., Potelle, C., Berthier, A., Gheeraert, C., Piveteau, C., Deprez, R., Eeckhoute, J., Duez, H., Lacroix, D., Deprez, B., Jegou, B., Koussa, M., Edme, J.-L., Lefebvre, P., and Staels, B. (2018).
\newblock Daytime variation of perioperative myocardial injury in cardiac surgery and its prevention by rev-erba antagonism: a single-centre propensity-matched cohort study and a randomised study.
\newblock {\em The Lancet}, 391(10115):59–69.

\bibitem[Morris et~al., 2016]{Morris2016}
Morris, C.~J., Purvis, T.~E., Hu, K., and Scheer, F. A. J.~L. (2016).
\newblock Circadian misalignment increases cardiovascular disease risk factors in humans.
\newblock {\em Proceedings of the National Academy of Sciences}, 113(10).

\bibitem[Moškon, 2020]{Mokon2020}
Moškon, M. (2020).
\newblock Cosinorpy: a python package for cosinor-based rhythmometry.
\newblock {\em BMC Bioinformatics}, 21(1).

\bibitem[Phillips et~al., 2019]{Phillips2019}
Phillips, A. J.~K., Vidafar, P., Burns, A.~C., McGlashan, E.~M., Anderson, C., Rajaratnam, S. M.~W., Lockley, S.~W., and Cain, S.~W. (2019).
\newblock High sensitivity and interindividual variability in the response of the human circadian system to evening light.
\newblock {\em Proceedings of the National Academy of Sciences}, 116(24):12019–12024.

\bibitem[Pukelsheim, 2006]{Pukelsheim2006}
Pukelsheim, F. (2006).
\newblock {\em Optimal Design of Experiments}.
\newblock Springer Texts in Statistics. Society for Industrial and Applied Mathematics, Philadelphia, PA.

\bibitem[Robins et~al., 2000]{Robins2000}
Robins, J.~M., van~der Vaart, A., and Ventura, V. (2000).
\newblock Asymptotic distribution of p values in composite null models.
\newblock {\em Journal of the American Statistical Association}, 95(452):1143.

\bibitem[Rossi et~al., 2020]{Rossi2020}
Rossi, A., Da~Pozzo, E., Menicagli, D., Tremolanti, C., Priami, C., Sîrbu, A., Clifton, D.~A., Martini, C., and Morelli, D. (2020).
\newblock A public dataset of 24-h multi-levels psycho-physiological responses in young healthy adults.
\newblock {\em Data}, 5(4):91.

\bibitem[Sheiner and Beal, 1980]{Sheiner1980}
Sheiner, L.~B. and Beal, S.~L. (1980).
\newblock Evaluation of methods for estimating population pharmacokinetic parameters. i. michaelis-menten model: Routine clinical pharmacokinetic data.
\newblock {\em Journal of Pharmacokinetics and Biopharmaceutics}, 8(6):553–571.

\bibitem[Sollberger, 1962]{Sollberger1962}
Sollberger, A. (1962).
\newblock General properties of biological rhythms.
\newblock {\em Annals of the New York Academy of Sciences}, 98(4):757–774.

\bibitem[Srivastava and Klassen, 2016]{Srivastava2016}
Srivastava, A. and Klassen, E.~P. (2016).
\newblock {\em Functional and Shape Data Analysis}.
\newblock Springer New York.

\bibitem[Steimer et~al., 1984]{Steimer1984}
Steimer, J.-L., Mallet, A., Golmard, J.-L., and Boisvieux, J.-F. (1984).
\newblock Alternative approaches to estimation of population pharmacokinetic parameters: Comparison with the nonlinear mixed-effect model.
\newblock {\em Drug Metabolism Reviews}, 15(1–2):265–292.

\bibitem[Stimson et~al., 2014]{Stimson2014}
Stimson, R.~H., Mohd-Shukri, N.~A., Bolton, J.~L., Andrew, R., Reynolds, R.~M., and Walker, B.~R. (2014).
\newblock The postprandial rise in plasma cortisol in men is mediated by macronutrient-specific stimulation of adrenal and extra-adrenal cortisol production.
\newblock {\em The Journal of Clinical Endocrinology and Metabolism}, 99(1):160–168.

\bibitem[Thai et~al., 2013a]{Thaia2013}
Thai, H., Mentré, F., Holford, N.~H., Veyrat‐Follet, C., and Comets, E. (2013a).
\newblock A comparison of bootstrap approaches for estimating uncertainty of parameters in linear mixed‐effects models.
\newblock {\em Pharmaceutical Statistics}, 12(3):129–140.

\bibitem[Thai et~al., 2013b]{Thaib2013}
Thai, H.-T., Mentré, F., Holford, N. H.~G., Veyrat-Follet, C., and Comets, E. (2013b).
\newblock Evaluation of bootstrap methods for estimating uncertainty of parameters in nonlinear mixed-effects models: a simulation study in population pharmacokinetics.
\newblock {\em Journal of Pharmacokinetics and Pharmacodynamics}, 41(1):15–33.

\bibitem[Tong, 1976]{Tong1976}
Tong, Y.~L. (1976).
\newblock Parameter estimation in studying circadian rhythms.
\newblock {\em Biometrics}, 32(1):85.

\bibitem[Truong et~al., 2016]{Truong2016}
Truong, K.~K., Lam, M.~T., Grandner, M.~A., Sassoon, C.~S., and Malhotra, A. (2016).
\newblock Timing matters: Circadian rhythm in sepsis, obstructive lung disease, obstructive sleep apnea, and cancer.
\newblock {\em Annals of the American Thoracic Society}, 13(7):1144–1154.

\bibitem[Tsybakov, 2009]{Tsybakov2009}
Tsybakov, A.~B. (2009).
\newblock {\em Introduction to Nonparametric Estimation}.
\newblock Springer New York.

\bibitem[Walker et~al., 2020]{Walker2020}
Walker, W.~H., Walton, J.~C., DeVries, A.~C., and Nelson, R.~J. (2020).
\newblock Circadian rhythm disruption and mental health.
\newblock {\em Translational Psychiatry}, 10(1).

\bibitem[Wang and Brown, 1996]{Wang1996}
Wang, Y. and Brown, M.~B. (1996).
\newblock A flexible model for human circadian rhythms.
\newblock {\em Biometrics}, 52(2):588--596.

\bibitem[Wang et~al., 2003]{Wang2003}
Wang, Y., Ke, C., and Brown, M.~B. (2003).
\newblock Shape‐invariant modeling of circadian rhythms with random effects and smoothing spline anova decompositions.
\newblock {\em Biometrics}, 59(4):804–812.

\bibitem[Weaver and Branden, 1995]{Weaver1995}
Weaver, R.~J. and Branden, M.~N. (1995).
\newblock Nonlinear mixed effects methodology for rhythmic data.
\newblock {\em US Army Conference on Applied Statistics}, 18.

\bibitem[Young, 2023]{Young2023}
Young, M.~E. (2023).
\newblock The cardiac circadian clock.
\newblock {\em JACC: Basic to Translational Science}, 8(12):1613–1628.

\bibitem[Zong et~al., 2023]{Zong2023}
Zong, W., Seney, M.~L., Ketchesin, K.~D., Gorczyca, M.~T., Liu, A.~C., Esser, K.~A., Tseng, G.~C., McClung, C.~A., and Huo, Z. (2023).
\newblock Experimental design and power calculation in omics circadian rhythmicity detection using the cosinor model.
\newblock {\em Statistics in Medicine}, 42(18):3236--3258.

\end{thebibliography}

\end{document}